\newtheorem{thm}{Theorem}[section]
\newtheorem{lem}[thm]{Lemma}
\newtheorem{defn}[thm]{Definition}
\newtheorem{rem}[thm]{Remark}
\numberwithin{equation}{section}
\def\C{\mathbb{C}}
\def\E{\mathbb{E}}
\def\N{\mathbb{N}}
\def\R{\mathbb{R}}
\def\Z{\mathbb{Z}}
\def\HH{\mathcal{H}}
\def\SS{\mathcal{S}}
\def\QQ{\mathcal{Q}}
\def\JJ{\mathcal{J}}
\def\KK{\mathcal{K}}
\def\lan{\langle}
\def\ran{\rangle}
\def\ra{\rightarrow}
\def\pa{\partial}
\def\al{\alpha}
\def\ep{\epsilon}
\def\tha{\theta}
\def\la{\lambda}
\def\La{\Lambda}
\def\si{\sigma}
\def\om{\omega}
\def\de{\delta}
\def\ga{\gamma}
\def\tha{\theta}
\def\Th{\Theta}
\begin{document}

\nocite{*}

\title{An Improved Combes-Thomas Estimate of Magnetic Schr\"{o}dinger Operators}

\author{Zhongwei Shen\footnote{Email: zzs0004@auburn.edu}\\Department of Mathematics and Statistics\\Auburn University\\Auburn, AL 36849\\USA}

\date{}

\maketitle

\begin{abstract}
In the present paper, we prove an improved Combes-Thomas estimate, i.e., the Combes-Thomas estimate in trace-class norms, for magnetic Schr\"{o}dinger operators under general assumptions. In particular, we allow unbounded potentials. We also show that for any function in the Schwartz space on the reals the operator kernel decays, in trace-class norms, faster than any polynomial.\\
Keywords: magnetic Schr\"{o}dinger operator, Combes-Thomas estimate, trace ideal estimate, operator kernel estimate.\\
2010 Mathematics Subject Classification: Primary 81Q10, 47F05; Secondary 35P05.
\end{abstract}


\section{Introduction}

The present paper is concerned with the so-called Combes-Thomas estimate of the following Schr\"{o}dinger operator with magnetic field
\begin{equation}\label{schrodinger-op}
H_{\La}(A,V)=\frac{1}{2}(-i\nabla-A(x))^{2}+V(x)\quad\text{on}\quad\La,
\end{equation}
where $i=\sqrt{-1}$ is the imaginary unit, $\nabla=(\pa_{x_{1}},\pa_{x_{2}},\dots,\pa_{x_{d}})$ is the gradient, $A$ is the vector potential giving rise to the magnetic field $\nabla\times A$, $V$ is the electric potential and $\La\subset\R^{d}$ is the configuration space with dimension $d$. This operator is used to characterize a spinless particle subject to a scaler potential and a magnetic filed in non-relativistic quantum physics \cite{GP90,GP91,Th02}.

As it is known, the Combes-Thomas estimate plays an important role in the theory of Schr\"{o}dinger operators, magnetic Schr\"{o}dinger operators, classical wave operators, etc. in random media. It was invented by Combes and Thomas \cite{CT73} to study the asymptotic behavior of eigenfunctions for multi-particle Schr\"{o}dinger operators. Later, Fr\"{o}hlich and Spencer \cite{FS83} used it to study the localization for the multidimensional discrete Anderson model. Meanwhile, the Combes-Thomas estimate, as well as Wegner estimate \cite{We81} and Lifshitz tail \cite{Lif65}, became
important ingredients in multiscale analysis. Specifically, the initial scale estimate in multiscale analysis for localization near the bottom of the spectrum is successful because of the Combes-Thomas estimate. See \cite{Ai94,BK05,CH94,FK96,FK97,GK01,GK03,GK04,Kir08,K08,KK01,KK04,Si82,St01} and references therein for further applications. Moreover, a stronger version of the Combes-Thomas estimate, i.e., the estimate in trace-class norms, is also very useful. In \cite{CHK07} and \cite{Ki10}, such estimates have been applied to study the regularity of the integrated density of states, a concept of great physical significance \cite{Pa73}. See \cite{BCH97,Kl95} for other applications.

Since the pioneering work of Combes and Thomas \cite{CT73}, the
Combes-Thomas estimate in operator norm has been well studied (see
\cite{Ai94,FK96,FK97,KK01,Si82,St01} and reference therein). We
point out the work of Germinet and Klein \cite{GK03}. They proved a
Combes-Thomas estimate, in operator norm, with explicit bound of
general Schr\"{o}dinger operators including Schr\"{o}dinger operator, magnetic Schr\"{o}dinger operator, acoustic operator, Maxwell operator and so on.  For the Combes-Thomas estimate in trace-class norms, existing results are scattered through the literature (see e.g. \cite{BNSS06},\cite{CH94},\cite{CHK07},\cite{Kl95}) and most of them  were proven (for special purposes), more or less, under additional assumptions. For instance, Klopp proved in \cite{Kl95} the estimate for Schr\"{o}dinger operators with bounded potentials. Barbaroux, Combes and Hislop's result, proven in \cite{BCH97} with an open spectrum gap assumption, works for a broad class of magnetic Schr\"{o}dinger operators, but was only proven for infinite-volume operators. Therefore, it is expected to obtain unified results for both finite-volume and infinite-volume magnetic Schr\"{o}dinger operators under general assumptions.

The main goal of the current paper is to obtain the Combes-Thomas
estimate of \eqref{schrodinger-op} and the associated operator
kernel estimate in trace-class norms under general assumptions,
which allow unbounded potentials. We first prove an
improved Combes-Thomas estimate, i.e., the Combes-Thomas estimate in
trace-class norms, for the magnetic Schr\"{o}dinger operator
\eqref{schrodinger-op} under general assumptions. Based on the
improved Combes-Thomas estimate, we then show that for any function
in the Schwartz space on the reals the operator kernel decays, in
trace-class norms, faster than any polynomial.

To be more specific, we assume that the magnetic vector potential
$A\in\HH_{loc}(\R^{d})$ is $\R^{d}$-valued, the electric potential $V\in\KK_{\pm}(\R^{d})$ is real-valued and the dimension $d\geq2$. The notations $\HH_{loc}(\R^{d})$ and $\KK_{\pm}(\R^{d})$ for spaces are explained in Section \ref{standing-notations}. Let $\La\subset\R^{d}$
be an open set. We assume that $\La$ is bounded with sufficiently smooth boundary if it is not the whole space. The self-adjoint realization of $H_{\La}(A,V)$ on $L^{2}(\La)$ is still denoted by $H_{\La}(A,V)$. If $\La\neq\R^{d}$, then $H_{\La}(A,V)$ is nothing but the localized operator with homogeneous Dirichlet boundary on $\pa\La$. These self-adjoint operators are constructed via sesquilinear forms. In Section \ref{section-semi-trace}, we will recall the constructions done in \cite{BHL00}.

Our first purpose is to study the Combes-Thomas estimate in trace
class norms, i.e., the trace ideal estimate of the operators
\begin{equation*}
\chi_{\beta}(H_{\La}(A,V)-z)^{-n}\chi_{\ga}, \quad\beta,\ga\in\R^{d},
\end{equation*}
where $\chi_{\beta}$ is the characteristic function of the unit cube
centered at $\beta\in\R^{d}$ and $z\in\rho(H_{\La}(A,V))$, the resolvent set of $H_{\La}(A,V)$. More precisely, we want to obtain the exponential decay of $\|\chi_{\beta}(H_{\La}(A,V)-z)^{-n}\chi_{\ga}\|_{\JJ_{p}}$ in terms of $|\beta-\ga|$ for suitable $n$ and $p$, where $\|\cdot\|_{\JJ_p}$ is the $p$-th von Neumann-Schatten norm reviewed in Section \ref{standing-notations}. Following the definition in \cite{GK03}, the family of operators $\{\chi_{\beta}(H_{\La}(A,V)-z)^{-n}\chi_{\ga}\}_{\beta,\ga\in\R^{d}}$ is also called the operator kernel of the bounded operator $(H_{\La}(A,V)-z)^{-n}$. In general, if $f$ is a bounded Borel function on $\si(H_{\La}(A,V))$, the spectrum of $H_{\La}(A,V)$, then the family $\{\chi_{\beta}f(H_{\La}(A,V))\chi_{\ga}\}_{\beta,\ga\in\R^{d}}$ is called the operator kernel of the bounded linear operator $f(H_{\La}(A,V))$. Our first main result regarding the Combes-Thomas estimate is roughly stated as follows (see Theorem \ref{theorem-decay-estimate} and Theorem \ref{corollary-decay-estimate} for details).

\begin{thm}\label{main-theorem}
Let $A\in\HH_{loc}(\R^{d})$, $V\in\KK_{\pm}(\R^{d})$ and $\La\subset\R^{d}$ open. Suppose $p>\frac{d}{2n}$ with $n\in\N$ and $n\geq1$. For any $z\in\rho(H_{\La}(A,V))$, the resolvent set of $H_{\La}(A,V)$, there exist constants $C=C(p,z,n)>0$ and $a_{0}=a_{0}(z)>0$ such that
\begin{equation*}
\|\chi_{\beta}(H_{\La}(A,V)-z)^{-n}\chi_{\ga}\|_{\JJ_{p}}\leq
Ce^{-a_{0}|\beta-\ga|},\quad\forall\,\,\beta,\ga\in\R^{d}.
\end{equation*}
\end{thm}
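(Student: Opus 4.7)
The plan is to combine an operator-norm Combes--Thomas estimate (available in essentially this generality from \cite{GK03}) with a uniform Schatten-class bound on $\chi_\beta(H_\La(A,V)-z)^{-n}$, and then to interpolate in the Schatten scale to pass from operator norm to $\JJ_p$ with the exponential decay preserved.

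\textbf{Step 1: operator-norm Combes--Thomas.} For $a\in\R^d$ with $|a|$ small, I would work with a bounded smooth approximation of $e^{a\cdot x}$ and form the conjugate $H_a = e^{a\cdot x}H_\La(A,V)e^{-a\cdot x}$, which on the form core reads $H_\La(A,V) - ia\cdot(-i\nabla-A) - \frac{1}{2}|a|^2$. Form-boundedness of $(-i\nabla-A)$ relative to $H_\La(A,V)$ (with $V\in\KK_\pm$ supplying the Kato-type relative form bound) makes the perturbation small, so by a resolvent perturbation argument $z\in\rho(H_a)$ and $\|(H_a-z)^{-n}\|\leq C(z,n)$ for $|a|\leq a_0(z)$. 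Writing $(H_\La-z)^{-n}=e^{-a\cdot x}(H_a-z)^{-n}e^{a\cdot x}$ and choosing $a$ in the direction $\beta-\ga$ with $|a|=a_0$ then yields
\begin{equation*}
\|\chi_\beta(H_\La(A,V)-z)^{-n}\chi_\ga\|_{\mathrm{op}} \leq C(z,n)\,e^{-a_0|\beta-\ga|}.
\end{equation*}

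\textbf{Step 2: uniform Schatten bound.} I would show that for every $q>d/(2n)$ one has $\|\chi_\beta(H_\La(A,V)-z)^{-n}\|_{\JJ_q}\leq C_q(z)$, uniformly in $\beta$. After shifting $z$ below the spectrum, represent $(H_\La-z)^{-n}$ as $\frac{1}{(n-1)!}\int_0^\infty t^{n-1}e^{tz}e^{-tH_\La}\,dt$. The diamagnetic inequality together with Kato-class semigroup estimates gives Gaussian-type pointwise control on the kernel of $e^{-tH_\La}$; computing $\|\chi_\beta(H_\La-z)^{-n}\|_{\JJ_2}^2$ explicitly from this kernel, and then using Schatten--$L^p$ techniques (or, for $p\geq 2$, directly a Birman--Solomyak / Avron--Herbst--Simon-type bound for $f(x)g(-i\nabla-A)$), reproduces the threshold $q>d/(2n)$ as the standard scaling of $\chi_\beta(-\Delta+1)^{-n}$ in the Schatten classes.

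\textbf{Step 3: interpolation and the main obstacle.} Log-convexity of Schatten norms gives $\|T\|_{\JJ_p}\leq\|T\|_{\mathrm{op}}^{1-q/p}\|T\|_{\JJ_q}^{q/p}$ for $q<p$. Applied to $T=\chi_\beta(H_\La(A,V)-z)^{-n}\chi_\ga$ with any $q\in(d/(2n),p)$ (possible since $p>d/(2n)$), Steps 1 and 2 deliver
\begin{equation*}
\|\chi_\beta(H_\La(A,V)-z)^{-n}\chi_\ga\|_{\JJ_p} \leq C\,e^{-a_0'|\beta-\ga|}, \qquad a_0' := a_0(1-q/p)>0,
\end{equation*}
which is the stated inequality. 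The hardest step is Step 2 under the full Kato-class assumption on $V$: for unbounded $V\in\KK_\pm$ the semigroup-kernel estimate must be produced with constants uniform in the localization cube $\beta$, coupling the diamagnetic inequality with careful control of the negative part of $V$. Step 1 is routine bookkeeping once the form framework for $H_\La(A,V)$ (recalled later in the paper) is in place, and Step 3 is purely abstract interpolation; the real analytic content therefore sits in the combined diamagnetic/Kato-class Schatten bound of Step 2.
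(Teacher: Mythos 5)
Your architecture (operator-norm Combes--Thomas, a $\beta$-uniform Schatten bound, then log-convexity of the Schatten scale) is genuinely different from the paper's, and Steps 1 and 3 are sound: the conjugation/perturbation argument in Step 1 is essentially the paper's Lemmas on $H^{a}_{\La}(A,V)$, and the interpolation inequality $\|T\|_{\JJ_p}\leq\|T\|^{1-q/p}\|T\|_{\JJ_q}^{q/p}$ is correct and does preserve exponential decay at the reduced rate $a_0(1-q/p)$. The problem is Step 2, and it is not merely technical. Your interpolation forces you to produce a uniform bound on $\|\chi_{\beta}(H_{\La}(A,V)-z)^{-n}\|_{\JJ_q}$ for some $q$ \emph{strictly below} $p$, hence below $2$ whenever $p\leq 2$ --- a nonempty part of the theorem's range exactly when $d<4n$ (e.g.\ $d=2,3$ with $n=1$). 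For $q\geq 2$ the bound follows from a Dunford--Pettis/Hilbert--Schmidt kernel argument (this is Theorem \ref{theorem-trace-ideal-estimate} of the paper, which is restricted to $p\geq2$ for precisely this reason). For $q<2$, the techniques you invoke do not deliver it: computing the $\JJ_2$ norm of the kernel says nothing about $\JJ_q$ with $q<2$, and the Birman--Solomjak/Avron--Herbst--Simon route to sub-Hilbert--Schmidt classes rests on the translation invariance of the free Laplacian, which is destroyed by the magnetic vector potential. The paper explicitly flags this obstruction at the end of Section \ref{section-semi-trace} as the reason its trace-ideal estimate stops at $p=2$. So as written your proof covers only $p>2$.

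The paper avoids ever needing a Schatten exponent below $2$ by two structural devices. First, for $n=1$ it factors the conjugated resolvent as $(H^{a}_{\La}(A,V)-z)^{-1}=(H_{\La}(A,V)-\la_{0})^{-1/2}(U+V)^{-1}(H_{\La}(A,V)-\la_{0})^{-1/2}$ with $(U+V)^{-1}$ bounded, so that only $\chi_{\beta}(H_{\La}(A,V)-\la_{0})^{-1/2}\in\JJ_{2p}$ is required, and $2p>d\geq2$ keeps the exponent in the accessible range; H\"older for trace ideals then gives the $\JJ_p$ bound directly, with no interpolation against the operator norm (Theorem \ref{theorem-decay-estimate}). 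Second, for $n\geq2$ it writes $(H_{\La}(A,V)-z)^{-n}$ as a sum over intermediate unit cubes of products of single resolvents, applies the $n=1$ result in $\JJ_{pn}$ (again $\geq\JJ_2$ territory since $2pn>d$), and resums the exponential weights (Theorem \ref{corollary-decay-estimate}). To repair your argument in the regime $p\leq2$ you would need either a proof of the magnetic $\JJ_q$ bound for $q<2$ (not available here) or a factorization of the type the paper uses --- at which point the interpolation step becomes unnecessary.
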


In this paper, we also study operator kernel estimate in trace-class norms. That is, we prove the polynomial decay of the operators
\begin{equation*}
\chi_{\beta}f(H_{\La}(A,V))\chi_{\ga},\quad\beta,\ga\in\R^{d}
\end{equation*}
in trace-class norms in terms of $|\beta-\ga|$, where $f$ belongs to the Schwartz space $\SS(\R)$ reviewed in Section \ref{standing-notations}. The main result related to operator kernel estimate is roughly stated as follows (see Theorem \ref{theorem-poly-decay-of-operator-kernel} for details).

\begin{thm}\label{main-theorem-poly}
Let $A\in\HH_{loc}(\R^{d})$, $V\in\KK_{\pm}(\R^{d})$ and $\La\subset\R^{d}$ open. Suppose $p>\frac{d}{2}$. Then, for any $f\in\SS(\R)$ and any $k\in\N$, there exists a constant $C=C(p,k,f)>0$ such that
\begin{equation}\label{main-estimate-polynomial-decay}
\|\chi_{\beta}f(H_{\La}(A,V))\chi_{\ga}\|_{\JJ_{p}}\leq C|\beta-\ga|^{-k},\quad\forall\,\,\beta,\ga\in\R^{d}.
\end{equation}
\end{thm}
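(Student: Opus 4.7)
The plan is to represent $f(H_\La(A,V))$ by the Helffer-Sj\"ostrand formula as an integral of first-order resolvents, and then apply Theorem \ref{main-theorem} with $n=1$ (for which the hypothesis $p>\frac{d}{2}$ is exactly what is needed) to each resolvent kernel element $\chi_\be(z-H_\La(A,V))^{-1}\chi_\ga$.

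First I would fix an almost analytic extension $\wt f:\C\to\C$ of $f\in\SS(\R)$: a smooth function agreeing with $f$ on $\R$ such that for every $N,m\in\N$ there is a constant $c_{N,m}$ with
\begin{equation*}
|\ol\pa\wt f(z)|\leq c_{N,m}\lan\text{Re}(z)\ran^{-m}|\text{Im}(z)|^{N},\qquad z\in\C.
\end{equation*}
Such an extension is built from a truncated Borel series in $\text{Im}(z)$ together with a smooth cutoff, the rapid decay of $f$ and all its derivatives allowing $m$ to be arbitrarily large. The Helffer-Sj\"ostrand representation
\begin{equation*}
f(H_\La(A,V))=\frac{1}{\pi}\int_\C\ol\pa\wt f(z)\,(z-H_\La(A,V))^{-1}\,dx\,dy
\end{equation*}
then holds as a Bochner integral, convergence being ensured by the combination of $\lan\text{Re}(z)\ran^{-m}$ decay and $\|(z-H_\La(A,V))^{-1}\|\leq|\text{Im}(z)|^{-1}$ for $\text{Im}(z)\neq 0$. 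Sandwiching with $\chi_\be$ and $\chi_\ga$ and pulling the $\JJ_p$-norm inside yields
\begin{equation*}
\|\chi_\be f(H_\La(A,V))\chi_\ga\|_{\JJ_p}\leq\frac{1}{\pi}\int_\C|\ol\pa\wt f(z)|\,\|\chi_\be(z-H_\La(A,V))^{-1}\chi_\ga\|_{\JJ_p}\,dx\,dy.
\end{equation*}

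Next I would apply Theorem \ref{main-theorem} with $n=1$ to the integrand, using the quantitative form (Theorem \ref{theorem-decay-estimate}) to track the $z$-dependence. The expected behavior, coming from the usual deterioration of Combes-Thomas bounds near the spectrum, is that for $\text{Im}(z)\neq 0$ one has $a_{0}(z)\geq c_{1}\min(1,|\text{Im}(z)|)$ and $C(p,z,1)\leq c_{2}|\text{Im}(z)|^{-\al}$ for some $\al\geq 0$. Combined with the pointwise bound on $\ol\pa\wt f$, the integrand is then dominated by
\begin{equation*}
c\,\lan\text{Re}(z)\ran^{-m}\,|\text{Im}(z)|^{N-\al}\exp\!\bigl(-c_{1}\min(1,|\text{Im}(z)|)\,|\be-\ga|\bigr).
\end{equation*}

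Finally I would split the integral at $|\text{Im}(z)|=1$. On $|\text{Im}(z)|\geq 1$ the factor $e^{-c_{1}|\be-\ga|}$ dominates and beats any polynomial in $|\be-\ga|^{-1}$. On $|\text{Im}(z)|<1$ the substitution $u=|\text{Im}(z)|\,|\be-\ga|$ extracts a factor $|\be-\ga|^{-(N-\al+1)}$, while the remaining $u$-integral and the $\text{Re}(z)$-integral (weighted by $\lan\text{Re}(z)\ran^{-m}$ with $m\geq 2$) are both finite. Choosing $N$ so that $N-\al+1\geq k$ then delivers the bound \eqref{main-estimate-polynomial-decay}. The principal obstacle will be extracting from the proof of Theorem \ref{theorem-decay-estimate} the polynomial-in-$|\text{Im}(z)|^{-1}$ control of $C(p,z,1)$ together with the at-least-linear lower bound on $a_{0}(z)$ in $|\text{Im}(z)|$; without these quantitative inputs, no choice of $N$ could compensate, and the optimization over $\text{Im}(z)$ would fail to produce any polynomial decay.
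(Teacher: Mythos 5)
Your skeleton is the same as the paper's (Helffer--Sj\"ostrand representation, a Combes--Thomas bound with the $z$-dependence of the constants tracked explicitly, then trading the exponential in $|\beta-\ga|$ for a polynomial), but the step you yourself label the ``principal obstacle'' is precisely where the paper does the real work, and you have not supplied it. Theorem \ref{theorem-decay-estimate}, as proved, gives a prefactor $C_{*}$ (see \eqref{bound-of-inverse-of-u+v-const}) and an admissible window for $a_{0}$ (via \eqref{condition-on-a-1}--\eqref{condition-on-a-2}) that depend on $z$ only through $c_{z,\la_{0}}=\big\|\frac{\la-\la_{0}}{\la-z}\big\|_{L^{\infty}(\si(H_{\La}(A,V)))}$, and extracting usable bounds from this is not automatic: one must (i) strengthen the conditions on $s$ and $a_{0}$ (the paper's \eqref{condition-on-s-additional} and \eqref{condition-on-a-additional}) so that $C_{*}$ collapses to a multiple of $c_{z,\la_{0}}$ and so that $a_{0}$ has an explicit nonzero lower bound, and then (ii) estimate $c_{z,\la_{0}}$ and that lower bound separately on the two regions $U$ and $V$ appearing in \eqref{app-an-inequality}, which yields \eqref{expected-estimate-simpliest-form}. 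This is the bulk of Section \ref{section-operator-kernel-estimate}; without it your argument is a plan rather than a proof.

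Moreover, the quantitative behavior you posit, namely $a_{0}(z)\geq c_{1}\min(1,|\Im z|)$ and $C(p,z,1)\leq c_{2}|\Im z|^{-\al}$ \emph{uniformly in} $\Re z$, is stronger than what Theorem \ref{theorem-decay-estimate} delivers and is not what one should expect: on the support of $\bar{\pa}\tilde{f}$ one has $|\Im z|\leq 2\lan\Re z\ran$, and there $c_{z,\la_{0}}$ grows like $\lan\Re z\ran/|\Im z|$, so the admissible decay rate from \eqref{condition-on-a-2} degrades like $|\Im z|/\lan\Re z\ran$ and the prefactor grows like $\lan\Re z\ran/|\Im z|$ --- exactly what \eqref{expected-estimate-simpliest-form} records. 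Your scheme does survive this correction, because the $\lan\Re z\ran^{-m}$ decay of your almost-analytic extension (available for any $m$ since $f\in\SS(\R)$) absorbs the extra powers of $\lan\Re z\ran$, and your split at $|\Im z|=1$ with the substitution $u=|\Im z|\,|\beta-\ga|$ then plays the same role as the paper's elementary inequality $e^{-t}\leq e^{-k}k^{k}t^{-k}$. But as written, the $\Re z$-uniform bounds are unjustified, and the derivation of the corrected $z$-dependent Combes--Thomas input from Theorem \ref{theorem-decay-estimate} is the missing core of the argument.
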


Estimates like \eqref{main-estimate-polynomial-decay}, with $A$
being $\Z^{d}$-period, $V$ being bounded and $f$ being a smooth function with compact support, have been used, as a technical tool, to study the regularity of integrated density of states. For instance, Combes, Hislop and Klopp \cite[Eq.(2.30)]{CHK07} utilize the polynomial decay of any order to prove the convergence of some series, which leads to an expected estimate. It should be pointed out that Germinet and Klein proved in \cite{GK03} for slowly decreasing smooth functions (see Appendix \ref{app-Helffer-Sjostrand-formula} for the definition) the operator kernels for general Schr\"{o}dinger operators decay, in the operator norm, faster than any polynomial. Their result was then used as a crucial ingredient in their following paper \cite{GK04}. Later,  sub-exponential decay for functions in Gevrey classes and exponential decay for real analytic functions were obtained in \cite{BGK04} by Bouclet, Germinet and Klein.

The rest of the paper is organized as follows. In Section \ref{standing-notations}, we collect the notations used in this paper. In Section \ref{section-semi-trace}, we study trace ideal estimates of operators of the form $gf(H_{\Lambda}(A,V))$ for suitable $f$ and $g$. Such estimates, with $g$ being characteristic functions of unit cubes and $f$ being integer powers of the resolvent of $H_{\Lambda}(A,V)$, are used as technical tools in the proof of Theorem \ref{main-theorem}. Section
\ref{sec-combes-thomas-estimate} is devoted to the study of the Combes-Thomas estimate in trace-class norms. That is, we prove Theorem \ref{main-theorem}. In Section \ref{section-operator-kernel-estimate}, we study the operator kernel estimate in trace-class norms and prove Theorem
\ref{main-theorem-poly}.


\section{Standing Notations}\label{standing-notations}

In this section, we collect the notations which will be used in the sequel.

The configuration space $\La$ is an open set of $\R^{d}$. We assume that $\La$ is bounded with sufficiently smooth boundary unless it is the whole space. We also assume that the dimension $d\geq2$ since, by gauge transform, vector potentials in one spatial dimension are of no physical interest.

We denote by $\chi_{\beta}$ the characteristic function of the unit cube centered at $\beta\in\R^{d}$. If the configuration space in question is $\La(\neq\R^{d})$, then $\chi_{\beta}$ should be understood as $\chi_{\beta}\chi_{\La}$, where $\chi_{\La}$ is the characteristic function of $\La$. Generally speaking, if a function is defined on $\La$, then we consider it as a function defined on $\R^{d}$ by zero extension on $\R^{d}\backslash\La$.

The Banach space of $p$-th Lebesgue integrable functions on $\La$ is
\begin{equation*}
L^{p}(\La)=\big\{\phi\,\,\text{measurable on}\,\,\La\big|\|\phi\|_{p}<\infty\big\},
\end{equation*}
where
$\|\phi\|_{p}=\big(\int_{\La}|\phi(x)|^{p}dx\big)^{\frac{1}{p}}$ if $p\in[1,\infty)$ and $\|\phi\|_{\infty}=\text{ess sup}_{x\in\La}|\phi(x)|$. When $p=2$, $L^{2}(\La)$ is a Hilbert space with inner product
\begin{equation*}
\lan\phi,\psi\ran=\int_{\La}\bar{\phi}(x)\psi(x)dx.
\end{equation*}
Moreover, $\|\phi\|_{2}=\sqrt{\lan\phi,\phi\ran}$. As a convention, we simply write $\|\cdot\|_{2}$ as $\|\cdot\|$.

If $L:L^{p}(\La)\ra L^{q}(\La)$ is a bounded linear operator, the operator norm is defined by
\begin{equation*}
\|L\|_{p,q}:=\sup_{\|\phi\|_{p}=1}\|L\phi\|_{q}.
\end{equation*}
If $p=q=2$, we simply write $\|\cdot\|_{2,2}$ as $\|\cdot\|$.

Although we use the same notation $\|\cdot\|$ for both the norm of a function in $L^{2}(\La)$ and the norm of an operator on $L^{2}(\La)$, it should not give rise to any confusion. Similarly, we do not distinguish the notations for norms corresponding to different configuration spaces.

For any $p\in[1,\infty)$, the Banach space $\JJ_{p}$ (also an operator ideal) is defined by
\begin{equation*}
\JJ_{p}=\big\{C:L^{2}(\La)\ra L^{2}(\La)\,\,\text{linear and bounded}\big|\|C\|_{\JJ_{p}}<\infty\big\},
\end{equation*}
where $\|C\|_{\JJ_{p}}=\big(\text{Tr}|C|^{p}\big)^{\frac{1}{p}}<\infty$ is the $p$-th von Neumann-Schatten norm of $C$. See \cite{RS80,Si05} for more details. We here single out the space $\JJ_{2}$ (also called the space of Hilbert-Schmidt operators) for the following important property (see \cite[Theorem VI.23]{RS75}): a bounded linear operator $K$ on $L^{2}(\La)$ belongs to
$\mathcal{J}_{2}$ if and only if it is an integral operator with some integral kernel $k(x,y)$ being in
$L^{2}(\La\times\La)$. In this case, $\|K\|_{\JJ_{2}}=\big(\int_{\La\times\La}|k(x,y)|^{2}dxdy\big)^{\frac{1}{2}}$. We will use this property in Section \ref{section-semi-trace}.

Let $g(x)=-\ln|x|$ if $d=2$ and $g(x)=|x|^{2-d}$ if $d\geq3$. We say a function $V\in\KK(\R^{d})$, the Kato class, if
\begin{equation*}
\lim_{\ep\downarrow0}\sup_{x\in\R^{d}}\int_{|x-y|\leq\ep}g(x-y)|V(y)|dy=0.
\end{equation*}
A function $V$ is said to be in the local Kato class $\KK_{loc}(\R^{d})$ if $V\chi_{K}\in\KK(\R^{d})$ for all compact set $K\subset\R^{d}$, where $\chi_{K}$ is the characteristic function of $K$. We refer to \cite{Sz98} for equivalent definitions from the viewpoint of probability theory.

Let $V$ defined on $\R^{d}$ be real-valued. We say that $V$ is Kato decomposable, in symbols $V\in\KK_{\pm}(\R^{d})$, if the positive part $V_{+}$ is in $\KK_{loc}(\R^{d})$ and the negative part $V_{-}$ is in $\KK(\R^{d})$.

A $\C^{d}$-valued function $A$ is said to be in the class $\HH(\R^{d})$ if its squared norm $A\cdot A$ and its divergence $\nabla\cdot A$, considered as a distribution on $C_{0}^{\infty}(\R^{d})$, are both in the Kato class $\KK(\R^{d})$. It is said to be in the class $\HH_{loc}(\R^{d})$ if both $A\cdot A$ and $\nabla\cdot A$ are in the local Kato class $\KK_{loc}(\R^{d})$. We refer the reader to \cite{AS82,BHL00,CZ95,CFKS87} for further remarks about these spaces.

The Schwartz space $\SS(\R)$ consists of those $C^{\infty}(\R)$ functions which, together with all their derivatives, vanish at infinity faster than any power of $|x|$. More precisely, for any $N\in\Z$, $N\geq0$ and any $r\in\Z$, $r\geq0$, we define for $f\in C^{\infty}(\R)$
\begin{equation*}
\|f\|_{N,r}=\sup_{x\in\R}(1+|x|)^{N}|f^{(r)}(x)|,
\end{equation*}
then
\begin{equation*}
\SS(\R)=\{f\in C^{\infty}(\R)|\|f\|_{N,r}<\infty\,\,\text{for all}\,\,N,r\}.
\end{equation*}
See Folland \cite{Fo99} for more discussions about the Schwartz space.


\section{Semigroup and Trace Ideal Estimates}\label{section-semi-trace}

In this section, as a preparation for proving Theorem \ref{main-theorem} and Theorem \ref{main-theorem-poly}, we study estimates of operators of the form $gf(H_{\La}(A,V))$ in trace-class norms for suitable $f$ and $g$.

The self-adjoint realization of $H_{\La}(A,V)$ on $L^{2}(\La)$, still denoted by $H_{\La}(A,V)$, is defined via sesquilinear forms as follows (see \cite{BHL00}): the sesquilinear form
\begin{equation*}
\begin{split}
h_{\La}^{A,V_{+}}:&C_{0}^{\infty}(\La)\times C_{0}^{\infty}(\La)\ra\C,\\
&(\psi,\phi)\mapsto h_{\La}^{A,V_{+}}(\psi,\phi):=\big\lan\sqrt{V_{+}}\psi,\sqrt{V_{+}}\phi\big\ran+\frac{1}{2}\sum_{j=1}^{d}\big\lan(-i\pa_{j}-A_{j})\psi,(-i\pa_{j}-A_{j})\phi\big\ran
\end{split}
\end{equation*}
is densely defined in $L^{2}(\La)$, nonnegative and closable, where $\lan\cdot,\cdot\ran$ denotes the usual inner product on $L^{2}(\La)$. Its closure is still denoted by $h_{\La}^{A,V_{+}}$ with form domain $\QQ(h_{\La}^{A,V_{+}})$, which is the completion of $C_{0}^{\infty}(\La)$ with respect to the norm
\begin{equation*}
\|\phi\|_{h_{\La}^{A,V_{+}}}=\sqrt{\|\phi\|^{2}+h_{\La}^{A,V_{+}}(\phi,\phi)},
\end{equation*}
where $\|\cdot\|=\|\cdot\|_{2}$ is the norm on $L^{2}(\La)$ associated with $\lan\cdot,\cdot\ran$ as mentioned in Section \ref{standing-notations}. We denote by $H_{\La}(A,V_{+})$ the associated self-adjoint operator. Since $V_{-}\in\KK(\R^{d})$ is infinitesimally form-bounded with respective to $H_{\La}(A,0)(\leq H_{\La}(A,V_{+}))$, i.e., there exist $\Th_{1}\in(0,1)$ (can be taken to be arbitrarily small) and $\Th_{2}\geq0$ depending on $\Th_{1}$ so that
\begin{equation}\label{relative-form-bound}
\lan\phi,V_{-}\phi\ran\leq\Th_{1}h_{\La}^{A,0}(\phi,\phi)+\Th_{2}\|\phi\|^{2},\quad\phi\in\QQ(h_{\La}^{A,0}),
\end{equation}
KLMN theorem (see \cite[Theorem X.17]{RS75}) yields that, with $\QQ(h_{\La}^{A,V})=\QQ(h_{\La}^{A,V_{+}})$, the sesquilinear form
\begin{equation}\label{sesquilinear-form}
\begin{split}
h_{\La}^{A,V}:&\QQ(h_{\La}^{A,V})\times\QQ(h_{\La}^{A,V})\ra\C,\\
&(\psi,\phi)\mapsto
h_{\La}^{A,V}(\psi,\phi):=h_{\La}^{A,V_{+}}(\psi,\phi)-\big\lan\sqrt{V_{-}}\psi,\sqrt{V_{-}}\phi\big\ran
\end{split}
\end{equation}
is closed and bounded from below and has $C_{0}^{\infty}(\La)$ as a form core. The associated semi-bounded self-adjoint operator is denoted by $H_{\La}(A,V)$.

The main result of this section is stated as follows. Let
\begin{equation}\label{E-0}
E_{0}=\text{the infimum of the}\,\,L^{2}(\R^{d})\text{-spectrum of}\,\,H_{\R^{d}}(0,V).
\end{equation}

\begin{thm}\label{theorem-trace-ideal-estimate}
Let $A\in\HH_{loc}(\R^{d})$, $V\in\KK_{\pm}(\R^{d})$ and $\La\subset\R^{d}$ open. Suppose $p\geq2$. Let $f$ be a Borel function satisfying
\begin{equation}\label{assumption-f}
|f(\la)|\leq C(1+|\la|)^{-\al},\quad\la\in\si(H_{\La}(A,V)),
\end{equation}
for $\al>\frac{d}{2p}$. Then $gf(H_{\La}(A,V))$ is in $\JJ_{p}$ with
\begin{equation*}
\|gf(H_{\La}(A,V))\|_{\JJ_{p}}\leq
C_{\al,p,\la_{0}}\|g\|_{p}\|(H_{\La}(A,V)-\la_{0})^{\al}f(H_{\La}(A,V))\|
\end{equation*}
whenever $g\in L^{p}(\La)$, where $\la_{0}<E_{0}$ and $C_{\al,p,\la_{0}}>0$ depends only on $\al$, $p$ and $\la_{0}$.
\end{thm}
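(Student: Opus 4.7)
The plan is to split $gf(H_{\La}(A,V))$ into a Schatten-class factor times an operator-norm-bounded factor by inserting a power of the resolvent. Since $\la_{0}<E_{0}\leq\inf\si(H_{\La}(A,V))$ (the latter following from the diamagnetic inequality together with Dirichlet-domain monotonicity of the form), the functional calculus gives
\[gf(H_{\La}(A,V))=g\,(H_{\La}(A,V)-\la_{0})^{-\al}\cdot(H_{\La}(A,V)-\la_{0})^{\al}f(H_{\La}(A,V)).\]
The second factor is bounded because $(\la-\la_{0})^{\al}(1+|\la|)^{-\al}$ is bounded on $[E_{0},\infty)$, so by the ideal property of $\JJ_{p}$ the problem reduces to proving
\[\|g(H_{\La}(A,V)-\la_{0})^{-\al}\|_{\JJ_{p}}\leq C_{\al,p,\la_{0}}\|g\|_{p}.\]

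To control this resolvent power, the plan is to use the Laplace representation
\[(H_{\La}(A,V)-\la_{0})^{-\al}=\frac{1}{\Gamma(\al)}\int_{0}^{\infty}t^{\al-1}e^{t\la_{0}}e^{-tH_{\La}(A,V)}\,dt,\]
which reduces the task to bounding $\|ge^{-tH_{\La}(A,V)}\|_{\JJ_{p}}$. The first step is to invoke the Feynman-Kac formula together with the diamagnetic inequality and Khasminskii's lemma applied to $V_{-}\in\KK(\R^{d})$ to obtain the pointwise Gaussian bound
\[|e^{-tH_{\La}(A,V)}(x,y)|\leq Ce^{bt}(2\pi t)^{-d/2}e^{-|x-y|^{2}/(2t)}.\]
From this, the case $p=2$ is immediate from the Hilbert-Schmidt kernel characterization recalled in Section \ref{standing-notations}: the double integral $\int|g(x)|^{2}\int|e^{-tH_{\La}(A,V)}(x,y)|^{2}\,dy\,dx$ is dominated by $C't^{-d/2}e^{2bt}\|g\|_{2}^{2}$. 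The endpoint $p=\infty$ follows trivially from $\|ge^{-tH_{\La}(A,V)}\|\leq\|g\|_{\infty}\|e^{-tH_{\La}(A,V)}\|\leq\|g\|_{\infty}$.

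Interpolating via Stein's complex interpolation for Schatten ideals, applied to the linear map $g\mapsto ge^{-tH_{\La}(A,V)}$, yields $\|ge^{-tH_{\La}(A,V)}\|_{\JJ_{p}}\leq Ct^{-d/(2p)}e^{bt}\|g\|_{p}$ for all $p\in[2,\infty]$. Substituting into the Laplace integral and splitting into $(0,1]$ and $(1,\infty)$ completes the estimate: on $(0,1]$ the integrand is bounded by $Ct^{\al-1-d/(2p)}\|g\|_{p}$, integrable since $\al>d/(2p)$; on $(1,\infty)$ use instead the submultiplicative bound $\|ge^{-tH_{\La}(A,V)}\|_{\JJ_{p}}\leq\|e^{-(t-1)H_{\La}(A,V)}\|\cdot\|ge^{-H_{\La}(A,V)}\|_{\JJ_{p}}$ together with the operator-norm decay $\|e^{-sH_{\La}(A,V)}\|\leq e^{-sE_{0}}$, so the integrand is controlled by $Ct^{\al-1}e^{-t(E_{0}-\la_{0})}\|g\|_{p}$, integrable since $\la_{0}<E_{0}$.

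The main technical obstacle is justifying the pointwise heat-kernel bound at the stated level of generality: one must verify that the form-defined self-adjoint operator $H_{\La}(A,V)$ constructed in \cite{BHL00} indeed admits a semigroup with a Feynman-Kac representation, particularly when $\La\ne\R^{d}$, where the Dirichlet boundary condition must be realized as killing of the Brownian path at $\pa\La$. Once this pointwise bound is in hand, the Schatten-class interpolation and the final Laplace-transform integration are essentially routine.
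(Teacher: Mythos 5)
Your argument is correct and yields the stated bound, but it is organized differently from the paper's. The paper interpolates once, at the level of the final operator $gf(H_{\La}(A,V))$, between the Hilbert--Schmidt case $p=2$ and the trivial operator-norm case $p=\infty$ (via \cite[Theorem 2.9]{Si05}), thereby reducing everything to $p=2$; there it invokes Lemma \ref{coro-1} (boundedness of $f(H_{\La}(A,V))$ from $L^{2}$ to $L^{\infty}$, itself obtained from the Laplace representation of $(H_{\La}(A,V)-z)^{-\al}$ and the smoothing bound \eqref{smoothing-of-semigroup}) together with the Dunford--Pettis lemma to produce a kernel $k$ with $\sup_{x}\int_{\La}|k(x,y)|^{2}\,dy=\|f(H_{\La}(A,V))\|_{2,\infty}^{2}$, whence $gf(H_{\La}(A,V))$ is Hilbert--Schmidt with the right constant. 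You instead factor $gf(H)=g(H-\la_{0})^{-\al}\cdot(H-\la_{0})^{\al}f(H)$ and push both the Laplace transform and the complex interpolation down to the semigroup level $g\mapsto ge^{-tH}$, integrating in $t$ at the end. Both routes rest on the same two pillars --- the Feynman--Kac/diamagnetic machinery of \cite{BHL00} giving $\|e^{-tH_{\La}(A,V)}\|_{2,\infty}\leq Ct^{-d/4}e^{Et}$, and Schatten interpolation between $\JJ_{2}$ and the operator norm --- so this is a legitimate reorganization rather than a new idea; the paper's ordering is slightly more economical because the $t$-integration is done once in Lemma \ref{lemma-resolvent-estimate} and the interpolation carries no $t$-dependence, whereas you must track the constant $Ct^{-d/(2p)}e^{bt}$ through Stein interpolation and split the $t$-integral at $t=1$. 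Two minor points: (i) you do not actually need the pointwise Gaussian heat-kernel bound (whose uniform-in-$(x,y)$ Khasminskii constant for the Brownian bridge requires a separate justification); the only quantity you use, $\sup_{x}\int_{\La}|e^{-tH}(x,y)|^{2}\,dy$, equals $\|e^{-tH_{\La}(A,V)}\|_{2,\infty}^{2}$ by Dunford--Pettis and is already controlled by \eqref{smoothing-of-semigroup}; (ii) your reduction tacitly uses that $g(H_{\La}(A,V)-\la_{0})^{-\al}$ is a well-defined bounded operator for $g$ merely in $L^{p}(\La)$, which follows from Lemma \ref{lemma-resolvent-estimate} with $q=2p/(p-2)$ and is exactly where the hypothesis $\al>\frac{d}{2p}$ enters; this is implicit in your Schatten estimate but deserves a sentence.
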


To prove the above theorem, we first present some lemmas. We begin with the celebrated Feynman-Kac-It\^{o} formula proven by Broderix, Hundertmark and Leschke (See \cite{Kir89,Si05A,Sz98} and references therein for earlier versions).
\begin{lem}[\cite{BHL00}]\label{lem-FK}
Let $A\in\HH_{loc}(\R^{d})$, $V\in\KK_{\pm}(\R^{d})$ and $\La\subset\R^{d}$ open. For any $\phi\in L^{2}(\La)$ and $t\geq0$, there holds
\begin{equation*}
\big(e^{-tH_{\La}(A,V)}\phi\big)(x)=\E_{x}\big\{e^{-S_{t}^{\om}(A,V)}\Xi_{\La,t}(\om)\phi(\om(t))\big\}\quad\text{for a.e.}\,\,x\in\La,
\end{equation*}
where
\begin{equation*}
S_{t}^{\om}(A,V)=i\int_{0}^{t}A(\om(s))d\om(s)+\frac{i}{2}\int_{0}^{t}(\nabla\cdot
A)(\om(s))ds+\int_{0}^{t}V(\om(s))ds,
\end{equation*}
$\E_{x}\{\cdot\}$ denotes the expectation for the Brownian motion starting at $x$ and $\Xi_{\La,t}$ is the characteristic function of the set $\{\om|\om(s)\in\La\,\,\text{for all}\,\,s\in[0,t]\}$.
\end{lem}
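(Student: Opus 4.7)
The plan is a standard approximation scheme: establish the formula for smooth $A$ and bounded $V$ (where it is classical), then pass to the limit on both the operator and path-integral sides. The three technical ingredients are a Trotter-type representation for regular data, Khasminskii's lemma for Kato-class exponential integrability of Brownian functionals, and the It\^{o} isometry to control the stochastic integral against $A$.

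First I would fix smooth, compactly supported approximants $A_{n}$ to $A$ and truncated mollifications $V_{n}$ to $V$, chosen so that $A_{n}\cdot A_{n}\to A\cdot A$ and $\nabla\cdot A_{n}\to\nabla\cdot A$ in the local Kato norm, while $V_{n,+}\to V_{+}$ in $\KK_{loc}(\R^{d})$ and $V_{n,-}\to V_{-}$ in $\KK(\R^{d})$ (density of smooth functions in these classes). For such smooth data on $\R^{d}$, the formula is classical: apply It\^{o}'s formula to $\exp(-S_{t}^{\om}(A_{n},V_{n}))$ and verify directly that $u_{n}(t,x):=\E_{x}\{\exp(-S_{t}^{\om}(A_{n},V_{n}))\phi(\om(t))\}$ solves $\pa_{t}u_{n}=-H_{\R^{d}}(A_{n},V_{n})u_{n}$ with initial data $\phi$; equivalently, invoke the Trotter product formula
\begin{equation*}
e^{-tH_{\R^{d}}(A_{n},V_{n})}=\lim_{k\to\infty}\big(e^{-(t/k)\frac{1}{2}(-i\nabla-A_{n})^{2}}e^{-(t/k)V_{n}}\big)^{k}
\end{equation*}
together with the Feynman-Kac-It\^{o} representation of the pure magnetic kinetic semigroup, which itself is obtained by applying It\^{o} to $\exp(-i\int_{0}^{t}A_{n}(\om(s))\cdot d\om(s)-\tfrac{i}{2}\int_{0}^{t}(\nabla\cdot A_{n})(\om(s))ds)$.

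Next I would pass to the limit $n\to\infty$. On the operator side, form convergence $h_{\R^{d}}^{A_{n},V_{n}}\to h_{\R^{d}}^{A,V}$ in the sense appropriate to the KLMN construction \eqref{relative-form-bound}--\eqref{sesquilinear-form} yields strong resolvent convergence, and hence strong semigroup convergence in $L^{2}(\R^{d})$. On the probabilistic side, Khasminskii's lemma gives, under $V\in\KK_{\pm}(\R^{d})$, a uniform bound $\sup_{x}\E_{x}\{\exp(c\int_{0}^{t}|V_{n}(\om(s))|ds)\}<\infty$ for some $c>1$; the It\^{o} isometry
\begin{equation*}
\E_{x}\big\{\big|\textstyle\int_{0}^{t}(A_{n}-A)(\om(s))\cdot d\om(s)\big|^{2}\big\}=\E_{x}\big\{\textstyle\int_{0}^{t}|A_{n}-A|^{2}(\om(s))ds\big\}
\end{equation*}
together with its analogue for the divergence term converts Kato-class convergence of $A_{n}\to A$ into convergence in probability of $S_{t}^{\om}(A_{n},V_{n})\to S_{t}^{\om}(A,V)$. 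A dominated convergence argument, using the uniform exponential integrability just cited, then yields a.e.\ convergence of the right-hand side along a subsequence, matching the operator-side limit.

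Finally, for $\La\neq\R^{d}$ I would incorporate the Dirichlet boundary through the killing indicator $\Xi_{\La,t}$. One route is to exhaust $\La$ by smooth relatively compact subdomains $\La_{m}\uparrow\La$, use domain monotonicity of Dirichlet forms to obtain $e^{-tH_{\La_{m}}(A,V)}\phi\to e^{-tH_{\La}(A,V)}\phi$ strongly, and observe that $\Xi_{\La_{m},t}\uparrow\Xi_{\La,t}$ pathwise, which passes through the expectation by monotone convergence once $|e^{-S_{t}^{\om}(A,V)}\phi(\om(t))|$ has been dominated (again via Khasminskii). The main obstacle is the simultaneous limit in step three: the stochastic integral $\int_{0}^{t}A(\om(s))d\om(s)$ must be intrinsically defined for $A\in\HH_{loc}(\R^{d})$, and coupling its convergence with that of the nonlinear exponential weight $e^{-S_{t}^{\om}}$ is the delicate point where the Kato-class hypotheses on $A\cdot A$ and $\nabla\cdot A$ are essential.
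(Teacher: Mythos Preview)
The paper does not supply its own proof of this lemma; it simply quotes the result from \cite{BHL00}. Your sketch is the standard approximation program carried out in that reference (smooth approximants, strong resolvent convergence on the operator side, Khasminskii-type uniform exponential bounds and It\^{o} isometry on the probabilistic side, then Dirichlet localization via killing), so your approach is correct and aligned with the cited source.
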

As consequences of Lemma \ref{lem-FK}, we get the so called diamagnetic inequality
\begin{equation*}
\big|e^{-tH_{\La}(A,V)}\phi\big|\leq e^{-tH_{\La}(0,V)}|\phi|,\quad t\geq0,
\end{equation*}
the monotonicity of semigroup for vanishing magnetic field in the sense that for $\La\subset\La'$
\begin{equation*}\label{mono-semigroup}
e^{-tH_{\La}(0,V)}\chi_{\La}\phi\leq e^{-tH_{\La'}(0,V)}\phi,\quad\phi\geq0,\,\,t\geq0
\end{equation*}
and then the $L^{p}$-smoothing of semigroups: for $1\leq p\leq q\leq\infty$, there exist constant $C>0$ and $E$ such that
\begin{equation}\label{smoothing-of-semigroup}
\big\|e^{-tH_{\La}(A,V)}\big\|_{p,q}\leq\big\|e^{-tH_{\La}(0,V)}\big\|_{p,q}\leq\big\|e^{-tH_{\R^{d}}(0,V)}\big\|_{p,q}\leq
Ct^{-\ga}e^{Et},
\end{equation}
where $\ga=\frac{d}{2}(\frac{1}{p}-\frac{1}{q})$. We remark that $E$ can be chosen such that $-E<E_{0}$ (see e.g. \cite{BHL00,Si82}).

We extend \cite[Theorem B.2.1]{Si82} to the magnetic case.

\begin{lem}\label{lemma-resolvent-estimate}
Let $A\in\HH_{loc}(\R^{d})$, $V\in\KK_{\pm}(\R^{d})$ and $\La\subset\R^{d}$ open. Let $\al>0$ and $1\leq p\leq q\leq\infty$ satisfy
\begin{equation}\label{assumption-1}
\frac{1}{p}-\frac{1}{q}<\frac{2\al}{d}.
\end{equation}
Then $(H_{\La}(A,V)-z)^{-\al}$ is bounded from $L^{p}(\La)$ to $L^{q}(\La)$ whenever the real part $\Re
z<E_{0}$.
\end{lem}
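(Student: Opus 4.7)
The plan is to represent $(H_{\La}(A,V)-z)^{-\al}$ as a Laplace transform of the semigroup and then feed in the $L^{p}$-smoothing estimate \eqref{smoothing-of-semigroup}. First I would observe that the diamagnetic inequality and the semigroup monotonicity quoted above give, on $L^{2}$,
\begin{equation*}
\|e^{-tH_{\La}(A,V)}\|\leq\|e^{-tH_{\R^{d}}(0,V)}\|=e^{-tE_{0}},
\end{equation*}
so that $\inf\si(H_{\La}(A,V))\geq E_{0}$. Consequently, whenever $\Re z<E_{0}$ the value $z$ lies strictly below $\si(H_{\La}(A,V))$, and the spectral theorem furnishes the Bochner representation
\begin{equation*}
(H_{\La}(A,V)-z)^{-\al}=\frac{1}{\Ga(\al)}\int_{0}^{\infty}t^{\al-1}e^{zt}e^{-tH_{\La}(A,V)}\,dt
\end{equation*}
as a strong integral on $L^{2}(\La)$.

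Next, for $\phi\in L^{p}(\La)\cap L^{2}(\La)$, I would apply Minkowski's integral inequality and insert \eqref{smoothing-of-semigroup} with the exponential rate $E$ chosen sufficiently close to $-E_{0}$ that $\Re z+E<0$ (possible since $\Re z<E_{0}$). With $\ga=\frac{d}{2}(\frac{1}{p}-\frac{1}{q})$ this yields
\begin{equation*}
\|(H_{\La}(A,V)-z)^{-\al}\phi\|_{q}\leq\frac{C\|\phi\|_{p}}{\Ga(\al)}\int_{0}^{\infty}t^{\al-1-\ga}e^{(\Re z+E)t}\,dt.
\end{equation*}
The integral is finite on both ends: near $t=0$, integrability follows from $\al-\ga>0$, which is exactly the hypothesis $\frac{1}{p}-\frac{1}{q}<\frac{2\al}{d}$; near $t=\infty$, it follows from $\Re z+E<0$. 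A standard density argument then promotes the bound from $L^{p}(\La)\cap L^{2}(\La)$ to all of $L^{p}(\La)$ for $p<\infty$, while $p=\infty$ can be handled by duality (or by direct control of the Bochner integral on $L^{\infty}(\La)$ using the corresponding smoothing estimate).

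The main technical obstacle is the flexibility in the choice of $E$ in \eqref{smoothing-of-semigroup}: the argument really needs that for every $\de>0$ one can take $E=-E_{0}+\de$ at the price of enlarging $C$ by a constant depending on $\de$. This is standard for Kato-class potentials and is indicated by the remark following \eqref{smoothing-of-semigroup}, but it must be confirmed for the localized magnetic operator by tracking how $E$ enters the Feynman-Kac-It\^{o} bounds that produce \eqref{smoothing-of-semigroup}. Once that flexibility is secured, the remainder reduces to a routine $\Ga$-function computation.
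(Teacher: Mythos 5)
Your proof is correct and follows essentially the same route as the paper: the paper's (one-line) argument is precisely the Laplace-transform representation $(H_{\La}(A,V)-z)^{-\al}=c_{\al}\int_{0}^{\infty}e^{-tH_{\La}(A,V)}e^{tz}t^{\al-1}dt$ combined with \eqref{smoothing-of-semigroup}, with \eqref{assumption-1} ensuring integrability at $t=0$ and the choice of $E$ (with $-E<E_{0}$, arbitrarily close) ensuring integrability at $t=\infty$. Your additional remarks about the flexibility of $E$ and the density argument simply make explicit what the paper leaves implicit.
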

\begin{proof}
It follows from the formula
\begin{equation*}
(H_{\La}(A,V)-z)^{-\al}=c_{\al}\int_{0}^{\infty}e^{-tH_{\La}(A,V)}e^{tz}t^{\al-1}dt
\end{equation*}
and \eqref{smoothing-of-semigroup}, where the assumption \eqref{assumption-1} is applied to insure the convergence of the above integral.
\end{proof}

As a consequence of Lemma \ref{lemma-resolvent-estimate}, we have

\begin{lem}\label{coro-1}
Let $A\in\HH_{loc}(\R^{d})$, $V\in\KK_{\pm}(\R^{d})$ and $\La\subset\R^{d}$ open. Let $\al>0$ and $1\leq p\leq2\leq q\leq\infty$ satisfy \eqref{assumption-1}. For any Borel function $f$ satisfying \eqref{assumption-f}, the operator $f(H_{\La}(A,V))$ is bounded from $L^{p}(\La)$ to $L^{q}(\La)$ with
\begin{equation*}
\|f(H_{\La}(A,V))\|_{p,q}\leq C_{p,q,\al,\la_{0}}\|(H_{\La}(A,V)-\la_{0})^{\al}f(H_{\La}(A,V))\|,
\end{equation*}
where $\la_{0}<E_{0}$ and $C_{p,q,\al,\la_{0}}>0$ depends only on $p$, $q$, $\al$ and $\la_{0}$.
\end{lem}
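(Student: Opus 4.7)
The plan is to factor $f(H_{\La}(A,V))$ as a product of three operators routed through $L^{2}(\La)$: two resolvent powers that handle the change of Lebesgue exponents via Lemma \ref{lemma-resolvent-estimate}, and a middle factor $(H_{\La}(A,V)-\la_{0})^{\al}f(H_{\La}(A,V))$ which is bounded on $L^{2}(\La)$ with the operator norm that already appears on the right-hand side of the claimed estimate.

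Concretely, I would first use the strict inequality $\frac{1}{p}-\frac{1}{q}<\frac{2\al}{d}$ together with $p\leq 2\leq q$ to split $\al=\al_{1}+\al_{2}$ into nonnegative pieces satisfying
\begin{equation*}
\frac{1}{p}-\frac{1}{2}<\frac{2\al_{2}}{d}\quad\text{and}\quad\frac{1}{2}-\frac{1}{q}<\frac{2\al_{1}}{d},
\end{equation*}
with the convention $\al_{2}=0$ when $p=2$ and $\al_{1}=0$ when $q=2$. Since all three operators are Borel functions of the same self-adjoint operator $H_{\La}(A,V)$, the functional calculus yields the commuting factorization
\begin{equation*}
f(H_{\La}(A,V))=(H_{\La}(A,V)-\la_{0})^{-\al_{1}}\bigl[(H_{\La}(A,V)-\la_{0})^{\al}f(H_{\La}(A,V))\bigr](H_{\La}(A,V)-\la_{0})^{-\al_{2}}.
\end{equation*}

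Next I would apply Lemma \ref{lemma-resolvent-estimate} with the real number $z=\la_{0}<E_{0}$ to conclude that $(H_{\La}(A,V)-\la_{0})^{-\al_{2}}$ is bounded from $L^{p}(\La)$ to $L^{2}(\La)$ and that $(H_{\La}(A,V)-\la_{0})^{-\al_{1}}$ is bounded from $L^{2}(\La)$ to $L^{q}(\La)$ (the endpoint cases $\al_{i}=0$ being trivial identities). The middle factor has $L^{2}\to L^{2}$ operator norm exactly $\|(H_{\La}(A,V)-\la_{0})^{\al}f(H_{\La}(A,V))\|$; its finiteness is guaranteed by \eqref{assumption-f}, since $(\la-\la_{0})^{\al}(1+|\la|)^{-\al}$ is bounded on $\sigma(H_{\La}(A,V))\subset[E_{0},\infty)$. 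Composing the three norm bounds gives the stated estimate, with the constant $C_{p,q,\al,\la_{0}}$ inherited from the constants supplied by Lemma \ref{lemma-resolvent-estimate}.

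The only step that requires any real care, and it is essentially arithmetic, is the splitting $\al=\al_{1}+\al_{2}$: one must use the \emph{strict} inequality in \eqref{assumption-1} so there is room to obtain strict inequalities on both sides of $L^{2}$, and the boundary cases $p=2$ or $q=2$ must be handled by letting the corresponding $\al_{i}$ vanish so that Lemma \ref{lemma-resolvent-estimate} is replaced by the trivial bound for the identity. Beyond this, the proof reduces to a routine composition of operator-norm bounds and should present no further obstacle.
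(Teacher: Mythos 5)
Your argument is correct and is essentially the argument the paper itself invokes: the paper's proof of this lemma simply cites the reasoning of Simon's Theorem B.2.3, which is exactly your factorization $f(H_{\La}(A,V))=(H_{\La}(A,V)-\la_{0})^{-\al_{1}}\bigl[(H_{\La}(A,V)-\la_{0})^{\al}f(H_{\La}(A,V))\bigr](H_{\La}(A,V)-\la_{0})^{-\al_{2}}$ with $\al_1+\al_2=\al$, combined with Lemma \ref{lemma-resolvent-estimate} at $z=\la_{0}$ for the outer factors and the boundedness of $(\la-\la_{0})^{\al}f(\la)$ on $\si(H_{\La}(A,V))$ for the middle one. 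The only blemish is the degenerate case $p=q=2$, where your two conventions force $\al_{1}=\al_{2}=0$ and thus contradict $\al_{1}+\al_{2}=\al$; there one should instead take, say, $\al_{2}=0$ and $\al_{1}=\al$, noting that $(H_{\La}(A,V)-\la_{0})^{-\al}$ is bounded on $L^{2}(\La)$ because $\la_{0}<E_{0}\leq\inf\si(H_{\La}(A,V))$.
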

\begin{proof}
It follows from the arguments in \cite[Theorem B.2.3]{Si82}.
\end{proof}

We next discuss the trace ideal estimate of operators of the form $gf(H_{\La}(A,V))$ for suitable $f$ and $g$. We start with recalling a result of Dunford and Pettis (See \cite{CFKS87,Si82,Tr67} for abstract versions).

\begin{lem}\label{theorem-Dunford-Pettis}
Let $(M,\mu)$ be a separable measurable space. If $L$ is a bounded linear operator from $L^{p}(M)$ to $L^{\infty}(M)$ with $1\leq p<\infty$, then there is a measurable function $k(\cdot,\cdot)$ on $M\times M$ such that $L$ is an integral operator with integral kernel $k(\cdot,\cdot)$ and
\begin{equation*}
\sup_{x\in
M}\bigg(\int_{M}|k(x,y)|^{p'}d\mu(y)\bigg)^{\frac{1}{p'}}=\|L\|_{p,\infty}<\infty,
\end{equation*}
where $p'=\frac{p}{p-1}$ is the conjugate exponent of $p$.
\end{lem}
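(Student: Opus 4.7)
The plan is to construct the integral kernel $k(x,\cdot)$ pointwise $\mu$-a.e.\ by applying the Riesz representation theorem to a linear functional canonically associated to each point $x$, and then to upgrade this pointwise construction to a jointly measurable function on $M\times M$ using the separability of $(M,\mu)$. The central tension is that $Lf$ is only defined as an element of $L^{\infty}(M)$, so its pointwise values are ambiguous on null sets; separability is what allows a single conull set to serve for all $f\in L^{p}(M)$ simultaneously.

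First I would use separability together with $p<\infty$ to fix a countable $\mathbb{Q}$-linear subspace $D\subset L^{p}(M)$ that is norm-dense. For each $f\in D$ choose a specific representative of $Lf$; the hypothesis $\|Lf\|_{\infty}\le\|L\|_{p,\infty}\|f\|_{p}$ then gives $|(Lf)(x)|\le\|L\|_{p,\infty}\|f\|_{p}$ off a null set $N_{f}$, and $N:=\bigcup_{f\in D}N_{f}$ is still null. For $x\notin N$ the map $f\mapsto(Lf)(x)$ is a bounded $\mathbb{Q}$-linear functional on $D$, hence extends uniquely to a continuous linear functional $T_{x}$ on $L^{p}(M)$ with $\|T_{x}\|\le\|L\|_{p,\infty}$. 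Because $(L^{p}(M))^{*}=L^{p'}(M)$ (this is where $p<\infty$ is used again), Riesz representation supplies $k(x,\cdot)\in L^{p'}(M)$ satisfying
\begin{equation*}
T_{x}f=\int_{M}k(x,y)f(y)\,d\mu(y),\qquad \|k(x,\cdot)\|_{p'}=\|T_{x}\|\le\|L\|_{p,\infty}.
\end{equation*}
Setting $k\equiv 0$ on $N\times M$ then defines $k$ on all of $M\times M$.

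The principal obstacle is joint measurability of $k(x,y)$, without which the integral representation would be meaningless. For each $f\in D$ the function $x\mapsto\int k(x,y)f(y)\,d\mu(y)=(Lf)(x)$ is measurable by construction, and approximating $\chi_{E}$ (with $\mu(E)<\infty$) by elements of $D$ in the $L^{p}$-norm shows that $x\mapsto\int_{E}k(x,y)\,d\mu(y)$ is measurable for every such $E$. A standard monotone-class argument, or equivalently a Pettis-measurability argument applied to the separable-valued, weakly measurable map $x\mapsto k(x,\cdot)\in L^{p'}(M)$, then promotes this to joint measurability of $k$ on $M\times M$. This step is the only genuinely delicate part of the proof; all the rest is bookkeeping.

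Finally, once the kernel is in hand the norm identity is immediate: the bound $\sup_{x}\|k(x,\cdot)\|_{p'}\le\|L\|_{p,\infty}$ has already been established pointwise, while the reverse is a direct consequence of H\"older's inequality $|(Lf)(x)|\le\|k(x,\cdot)\|_{p'}\|f\|_{p}$, on taking essential supremum in $x$ and supremum over $\|f\|_{p}=1$.
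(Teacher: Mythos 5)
The paper does not actually prove this lemma: it is quoted from the literature, with the reader referred to \cite{CFKS87,Si82,Tr67} for ``abstract versions.'' Your argument is, in substance, the standard proof that those references contain, so there is no conflict with the paper --- only a comparison between your sketch and a citation. Your route (a single conull set $N$ obtained from a countable dense $\Q$-linear subspace, Riesz representation of the functionals $f\mapsto (Lf)(x)$ for $x\notin N$, then an upgrade to joint measurability) is correct and correctly identifies joint measurability as the one delicate point. Two remarks on that point. First, your Pettis argument genuinely needs $p>1$: for $p=1$ one has $p'=\infty$, the duality $(L^{1})^{*}=L^{\infty}$ requires $\sigma$-finiteness, and $L^{\infty}(M)$ is not separable, so the map $x\mapsto k(x,\cdot)$ need not be essentially separably valued and Pettis does not apply; the clean fix at that endpoint (and a uniform alternative for all $p$) is the martingale construction $k_{n}(x,y)=\sum_{j}\mu(A_{j}^{n})^{-1}(L\chi_{A_{j}^{n}})(x)\,\chi_{A_{j}^{n}}(y)$ over a refining sequence of countable partitions generated by the separability hypothesis, whose terms are manifestly jointly measurable and which converges for a.e.\ $y$ for each fixed $x\notin N$. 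Second, your monotone-class alternative as stated is circular-adjacent: the class of sets $S\subset M\times M$ for which $x\mapsto\int\chi_{S}(x,y)k(x,y)\,d\mu(y)$ behaves well cannot be defined before $k$ is known to be jointly measurable, so the martingale or Pettis route is really needed rather than ``equivalent'' to it. For the application in the paper only $p=p'=2$ on $\La\subset\R^{d}$ with Lebesgue measure is used, where every one of these caveats is vacuous, so your proof covers everything the paper requires. The closing norm identity, including the distinction between $\sup_{x}$ and $\esssup_{x}$ (handled by setting $k\equiv0$ on $N\times M$), is argued correctly.
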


We are now ready to prove Theorem \ref{theorem-trace-ideal-estimate}.

\begin{proof}[Proof of Theorem \ref{theorem-trace-ideal-estimate}]
By complex interpolation (see \cite[Theorem 2.9]{Si05}), it suffices to prove the result in the case $p=2$, which we show now. For $p=2$ and $q=\infty$, we have $\frac{d}{2}\big(\frac{1}{p}-\frac{1}{q}\big)=\frac{d}{4}<\al$ by assumption, i.e., \eqref{assumption-1} is satisfied, and thus, Lemma \ref{coro-1} implies that $f(H_{\La}(A,V))$ is bounded from $L^{2}(\La)$ to $L^{\infty}(\La)$. By Lemma \ref{theorem-Dunford-Pettis}, $f(H_{\La}(A,V))$ is an integral operator with kernel $k_{\La}^{A,V}(x,y)$ satisfying
\begin{equation*}
\sup_{x\in\La}\int_{\La}\big|k_{\La}^{A,V}(x,y)\big|^{2}dy=\|f(H_{\La}(A,V))\|_{2,\infty}^{2}<\infty.
\end{equation*}
Thus, $gf(H_{\La}(A,V))$ is an integral operator on $L^{2}(\La)$ with kernel $g(x)k_{\La}^{A,V}(x,y)$. Moreover,
\begin{equation*}
\iint_{\La\times\La}\big|g(x)k_{\La}^{A,V}(x,y)\big|^{2}dxdy\leq\|g\|_{2}^{2}\sup_{x\in\La}\int_{\La}\big|k_{\La}^{A,V}(x,y)\big|^{2}dy=\|g\|_{2}^{2}\|f(H_{\La}(A,V))\|_{2,\infty}^{2},
\end{equation*}
which implies that $gf(H_{\La}(A,V))$ is a Hilbert-Schmidt operator as mentioned in Section \ref{standing-notations}, i.e., in $\JJ_{2}$, with $\JJ_{2}$-norm bounded by $\|g\|_{2}\|f(H_{\La}(A,V))\|_{2,\infty}$. The expected bound is given by Lemma \ref{coro-1}. This completes the proof.
\end{proof}

We remark that results obtained in this section are well-known for Schr\"{o}dinger operators without magnetic fields. See \cite{AS82,Si82} and references therein. It should be pointed out that the result of Theorem \ref{theorem-trace-ideal-estimate} in the case $H_{\R^{d}}(0,V)$ was proven in \cite[Theorem
B.9.3]{Si82} for any $p\geq1$. To prove the result for $p\in[1,2)$, it was first shown that
$gf(H_{\R^{d}}(0,V))\in\JJ_{1}$ for $g\in\ell^{1}(L^{2}(\R^{d}))$, the Birman-Solomjak space, then proceeded to complex interpolation. The proof relies on the translation invariance of the free Laplacian (see \cite[Theorem B.9.2]{Si82} and \cite[Theorem 4.5]{Si05} for instance), which, however, is not true for magnetic Schr\"{o}dinger operators. This prevents us from obtaining the result for $p\in[1,2)$.


\section{The Combes-Thomas Estimate in Trace Ideals}\label{sec-combes-thomas-estimate}

In this section, we study the improved Combes-Thomas estimate, i.e., the trace ideal estimate of the operators
\begin{equation*}
\chi_{\beta}(H_{\La}(A,V)-z)^{-n}\chi_{\ga}\quad\text{for}\,\,\beta,\ga\in\R^{d},
\end{equation*}
where $\chi_{\beta}$ is the characteristic function of the unit cube centered at $\beta$. More precisely, we want to obtain the exponential decay of $\|\chi_{\beta}(H_{\La}(A,V)-z)^{-1}\chi_{\ga}\|_{\JJ_{p}}$ in terms of $|\beta-\ga|$. The main result is stated in Theorem \ref{main-theorem}. Since we also consider localized operators, $\chi_{\beta}$ should be understood as $\chi_{\beta}\chi_{\La}$ if the operators is restricted to $\La$ as it is mentioned in Section \ref{standing-notations}, where $\chi_{\La}$ is the characteristic function of the domain $\La$. The basic tools we use here are sectorial form and $m$-sectorial operator reviewed in
Appendix \ref{app-sectorial-form}. We also employ the classical argument of Combes and Thomas developed in \cite{CT73}.

First of all, we establish some results by applying the theory of sectorial form and $m$-sectorial operator. For this purpose, we first define auxiliary sesquilinear forms with associated operators
formally given by
\begin{equation}\label{auxiliary-operator}
H_{\La}^{a}(A,V)=e^{a\cdot x}H_{\La}(A,V)e^{-a\cdot x},\quad a\in\R^{d},
\end{equation}
where $e^{a\cdot x}$ and $e^{-a\cdot x}$ are multiplicative operators. Note that the operator $H_{\La}^{a}(A,V)$ is not self-adjoint unless $a=0$. First, we denote by $D_{A,\La}$ the closure of $\frac{\sqrt{2}}{2}(-i\nabla-A)$ on $C_{0}^{\infty}(\La)$, so $H_{\La}(A,0)=D_{A,\La}^{*}D_{A,\La}$. This can be seen by sesquilinear forms. Moreover, the domain of $D_{A,\La}$, denoted by $\mathscr{D}(D_{A,\La})$, is the form domain, denoted by $\QQ(h_{\La}^{A,0})$, of the sesquiliner form
associated with the lower bounded self-adjoint operator $H_{\La}(A,0)$. For $a\in\La$, we define
\begin{equation*}
D_{A,\La}(a)=e^{a\cdot x}D_{A,\La}e^{-a\cdot x}\quad\text{and}\quad D_{A,\La}^{*}(a)=e^{a\cdot
x}D_{A,\La}^{*}e^{-a\cdot x}.
\end{equation*}
It's easy to see that
\begin{equation}\label{two-equalities}
\begin{split}
D_{A,\La}(a)&=D_{A,\La}+i\frac{\sqrt{2}}{2}a,\quad\text{on}\quad\mathscr{D}(D_{A,\La}),\\
D_{A,\La}^{*}(a)&=D_{A,\La}^{*}+i\frac{\sqrt{2}}{2}a,\quad\text{on}\quad\mathscr{D}(D_{A,\La}^{*})
\end{split}
\end{equation}
and they are closed, densely defined operators. Note $(D_{A,\La}(a))^{*}\neq D_{A,\La}^{*}(a)$ for $a\neq0$. Next, we define the sesquilinear form $h_{\La}^{A,0}(a)$ on $\mathscr{D}(D_{A,\La})=\QQ(h_{\La}^{A,0})$ by
\begin{equation}\label{sesquilinear-form-aux-0}
h_{\La}^{A,0}(a)(\psi,\phi)=\big\lan(D_{A,\La}^{*}(a))^{*}\psi,D_{A,\La}(a)\phi\big\ran.
\end{equation}
Obviously, $h_{\La}^{A,0}(0)\equiv h_{\La}^{A,0}$. Finally, we define the sesquilinear form $h_{\La}^{A,V}(a)$ on $\QQ(h_{\La}^{A,V_{+}})$ by
\begin{equation}\label{sesquilinear-form-aux}
h_{\La}^{A,V}(a)(\psi,\phi)=h_{\La}^{A,0}(a)(\psi,\phi)+\big\lan\sqrt{V_{+}}\psi,\sqrt{V_{+}}\phi\big\ran-\big\lan
\sqrt{V_{-}}\psi,\sqrt{V_{-}}\phi\big\ran.
\end{equation}

For $a_{0}>0$, let
\begin{equation}\label{two-constants}
\begin{split}
\Xi_{1}(s)=\frac{2s}{1-\Th_{1}},\quad\Xi_{2}(s,a_{0})=\frac{2s\Th_{2}}{1-\Th_{1}}+\bigg(\frac{1}{2s}+\frac{s}{4}\bigg)a_{0}^{2},
\end{split}
\end{equation}
where $\Th_{1}$, $\Th_{2}$ are given in \eqref{relative-form-bound}. We will write $\Xi_{1}(s)$ and $\Xi_{2}(s,a_{0})$ as $\Xi_{1}$ and $\Xi_{2}$, respectively, in the sequel.

We next prove several lemmas related to $H_{\La}^{a}(A,V)$. Our first lemma is about the relation between $h_{\La}^{A,V}(a)$ and $H_{\La}^{a}(A,V)$.

\begin{lem}\label{lemma-sectorial}
Let $A\in\HH_{loc}(\R^{d})$, $V\in\KK_{\pm}(\R^{d})$ and $\La\subset\R^{d}$ open. The sesquilinear form
$h_{\La}^{A,V}(a)$ defined in \eqref{sesquilinear-form-aux} is a closed sectorial form associated with the unique $m$-sectorial operator $H_{\La}^{a}(A,V)$ given by \eqref{auxiliary-operator}.
\end{lem}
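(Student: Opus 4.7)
The plan is to separate the real and imaginary parts of $h_{\La}^{A,V}(a)$ on the diagonal, establish sectoriality with the constants $\Xi_1, \Xi_2$ of \eqref{two-constants}, and finally identify the associated $m$-sectorial operator with the formal conjugate $e^{a\cdot x}H_{\La}(A,V)e^{-a\cdot x}$. Using the identities \eqref{two-equalities}, for $\phi\in\QQ(h_{\La}^{A,V_+})$ a direct expansion of \eqref{sesquilinear-form-aux-0} gives
\[
h_{\La}^{A,0}(a)(\phi,\phi) \;=\; h_{\La}^{A,0}(\phi,\phi) \;-\; \tfrac{1}{2}|a|^2\|\phi\|^2 \;+\; i\sqrt{2}\,\mathrm{Re}\,\langle D_{A,\La}\phi,\, a\phi\rangle.
\]
Since $V$ is real-valued, the $V_{\pm}$ terms in \eqref{sesquilinear-form-aux} contribute only to the real part, so
\[
\mathrm{Re}\, h_{\La}^{A,V}(a)(\phi,\phi) \;=\; h_{\La}^{A,V}(\phi,\phi) - \tfrac{1}{2}|a|^2\|\phi\|^2, \qquad \bigl|\mathrm{Im}\, h_{\La}^{A,V}(a)(\phi,\phi)\bigr| \;\le\; \sqrt{2}\,|a|\,\|D_{A,\La}\phi\|\,\|\phi\|.
\]
Thus the real part is simply the closed, bounded-below form $h_{\La}^{A,V}$ shifted by a bounded multiple of $\|\phi\|^2$, hence itself closed on $\QQ(h_{\La}^{A,V_+})$.

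Next I would establish sectoriality. By the infinitesimal form bound \eqref{relative-form-bound},
\[
\mathrm{Re}\, h_{\La}^{A,V}(a)(\phi,\phi) \;\ge\; (1-\Th_1)\,h_{\La}^{A,0}(\phi,\phi) - \bigl(\Th_2 + \tfrac{a_0^2}{2}\bigr)\|\phi\|^2.
\]
Applying Young's inequality $|xy|\le \tfrac{t}{2}x^2+\tfrac{1}{2t}y^2$, with a free parameter $t$ chosen proportional to $s$, to the Cauchy--Schwarz bound $|\mathrm{Im}\, h_{\La}^{A,V}(a)(\phi,\phi)|\le\sqrt{2}\,a_0\,\|D_{A,\La}\phi\|\,\|\phi\|$, and then eliminating $h_{\La}^{A,0}(\phi,\phi)=\|D_{A,\La}\phi\|^2$ via the preceding lower bound, would yield the sectorial inequality
\[
\bigl|\mathrm{Im}\, h_{\La}^{A,V}(a)(\phi,\phi)\bigr| \;\le\; \Xi_1(s)\Bigl(\mathrm{Re}\, h_{\La}^{A,V}(a)(\phi,\phi) + \Xi_2(s,a_0)\|\phi\|^2\Bigr)
\]
with $\Xi_1, \Xi_2$ as in \eqref{two-constants}. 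Combined with the closedness of the real part, this makes $h_{\La}^{A,V}(a)$ a closed sectorial form with numerical range contained in a sector of half-angle $\arctan\Xi_1(s)<\pi/2$.

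Kato's first representation theorem for closed sectorial forms then furnishes a unique $m$-sectorial operator $H$ on $L^2(\La)$ with $h_{\La}^{A,V}(a)(\psi,\phi)=\langle\psi,H\phi\rangle$ for every $\psi\in\QQ(h_{\La}^{A,V_+})$ and $\phi\in\mathscr{D}(H)$. To identify $H$ with the formal expression \eqref{auxiliary-operator}, I would restrict to the form core $C_0^\infty(\La)$ of $h_{\La}^{A,V_+}$: for $\phi\in C_0^\infty(\La)$ the identities \eqref{two-equalities} give $(D_{A,\La}^*(a))^*D_{A,\La}(a)\phi = e^{a\cdot x}D_{A,\La}^*D_{A,\La}e^{-a\cdot x}\phi$, and since the multiplication operator $V$ commutes with $e^{\pm a\cdot x}$, the full action of $H$ on $\phi$ coincides with $e^{a\cdot x}H_{\La}(A,V)e^{-a\cdot x}\phi$. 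The main obstacle will be the bookkeeping in the weighted Young-inequality step to match the precise constants $\Xi_1(s),\Xi_2(s,a_0)$ in \eqref{two-constants}---keeping $s$ free so that it can be optimized later in the Combes-Thomas argument of Section \ref{sec-combes-thomas-estimate}---together with the verification that $C_0^\infty(\La)$ remains a form core after the non-self-adjoint deformation, so that the identification on the core extends to all of $\mathscr{D}(H)$.
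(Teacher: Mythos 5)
Your argument is correct in substance and rests on the same core computation as the paper (expand $h_{\La}^{A,0}(a)-h_{\La}^{A,0}$ via \eqref{two-equalities}, Cauchy--Schwarz, a Young inequality with free parameter $s$, and the form bound \eqref{relative-form-bound}), but it is organized differently. The paper does not split into real and imaginary parts: it bounds the full difference $\big|h_{\La}^{A,V}(a)(\phi,\phi)-h_{\La}^{A,V}(\phi,\phi)\big|\leq\Xi_{1}h_{\La}^{A,V}(\phi,\phi)+\Xi_{2}\|\phi\|^{2}$ with $\Xi_{1}<1$ and then invokes Kato's perturbation theorem for sectorial forms (Theorem \ref{relatively-bounded-pert}) to get closedness and sectoriality in one stroke, followed by the first representation theorem (Theorem \ref{representation-theorem}). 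Your route -- verifying the sector condition $|\Im|\leq\tan\theta(\Re-\gamma)$ directly and using closedness of the real part -- is equally valid and arguably more transparent, and it does not even need $\Xi_{1}<1$ for this lemma. What the paper's formulation buys is that the inequality \eqref{relative-bounded-estimate} on the \emph{whole} difference form is exactly what is recycled in Lemma \ref{lemma-representation} to produce the factorization $H_{\La}^{a}(A,V)=H_{\La}(A,V)+\sqrt{\tilde H_{\La}(A,V)}\,B\,\sqrt{\tilde H_{\La}(A,V)}$ with $\|B\|\leq2\Xi_{1}$; your separate real/imaginary bounds would have to be recombined to recover it, and your constants, as you note, do not land exactly on \eqref{two-constants}.

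One caution on your final identification step: the paper does not actually prove that the form-associated $m$-sectorial operator equals $e^{a\cdot x}H_{\La}(A,V)e^{-a\cdot x}$ as an unbounded operator -- \eqref{auxiliary-operator} is explicitly introduced as a \emph{formal} expression, and $H_{\La}^{a}(A,V)$ is \emph{defined} to be the operator furnished by Theorem \ref{representation-theorem}. Your proposed verification on $C_{0}^{\infty}(\La)$ is more than is required and is also delicate as sketched: for $A\in\HH_{loc}(\R^{d})$ and $V\in\KK_{\pm}(\R^{d})$, $C_{0}^{\infty}(\La)$ is a form core but need not lie in the operator domain of $H_{\La}(A,V)$, so the pointwise identity $(D_{A,\La}^{*}(a))^{*}D_{A,\La}(a)\phi=e^{a\cdot x}D_{A,\La}^{*}D_{A,\La}e^{-a\cdot x}\phi$ need not make sense there, and agreement on a form core would not by itself pin down the operator. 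Dropping that step (or phrasing the conclusion, as the paper does, at the level of forms) closes the only real gap in the proposal.
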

\begin{proof}
By \eqref{sesquilinear-form}, \eqref{two-equalities}, \eqref{sesquilinear-form-aux-0} and $\eqref{sesquilinear-form-aux}$, we have for any $\phi\in\QQ(h_{\La}^{A,V})$,
\begin{equation*}
\big|h_{\La}^{A,V}(a)(\phi,\phi)-h_{\La}^{A,V}(\phi,\phi)\big|=\big|h_{\La}^{A,0}(a)(\phi,\phi)-h_{\La}^{A,0}(\phi,\phi)\big|\leq\sqrt{2}|\Re\lan\phi,a\cdot
D_{A,\La}\phi\ran|+\frac{1}{2}|a|^{2}\|\phi\|^{2}
\end{equation*}
so that
\begin{equation*}
\big|h_{\La}^{A,V}(a)(\phi,\phi)-h_{\La}^{A,V}(\phi,\phi)\big|^{2}\leq4|a|^{2}\|\phi\|^{2}\|D_{A,\La}\phi\|^{2}+\frac{1}{2}|a|^{4}\|\phi\|^{4},
\end{equation*}
which implies that for any $s>0$, 
\begin{equation}\label{estimate-4}
\begin{split}
&\big|h_{\La}^{A,V}(a)(\phi,\phi)-h_{\La}^{A,V}(\phi,\phi)\big|\\
&\quad\quad\leq|a|\|\phi\|\bigg(4\|D_{A,\La}\phi\|^{2}+\frac{1}{2}|a|^{2}\|\phi\|^{2}\bigg)^{\frac{1}{2}}\\
&\quad\quad\leq\frac{1}{2s}|a|^{2}\|\phi\|^{2}+\frac{s}{2}\bigg(4\|D_{A,\La}\phi\|^{2}+\frac{1}{2}|a|^{2}\|\phi\|^{2}\bigg)\\
&\quad\quad=2sh_{\La}^{A,0}(\phi,\phi)+\bigg(\frac{1}{2s}+\frac{s}{4}\bigg)|a|^{2}\|\phi\|^{2},
\end{split}
\end{equation}
since $h_{\La}^{A,0}(\phi,\phi)=\|D_{A,\La}\phi\|^{2}$. Thanks to \eqref{relative-form-bound} and \eqref{sesquilinear-form},
\begin{equation*}
h_{\La}^{A,V}\geq(1-\Th_{1})h_{\La}^{A,0}-\Th_{2}\quad\text{on}\quad\QQ(h_{\La}^{A,V})\big(\subset\QQ(h_{\La}^{A,0})\big).
\end{equation*}
This, together with \eqref{estimate-4}, implies that
\begin{equation}\label{relative-bounded-estimate}
\big|h_{\La}^{A,V}(a)(\phi,\phi)-h_{\La}^{A,V}(\phi,\phi)\big|\leq\Xi_{1}h_{\La}^{A,V}(\phi,\phi)+\Xi_{2}\|\phi\|^{2},
\quad\phi\in\QQ(h_{\La}^{A,V}),
\end{equation}
where $\Xi_{1}$ and $\Xi_{2}$ are given in \eqref{two-constants} with $a_{0}$ replaced by $|a|$.

To apply Theorem \ref{relatively-bounded-pert}, we choose $s\in\big(0,\frac{1-\Th_{1}}{2}\big)$ so that
$\Xi_{1}=\frac{2s}{1-\Th_{1}}<1$. Since $h_{\La}^{A,V}$ is symmetric, closed and bounded from below, Theorem \ref{relatively-bounded-pert} says that $h_{\La}^{A,V}(a)$ is a closed sectorial form defined on $\QQ(h_{\La}^{A,V})$. Theorem \ref{representation-theorem} then guarantees that there exists a unique $m$-sectorial operator, denoted by $H_{\La}^{a}(A,V)$, associated to $h_{\La}^{A,V}(a)$.
\end{proof}

The next lemma gives an operator equality connecting $H_{\La}^{a}(A,V)$ and $H_{\La}(A,V)$.

\begin{lem}\label{lemma-representation}
Let $A\in\HH_{loc}(\R^{d})$, $V\in\KK_{\pm}(\R^{d})$ and $\La\subset\R^{d}$ open. Suppose
$s\in\big(0,\frac{1-\Th_{1}}{2}\big)$ so that $\Xi_{1}<1$. Set
\begin{equation}\label{associated-operator-2}
\tilde{H}_{\La}(A,V)=H_{\La}(A,V)+\Xi_{1}^{-1}\Xi_{2},
\end{equation}
where $\Xi_{1}$ and $\Xi_{2}$ are given in \eqref{two-constants} with $a_{0}$ replaced by $|a|$. Then $\tilde{H}_{\La}(A,V)$ is nonnegative and there exists a bounded linear operator $B$ from $L^{2}(\La)$ to itself with $\|B\|\leq2\Xi_{1}$ such that
\begin{equation}\label{expected-equality}
H_{\La}^{a}(A,V)=H_{\La}(A,V)+\sqrt{\tilde{H}_{\La}(A,V)}B\sqrt{\tilde{H}_{\La}(A,V)},
\end{equation}
where $H_{\Lambda}^{a}(A,V)$ is the $m$-sectorial operator in Lemma
\ref{lemma-sectorial}.
\end{lem}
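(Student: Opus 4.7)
The plan is to stay at the sesquilinear-form level as long as possible and then recover the operator identity through the representation theorem used to define $H_\La^a(A,V)$ in Lemma \ref{lemma-sectorial}.

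First, I verify $\tilde H_\La(A,V)\ge 0$, so that $\sqrt{\tilde H_\La(A,V)}$ is a nonnegative self-adjoint operator with $\mathscr{D}(\sqrt{\tilde H_\La(A,V)}) = \QQ(h_\La^{A,V})$. A direct expansion of \eqref{two-constants} yields
\[
\Xi_1^{-1}\Xi_2 \;=\; \Th_2 \;+\; \tfrac{1-\Th_1}{2s}\bigl(\tfrac{1}{2s}+\tfrac{s}{4}\bigr)|a|^2 \;\ge\; \Th_2,
\]
while the KLMN bound \eqref{relative-form-bound} combined with $h_\La^{A,0}\ge 0$ gives $H_\La(A,V)\ge -\Th_2$, hence $\tilde H_\La(A,V)\ge 0$. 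Next, I introduce the perturbation form
\[
b(\psi,\phi) := h_\La^{A,V}(a)(\psi,\phi) - h_\La^{A,V}(\psi,\phi), \qquad \psi,\phi\in\QQ(h_\La^{A,V}),
\]
and rewrite the right-hand side of \eqref{relative-bounded-estimate} as $\Xi_1 h_\La^{A,V}(\phi,\phi)+\Xi_2\|\phi\|^2 = \Xi_1\|\sqrt{\tilde H_\La(A,V)}\phi\|^2$, producing the diagonal estimate $|b(\phi,\phi)|\le \Xi_1\|\sqrt{\tilde H_\La(A,V)}\phi\|^2$.

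To upgrade to an off-diagonal bound I polarize: write $b = b_1 + i b_2$ with $b_1(\psi,\phi):=\tfrac{1}{2}(b(\psi,\phi)+\overline{b(\phi,\psi)})$ and $b_2(\psi,\phi):=\tfrac{1}{2i}(b(\psi,\phi)-\overline{b(\phi,\psi)})$, both Hermitian. On the diagonal, $b_1(\phi,\phi)=\mathrm{Re}\,b(\phi,\phi)$ and $b_2(\phi,\phi)=\mathrm{Im}\,b(\phi,\phi)$, so each $b_j$ inherits the diagonal bound with the same constant $\Xi_1$. A short computation with $b_j(\phi+t\psi,\phi+t\psi)$ as $t\to 0$ shows $b_j(\psi,\phi)=0$ whenever $\sqrt{\tilde H_\La(A,V)}\psi = 0$ or $\sqrt{\tilde H_\La(A,V)}\phi=0$, so the pullback
\[
\tilde b_j\bigl(\sqrt{\tilde H_\La(A,V)}\psi,\sqrt{\tilde H_\La(A,V)}\phi\bigr):=b_j(\psi,\phi)
\]
is a well-defined Hermitian sesquilinear form on $\mathrm{Ran}(\sqrt{\tilde H_\La(A,V)})$ satisfying $|\tilde b_j(u,u)|\le \Xi_1\|u\|^2$. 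Continuous extension to the closure of the range and zero extension on its orthogonal complement produce, via the Riesz representation theorem, a bounded self-adjoint operator $T_j$ on $L^2(\La)$ with $\|T_j\|\le \Xi_1$ and $\tilde b_j(u,v)=\lan u,T_j v\ran$. Setting $B := T_1 + i T_2$ gives a bounded operator on $L^2(\La)$ with $\|B\|\le 2\Xi_1$ and
\[
b(\psi,\phi) \;=\; \bigl\lan \sqrt{\tilde H_\La(A,V)}\,\psi,\,B\,\sqrt{\tilde H_\La(A,V)}\,\phi\bigr\ran, \qquad \psi,\phi\in\QQ(h_\La^{A,V}).
\]

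The operator identity \eqref{expected-equality} is recovered from Theorem \ref{representation-theorem} applied to $h_\La^{A,V}(a)$ and to $h_\La^{A,V}$. For $\phi\in\mathscr{D}(H_\La^a(A,V))$ and $\psi\in\QQ(h_\La^{A,V})$,
\[
\lan\psi, H_\La^a(A,V)\phi\ran = h_\La^{A,V}(a)(\psi,\phi) = h_\La^{A,V}(\psi,\phi) + \bigl\lan\sqrt{\tilde H_\La(A,V)}\psi,B\sqrt{\tilde H_\La(A,V)}\phi\bigr\ran,
\]
and the self-adjointness of $\sqrt{\tilde H_\La(A,V)}$ recasts this as $\lan\psi,[H_\La(A,V)+\sqrt{\tilde H_\La(A,V)}B\sqrt{\tilde H_\La(A,V)}]\phi\ran$. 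I expect the main obstacle to be domain bookkeeping: promoting $\sqrt{\tilde H_\La(A,V)}B\sqrt{\tilde H_\La(A,V)}$ from a form to a genuine operator requires checking $B\sqrt{\tilde H_\La(A,V)}\phi\in\mathscr{D}(\sqrt{\tilde H_\La(A,V)})$ for $\phi\in\mathscr{D}(H_\La^a(A,V))$ and matching the operator domains on both sides of \eqref{expected-equality}. I would handle this by running the form identification on the core $C_0^\infty(\La)$ and extending by density, thereby identifying both sides as the unique $m$-sectorial operator associated with the closed sectorial form $h_\La^{A,V}(a)$.
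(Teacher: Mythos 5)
Your construction of $B$ is correct, and it is in substance the same route the paper takes: you verify $\tilde{H}_{\La}(A,V)\geq0$, reduce \eqref{relative-bounded-estimate} to the diagonal bound $|h_{\La}^{A,V}(a)(\phi,\phi)-h_{\La}^{A,V}(\phi,\phi)|\leq\Xi_{1}\|\sqrt{\tilde{H}_{\La}(A,V)}\phi\|^{2}$, and then represent the difference form by a bounded operator with $\|B\|\leq2\Xi_{1}$. Your Hermitian splitting $b=b_{1}+ib_{2}$ with the pullback to $\mathrm{Ran}\big(\sqrt{\tilde{H}_{\La}(A,V)}\big)$ is exactly the proof of the paper's Theorem \ref{representation-theorem-1} (Kato's Lemma VI.3.1), which the paper simply cites; re-deriving it is fine but is not a different method, and this part of your argument is complete.

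The genuine gap is in the passage from the form identity to the operator equality \eqref{expected-equality}. From $\tilde{h}_{\La}^{A,V}(a)(\psi,\phi)=\big\lan\sqrt{\tilde{H}_{\La}(A,V)}\psi,(1+B)\sqrt{\tilde{H}_{\La}(A,V)}\phi\big\ran=\lan\psi,\tilde{H}_{\La}^{a}(A,V)\phi\ran$ for all $\psi$ in the form domain, the adjoint characterization of the self-adjoint operator $\sqrt{\tilde{H}_{\La}(A,V)}$ gives $(1+B)\sqrt{\tilde{H}_{\La}(A,V)}\phi\in\mathscr{D}\big(\sqrt{\tilde{H}_{\La}(A,V)}\big)$ and hence only the inclusion $\tilde{H}_{\La}^{a}(A,V)\subset\sqrt{\tilde{H}_{\La}(A,V)}(1+B)\sqrt{\tilde{H}_{\La}(A,V)}$ (note that you may not split off $\lan\psi,H_{\La}(A,V)\phi\ran$ separately, since $\phi\in\mathscr{D}(H_{\La}^{a}(A,V))$ is not known to lie in $\mathscr{D}(H_{\La}(A,V))$). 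What is missing is the reverse inclusion: a priori the maximal domain of $\sqrt{\tilde{H}_{\La}(A,V)}(1+B)\sqrt{\tilde{H}_{\La}(A,V)}$ could be strictly larger, and your proposed fix --- ``run the identification on the core $C_{0}^{\infty}(\La)$ and extend by density'' --- does not address this, because density of a form core says nothing about operator domains, and the uniqueness clause of Theorem \ref{representation-theorem} applies only to operators already known to be $m$-sectorial, which the explicitly built operator is not yet known to be. The paper closes exactly this gap by proving that $\sqrt{\tilde{H}_{\La}(A,V)}(1+B)\sqrt{\tilde{H}_{\La}(A,V)}$ is accretive --- this is where \eqref{relative-bounded-estimate} together with the hypothesis $\Xi_{1}<1$ enters, a hypothesis your argument never uses, which is a symptom of the omission --- and then invoking that the $m$-sectorial, hence $m$-accretive, operator $\tilde{H}_{\La}^{a}(A,V)$ admits no proper accretive extension. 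Alternatively one could invoke the stronger form of the first representation theorem (Kato, Theorem VI.2.1(iii)), which converts the weak identity back into membership in $\mathscr{D}(\tilde{H}_{\La}^{a}(A,V))$; but some maximality ingredient of this kind must be supplied, and as written your proof establishes only one inclusion in \eqref{expected-equality}.
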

\begin{proof}
Set
\begin{equation}\label{new-sesquilinear-forms}
\begin{split}
\bar{h}_{\La}^{A,V}(a)&=h_{\La}^{A,V}(a)-h_{\La}^{A,V}\quad\text{on}\quad\QQ(h_{\La}^{A,V}),\\
\tilde{h}_{\La}^{A,V}&=h_{\La}^{A,V}+\Xi_{1}^{-1}\Xi_{2}\quad\text{on}\quad\QQ(h_{\La}^{A,V}).
\end{split}
\end{equation}
Then \eqref{relative-bounded-estimate} can be rewritten as
\begin{equation*}
\big|\bar{h}_{\La}^{A,V}(a)(\phi,\phi)\big|\leq\Xi_{1}\tilde{h}_{\La}^{A,V}(\phi,\phi),\quad\phi\in\QQ(h_{\La}^{A,V}),
\end{equation*}
which implies that $\tilde{h}_{\La}^{A,V}$ is a densely defined, symmetric, nonnegative closed sesquilinear form with the associated nonnegative self-adjoint operator $\tilde{H}_{\La}(A,V)$ defined
in \eqref{associated-operator-2}.

Theorem \ref{representation-theorem-1} then insures that there exists a bounded linear operator $B$ from $L^{2}(\La)$ to itself with $\|B\|\leq2\Xi_{1}$ so that
\begin{equation}\label{representation-of-new-form}
\bar{h}_{\La}^{A,V}(a)(\psi,\phi)=\bigg\lan
\sqrt{\tilde{H}_{\La}(A,V)}\psi,B\sqrt{\tilde{H}_{\La}(A,V)}\phi\bigg\ran
\end{equation}
for $\psi,\phi\in\QQ(h_{\La}^{A,V})=\mathscr{D}\Big(\sqrt{\tilde{H}_{\La}(A,V)}\Big)$. Let
\begin{equation}\label{new-form-1}
\tilde{h}_{\La}^{A,V}(a)=h_{\La}^{A,V}(a)+\Xi_{1}^{-1}\Xi_{2}\quad\text{on}\quad\QQ(h_{\La}^{A,V}).
\end{equation}
Since $h_{\La}^{A,V}(a)$ is a densely defined closed sectorial form, so does $\tilde{h}_{\La}^{A,V}(a)$ and the associated $m$-sectorial operator is given by
\begin{equation}\label{associated-operator-1}
\tilde{H}_{\La}^{a}(A,V)=H_{\La}^{a}(A,V)+\Xi_{1}^{-1}\Xi_{2}.
\end{equation}
Considering \eqref{new-sesquilinear-forms} and \eqref{representation-of-new-form}, we also have
\begin{equation}\label{another-form}
\begin{split}
&\tilde{h}_{\La}^{A,V}(a)(\psi,\phi)\\
&\quad\quad=\tilde{h}_{\La}^{A,V}(\psi,\phi)+\bar{h}_{\La}^{A,V}(a)(\psi,\phi)\\
&\quad\quad=\bigg\lan\sqrt{\tilde{H}_{\La}(A,V)}\psi,\sqrt{\tilde{H}_{\La}(A,V)}\phi\bigg\ran+\bigg\lan\sqrt{\tilde{H}_{\La}(A,V)}\psi,B\sqrt{\tilde{H}_{\La}(A,V)}\phi\bigg\ran\\
&\quad\quad=\bigg\lan\sqrt{\tilde{H}_{\La}(A,V)}\psi,(1+B)\sqrt{\tilde{H}_{\La}(A,V)}\phi\bigg\ran,\quad\psi,\phi\in\QQ(h_{\La}^{A,V}).
\end{split}
\end{equation}

We claim that
\begin{equation}\label{same-operator}
\tilde{H}_{\La}^{a}(A,V)=\sqrt{\tilde{H}_{\La}(A,V)}(1+B)\sqrt{\tilde{H}_{\La}(A,V)}.
\end{equation}
Let
$\phi\in\mathscr{D}(\tilde{H}_{\La}^{a}(A,V))\subset\QQ(\tilde{h}_{\La}^{A,V}(a))=\QQ(h_{\La}^{A,V})$.
We have
\begin{equation*}
\tilde{h}_{\La}^{A,V}(a)(\psi,\phi)=\lan\psi,\tilde{H}_{\La}^{a}(A,V)\phi\ran\,\,\text{for all}\,\,\psi\in\QQ(\tilde{h}_{\La}^{A,V}(a))=\QQ(h_{\La}^{A,V}).
\end{equation*}
Comparing this with \eqref{another-form} and recalling the definition of the adjoint of an operator, we see that $\sqrt{\tilde{H}_{\La}(A,V)}(1+B)\sqrt{\tilde{H}_{\La}(A,V)}\phi$ exists and is equal to $\tilde{H}_{\La}^{a}(A,V)\phi$, which
implies that
\begin{equation*}
\tilde{H}_{\La}^{a}(A,V)\subset\sqrt{\tilde{H}_{\La}(A,V)}(1+B)\sqrt{\tilde{H}_{\La}(A,V)},
\end{equation*}
i.e.,
$\sqrt{\tilde{H}_{\La}(A,V)}(1+B)\sqrt{\tilde{H}_{\La}(A,V)}$ extends $\tilde{H}_{\La}^{a}(A,V)$. To show \eqref{same-operator}, it now suffices to show that $\sqrt{\tilde{H}_{\La}(A,V)}(1+B)\sqrt{\tilde{H}_{\La}(A,V)}$ is accretive since $\tilde{H}_{\La}^{a}(A,V)$ is $m$-sectorial, so has no proper accretive extension. For any
$\psi\in\mathscr{D}\bigg(\sqrt{\tilde{H}_{\La}(A,V)}(1+B)\sqrt{\tilde{H}_{\La}(A,V)}\bigg)\subset\QQ(h_{\La}^{A,V})$,
\eqref{another-form} and \eqref{new-form-1} give
\begin{equation*}
\begin{split}
\bigg\lan
&\psi,\sqrt{\tilde{H}_{\La}(A,V)}(1+B)\sqrt{\tilde{H}_{\La}(A,V)}\psi\bigg\ran\\
&\quad=\tilde{h}_{\La}^{A,V}(a)(\psi,\psi)\\
&\quad=h_{\La}^{A,V}(a)(\psi,\psi)+\Xi_{1}^{-1}\Xi_{2}\|\psi\|^{2}\\
&\quad=h_{\La}^{A,V}(a)(\psi,\psi)-h_{\La}^{A,V}(\psi,\psi)+h_{\La}^{A,V}(\psi,\psi)+\Xi_{1}^{-1}\Xi_{2}\|\psi\|^{2}.
\end{split}
\end{equation*}
It then follows from
\begin{equation*}
\big|\Re\big(h_{\La}^{A,V}(a)(\psi,\psi)-h_{\La}^{A,V}(\psi,\psi)\big)\big|\leq\big|h_{\La}^{A,V}(a)(\psi,\psi)-h_{\La}^{A,V}(\psi,\psi)\big|
\end{equation*}
and \eqref{relative-bounded-estimate} that
\begin{equation*}
\begin{split}
\Re\bigg\lan
&\psi,\sqrt{\tilde{H}_{\La}(A,V)}(1+B)\sqrt{\tilde{H}_{\La}(A,V)}\psi\bigg\ran\\
&\quad=\Re(h_{\La}^{A,V}(a)(\psi,\psi)-h_{\La}^{A,V}(\psi,\psi))+h_{\La}^{A,V}(\psi,\psi)+\Xi_{1}^{-1}\Xi_{2}\|\psi\|^{2}\\
&\quad\geq-(\Xi_{1}h_{\La}^{A,V}(\psi,\psi)+\Xi_{2}\|\psi\|^{2})+h_{\La}^{A,V}(\psi,\psi)+\Xi_{1}^{-1}\Xi_{2}\|\psi\|^{2}\\
&\quad=(1-\Xi_{1})(h_{\La}^{A,V}(\psi,\psi)+\Xi_{1}^{-1}\Xi_{2}\|\psi\|^{2})\\
&\quad\geq0,
\end{split}
\end{equation*}
since $\Xi_{1}$ is taken to be less than $1$ and $h_{\La}^{A,V}+\Xi_{1}^{-1}\Xi_{2}$ is nonnegative by
\eqref{relative-bounded-estimate}. This shows that $\sqrt{\tilde{H}_{\La}(A,V)}(1+B)\sqrt{\tilde{H}_{\La}(A,V)}$ is accretive and, thus, \eqref{same-operator} holds. Obviously, \eqref{expected-equality} is equivalent to \eqref{same-operator} due to \eqref{associated-operator-2} and \eqref{associated-operator-1}. This completes the proof.
\end{proof}

The last lemma bridges the resolvent set of $H_{\La}(A,V)$ and that of $H_{\La}^{a}(A,V)$. Before stating the result, we make following assumptions.

Pick and fix $\la_{0}<\min\{-\Th_{2},E_{0}\}$, where $E_{0}$ is defined in \eqref{E-0}. This number is picked to be of technical use. The main advantage is that $H_{\La}(A,V)-\la_{0}$ is strictly positive so that $(H_{\La}(A,V)-\la_{0})^{\frac{1}{2}}$ is well-defined and boundedly invertible, as opposed to the ill-posedness of the fractional power of $H_{\La}(A,V)-z$, which may cause some troubles.

Let
\begin{equation*}
c_{z,\la_{0}}=\bigg\|\frac{\la-\la_{0}}{\la-z}\bigg\|_{L^{\infty}(\si(H_{\La}(A,V)))}.
\end{equation*}
Suppose that $s>0$ and $a_{0}>0$ satisfy
\begin{equation}\label{condition-on-a-1}
s<\frac{1-\Th_{1}}{4c_{z,\la_{0}}}\quad\text{and}\quad a_{0}^{2}\leq\frac{2s(\la_{0}+\Th_{2})}{\Th_{1}-1}\bigg(\frac{1}{2s}+\frac{s}{4}\bigg)^{-1}
\end{equation}
or
\begin{equation}\label{condition-on-a-2}
\frac{2s(\la_{0}+\Th_{2})}{\Th_{1}-1}\bigg(\frac{1}{2s}+\frac{s}{4}\bigg)^{-1}\leq a_{0}^{2}<\bigg(\frac{(\de-1)\la_{0}}{2c_{z,\la_{0}}}+\frac{2s(\de\la_{0}+\Th_{2})}{\Th_{1}-1}\bigg)\bigg(\frac{1}{2s}+\frac{s}{4}\bigg)^{-1},
\end{equation}
where $\de=\de(\la_{0})\in(0,1)$ is such that $\de\la_{0}\in\big(\la_{0},\min\{-\Th_{2}, E_{0}\}\big)$. We will show the derivation of the above two classes of conditions in Lemma \ref{condition-invertibility} below. We point out that
\begin{equation*}
\frac{2s(\la_{0}+\Th_{2})}{\Th_{1}-1}<\frac{(\de-1)\la_{0}}{2c_{z,\la_{0}}}+\frac{2s(\de\la_{0}+\Th_{2})}{\Th_{1}-1}
\end{equation*}
is nothing but $s<\frac{1-\Th_{1}}{4c_{z,\la_{0}}}$.

\begin{rem}\label{remark-main-theorem}
Note that assumptions \eqref{condition-on-a-1} and \eqref{condition-on-a-2} can be considered together to form a more general one, but we consider them separately anyway for the following two reasons.
\begin{itemize}
\item[(i)] The first reason is about the conditions giving rise to \eqref{condition-on-a-1} and the first inequality in \eqref{condition-on-a-2}. In the proof of Lemma \ref{lemma-invertibility-of-auxiliary-operator} below, we need conditions on $s$ and $a_{0}$ to insure
\begin{equation*}
2\Xi_{1}c_{z,\la_{0}}\bigg\|\frac{\la+\Xi_{1}^{-1}\Xi_{2}}{\la-\la_{0}}\bigg\|_{L^{\infty}(\si(H_{\La}(A,V)))}<1,
\end{equation*}
i.e.,\eqref{expected-condition}, where the quantity
$\big\|\frac{\la+\Xi_{1}^{-1}\Xi_{2}}{\la-\la_{0}}\big\|_{L^{\infty}(\si(H_{\La}(A,V)))}$ appears. It's easy to see that
\begin{equation*}
\bigg\|\frac{\la+\Xi_{1}^{-1}\Xi_{2}}{\la-\la_{0}}\bigg\|_{L^{\infty}(\si(H_{\La}(A,V)))}=
\left\{
\begin{aligned}
1,\quad\text{if}\quad\Xi_{1}^{-1}\Xi_{2}\leq-\la_{0},\\
\frac{\inf\si(H_{\La}(A,V))+\Xi_{1}^{-1}\Xi_{2}}{\inf\si(H_{\La}(A,V))-\la_{0}},\quad\text{if}\quad\Xi_{1}^{-1}\Xi_{2}\geq-\la_{0}.
\end{aligned} \right.
\end{equation*}
Moreover, the second inequality in \eqref{condition-on-a-1} and the first inequality in
\eqref{condition-on-a-2} correspond to $\Xi_{1}^{-1}\Xi_{2}\leq-\la_{0}$ and
$\Xi_{1}^{-1}\Xi_{2}\geq-\la_{0}$, respectively.

\item[(ii)] The second reason is that \eqref{condition-on-a-2} provides a nonzero lower bound for $a_{0}$, and in turn, an upper bound for $e^{-a_{0}|\beta-\ga|}$, which is important in Section \ref{section-operator-kernel-estimate} because we need such an upper bound, of course after being simplified, to estimate some integrals there.
\end{itemize}
\end{rem}

\begin{lem}\label{lemma-invertibility-of-auxiliary-operator}
Let $A\in\HH_{loc}(\R^{d})$, $V\in\KK_{\pm}(\R^{d})$ and $\La\subset\R^{d}$ open. Let $z\in\rho(H_{\La}(A,V))$, the resolvent set of $H_{\La}(A,V)$. Suppose that $s>0$ and $a\in\R^{d}$ satisfying $|a|=a_{0}>0$ obey \eqref{condition-on-a-1} or \eqref{condition-on-a-2}. Then $H_{\La}^{a}(A,V)-z$ is invertible, i.e., $z\in\rho(H_{\La}^{a}(A,V))$, the resolvent set of $H_{\La}^{a}(A,V)$. In other words, $\rho(H_{\La}(A,V))\subset\rho(H_{\La}^{a}(A,V))$.
\end{lem}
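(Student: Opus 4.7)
The plan is to realize $(H_\La^a(A,V)-z)^{-1}$ as a Neumann series built from the free resolvent $(H_\La(A,V)-z)^{-1}$ and the form perturbation supplied by Lemma~\ref{lemma-representation}. Writing $H:=H_\La(A,V)$ and $\tilde H:=\tilde H_\La(A,V)$ for brevity, the operator identity \eqref{expected-equality} reads
\begin{equation*}
H_\La^a - z \,=\, (H-z) \,+\, \sqrt{\tilde H}\,B\,\sqrt{\tilde H},\qquad \|B\|\leq 2\Xi_1,
\end{equation*}
so the candidate for the inverse is
\begin{equation*}
R_z \,:=\, \sum_{n=0}^{\infty}\big[(H-z)^{-1}(-\sqrt{\tilde H}\,B\,\sqrt{\tilde H})\big]^{n}\,(H-z)^{-1}.
\end{equation*}
The job is to justify norm convergence of this series and then to verify that $R_z$ inverts $H_\La^a - z$ on its natural domain.

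The crucial observation is that $\sqrt{\tilde H}$, $\tilde H$ and $(H-z)^{-1}$ are mutually commuting Borel functions of the self-adjoint operator $H$, so $\sqrt{\tilde H}(H-z)^{-1}\sqrt{\tilde H} = \tilde H(H-z)^{-1}$. This identity collapses every internal unbounded factor in each summand to a bounded function of $H$, and for $n\geq 1$ the $n$-th term regroups as
\begin{equation*}
(-1)^{n}\,\big[(H-z)^{-1}\sqrt{\tilde H}\big]\,B\,\big[\tilde H(H-z)^{-1}\,B\big]^{n-1}\,\big[\sqrt{\tilde H}(H-z)^{-1}\big],
\end{equation*}
a product of bounded operators whose norm is dominated by $\|(H-z)^{-1}\sqrt{\tilde H}\|\cdot\|\sqrt{\tilde H}(H-z)^{-1}\|\cdot\|B\|^{n}\cdot\|\tilde H(H-z)^{-1}\|^{n-1}$. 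The spectral theorem evaluates $\|\tilde H(H-z)^{-1}\|=\|(\la+\Xi_1^{-1}\Xi_2)/(\la-z)\|_{L^\infty(\si(H))}$, and identical functional-calculus expressions control the end factors, so the resulting geometric series converges as soon as $2\Xi_1\,\|(\la+\Xi_1^{-1}\Xi_2)/(\la-z)\|_{L^\infty(\si(H))}<1$. Splitting this ratio as $[(\la+\Xi_1^{-1}\Xi_2)/(\la-\la_0)]\cdot[(\la-\la_0)/(\la-z)]$ and bounding each $L^\infty$-norm separately gives the cleaner sufficient condition
\begin{equation*}
(\ast)\qquad 2\Xi_1\,c_{z,\la_0}\,\big\|(\la+\Xi_1^{-1}\Xi_2)/(\la-\la_0)\big\|_{L^\infty(\si(H))}\,<\,1.
\end{equation*}

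It remains to translate $(\ast)$ into the explicit hypotheses \eqref{condition-on-a-1}, \eqref{condition-on-a-2}, and to confirm that $R_z$ is a genuine two-sided inverse. The first step follows the dichotomy of Remark~\ref{remark-main-theorem}: when $\Xi_1^{-1}\Xi_2 \leq -\la_0$ the $L^\infty$-norm collapses to $1$ and $(\ast)$ reduces to $2\Xi_1 c_{z,\la_0}<1$, which, after substituting $\Xi_1=2s/(1-\Th_1)$ and unpacking $\Xi_2$, is exactly \eqref{condition-on-a-1}; in the complementary regime the norm equals $(\inf\si(H)+\Xi_1^{-1}\Xi_2)/(\inf\si(H)-\la_0)$, and using $\inf\si(H)\geq E_0>\de\la_0$ to bound this from above, then solving the resulting inequality for $a_0^2$, recovers precisely the upper bound in \eqref{condition-on-a-2}. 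For the two-sided inverse property I telescope: applying $H_\La^a-z$ to the partial sums of $R_z$ and invoking \eqref{expected-equality} termwise produces $(H_\La^a-z)R_z = 1$, while the symmetric manipulation gives $R_z(H_\La^a-z) = 1$ on $\mathscr{D}(H_\La^a)$. I expect the main subtlety to sit in this last step, since the unbounded operator $\sqrt{\tilde H}$ appears inside every summand; the saving point is that $\mathscr{D}(H_\La^a)\subset\QQ(h_\La^{A,V})=\mathscr{D}(\sqrt{\tilde H})$ by Lemma~\ref{lemma-representation}, so the formal telescoping is justified on the correct domain.
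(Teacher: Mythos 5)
Your proposal follows the paper's strategy in all essentials: both arguments start from the identity $H_{\La}^{a}(A,V)-z=(H-z)+\sqrt{\tilde H}B\sqrt{\tilde H}$ supplied by Lemma~\ref{lemma-representation}, both reduce invertibility to the smallness condition $2\Xi_{1}c_{z,\la_{0}}\big\|(\la+\Xi_{1}^{-1}\Xi_{2})/(\la-\la_{0})\big\|_{L^{\infty}(\si(H))}<1$ (your $(\ast)$ is exactly \eqref{expected-condition}), and both check that \eqref{condition-on-a-1} and \eqref{condition-on-a-2} imply it via the dichotomy $\Xi_{1}^{-1}\Xi_{2}\lessgtr-\la_{0}$, which the paper isolates as Lemma~\ref{condition-invertibility}. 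The one substantive difference is where the Neumann series is run, and this is where your version has a gap. The paper first symmetrizes, writing $H_{\La}^{a}-z=(H-\la_{0})^{1/2}(U+V)(H-\la_{0})^{1/2}$ with $U=(H-z)(H-\la_{0})^{-1}$ and $V=(H-\la_{0})^{-1/2}\sqrt{\tilde H}B\sqrt{\tilde H}(H-\la_{0})^{-1/2}$ both bounded; Kato's stability theorem inverts $U+V$ under $\|V\|\|U^{-1}\|<1$, and the inverse of $H_{\La}^{a}-z$ is then manifestly $(H-\la_{0})^{-1/2}(U+V)^{-1}(H-\la_{0})^{-1/2}$, with no further domain bookkeeping. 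You instead expand directly in the unbounded setting; the functional-calculus collapse of the internal factors and the resulting convergence condition are fine, but your closing ``telescoping'' verification that $R_{z}$ is a two-sided inverse does not go through termwise. The domain of the $m$-sectorial operator $H_{\La}^{a}(A,V)$ is \emph{not} contained in $\mathscr{D}(H)$: by \eqref{same-operator} only the combination $\sqrt{\tilde H}(1+B)\sqrt{\tilde H}\phi$ is defined on it, not the summands $(H-z)\phi$ and $\sqrt{\tilde H}B\sqrt{\tilde H}\phi$ separately, so splitting $(H_{\La}^{a}-z)\phi$ term by term is not justified; symmetrically, the range of each partial sum of $R_{z}$ need not lie in $\mathscr{D}(H_{\La}^{a})$, so applying $H_{\La}^{a}-z$ to the partial sums is also not justified. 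The inclusion $\mathscr{D}(H_{\La}^{a})\subset\QQ(h_{\La}^{A,V})=\mathscr{D}(\sqrt{\tilde H})$ that you invoke is too weak to repair either direction. The fix is either to resum your series as $(H-z)^{-1}-GB(1+KB)^{-1}G$ with $G=\sqrt{\tilde H}(H-z)^{-1}$ and $K=\tilde H(H-z)^{-1}$ and verify algebraically that this coincides with $(H-\la_{0})^{-1/2}(U+V)^{-1}(H-\la_{0})^{-1/2}$, or simply to adopt the paper's sandwiching by $(H-\la_{0})^{\pm1/2}$ from the outset, which is precisely the device that makes the two-sided inverse property automatic.
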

\begin{proof}
By \eqref{expected-equality}, we have
\begin{equation}\label{an-equality}
\begin{split}
H_{\La}^{a}(A,V)-z&=H_{\La}(A,V)-z+\sqrt{\tilde{H}_{\La}(A,V)}B\sqrt{\tilde{H}_{\La}(A,V)}\\
&=(H_{\La}(A,V)-\la_{0})^{\frac{1}{2}}(U+V)(H_{\La}(A,V)-\la_{0})^{\frac{1}{2}},
\end{split}
\end{equation}
where
\begin{equation*}
\begin{split}
U&=(H_{\La}(A,V)-\la_{0})^{-\frac{1}{2}}(H_{\La}(A,V)-z)(H_{\La}(A,V)-\la_{0})^{-\frac{1}{2}}\\
&=(H_{\La}(A,V)-z)(H_{\La}(A,V)-\la_{0})^{-1}\\
\end{split}
\end{equation*}
and
\begin{equation*}
V=(H_{\La}(A,V)-\la_{0})^{-\frac{1}{2}}\sqrt{\tilde{H}_{\La}(A,V)}B\sqrt{\tilde{H}_{\La}(A,V)}(H_{\La}(A,V)-\la_{0})^{-\frac{1}{2}}.
\end{equation*}
Since $(H_{\La}(A,V)-\la_{0})^{\frac{1}{2}}$ is invertible, invertibility of $H_{\La}^{a}(A,V)-z$ is equivalent to that of $U+V$.

We claim that $U+V$ is invertible under the assumption of the lemma with
\begin{equation}\label{bound-of-inverse-of-u+v}
\|(U+V)^{-1}\|\leq
\left\{
\begin{aligned}
\frac{c_{z,\la_{0}}(1-\Th_{1})}{1-\Th_{1}-4sc_{z,\la_{0}}},\quad\text{if}\,\,
a_{0}\,\,\text{satisfies}\,\,\eqref{condition-on-a-1},\\
\frac{(\de-1)\la_{0}c_{z,\la_{0}}}{(\de-1)\la_{0}-2(\de\la_{0}\Xi_{1}+\Xi_{2})c_{z,\la_{0}}},\quad\text{if}\,\,a_{0}\,\,\text{satisfies}\,\,\eqref{condition-on-a-2}.
\end{aligned}
\right.
\end{equation}

Obviously, $U$ is bounded and invertible with $U^{-1}=(H_{\La}(A,V)-\la_{0})(H_{\La}(A,V)-z)^{-1}$. Recall that $\tilde{H}_{\La}(A,V)=H_{\La}(A,V)+\Xi_{1}^{-1}\Xi_{2}\geq0$. Since $\sqrt{\frac{\la+\Xi_{1}^{-1}\Xi_{2}}{\la-\la_{0}}}$ (as a function of $\la$) is bounded on $\si(H_{\La}(A,V))$, both $(H_{\La}(A,V)-\la_{0})^{-\frac{1}{2}}\sqrt{\tilde{H}_{\La}(A,V)}$ and $\sqrt{\tilde{H}_{\La}(A,V)}(H_{\La}(A,V)-\la_{0})^{-\frac{1}{2}}$ are bounded, which implies that $V$ is bounded. Then, by stability of bounded invertibility (see \cite[Theorem IV.1.16]{K76}), it suffices
to require that $\|V\|\|U^{-1}\|<1$. In which case, $U+V$ is invertible with
\begin{equation}\label{bound-of-inverse}
\|(U+V)^{-1}\|\leq\frac{\|U^{-1}\|}{1-\|V\|\|U^{-1}\|}.
\end{equation}
Since $\|U^{-1}\|\leq c_{z,\la_{0}}$ and
\begin{equation*}
\|V\|\leq\|B\|\bigg\|\sqrt{\frac{\la+\Xi_{1}^{-1}\Xi_{2}}{\la-\la_{0}}}\bigg\|^{2}_{L^{\infty}(\si(H_{\La}(A,V)))}\leq2\Xi_{1}\bigg\|\frac{\la+\Xi_{1}^{-1}\Xi_{2}}{\la-\la_{0}}\bigg\|_{L^{\infty}(\si(H_{\La}(A,V)))},
\end{equation*}
it suffices to require
\begin{equation}\label{expected-condition}
2\Xi_{1}c_{z,\la_{0}}\bigg\|\frac{\la+\Xi_{1}^{-1}\Xi_{2}}{\la-\la_{0}}\bigg\|_{L^{\infty}(\si(H_{\La}(A,V)))}<1.
\end{equation}
It is justified in Lemma \ref{condition-invertibility} below that if $s$ and $a$ are as in the assumptions of the current lemma, then \eqref{expected-condition} holds, which then implies that
$U+V$, and hence $H_{\Lambda}^{a}(A,V)-z$, is invertible. Finally, \eqref{bound-of-inverse-of-u+v} follows from \eqref{bound-of-inverse} and Lemma \ref{condition-invertibility} below.
\end{proof}

To finish the proof of Lemma \ref{lemma-invertibility-of-auxiliary-operator}, we show

\begin{lem}\label{condition-invertibility}
Let $z\in\rho(H_{\La}(A,V))$. If $s>0$ and $a_{0}>0$ satisfy \eqref{condition-on-a-1} or \eqref{condition-on-a-2}, then \eqref{expected-condition} holds. Moreover, if \eqref{condition-on-a-1} is satisfied, then $\big\|\frac{\la+\Xi_{1}^{-1}\Xi_{2}}{\la-\la_{0}}\big\|_{L^{\infty}(\si(H_{\La}(A,V)))}=1$ and if \eqref{condition-on-a-2} is satisfied, then $\big\|\frac{\la+\Xi_{1}^{-1}\Xi_{2}}{\la-\la_{0}}\big\|_{L^{\infty}(\si(H_{\La}(A,V)))}\leq\frac{\de\la_{0}+\Xi_{1}^{-1}\Xi_{2}}{(\de-1)\la_{0}}$ for some $\de=\de(\la_{0})\in(0,1)$ satisfying $\de\la_{0}\in\big(\la_{0},\min\{-\Th_{2}, E_{0}\}\big)$.
\end{lem}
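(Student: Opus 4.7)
The plan is to reduce \eqref{expected-condition} to a one-variable monotonicity analysis of the scalar function
\[
\varphi(\la) := \frac{\la + c}{\la - \la_0}, \qquad c := \Xi_{1}^{-1}\Xi_{2},
\]
restricted to $\si(H_{\La}(A,V)) \subset [E_{*},\infty)$, where $E_{*} := \inf\si(H_{\La}(A,V))$. A direct computation gives $c = \Th_{2} + \tfrac{1-\Th_{1}}{2s}\bigl(\tfrac{1}{2s}+\tfrac{s}{4}\bigr)a_{0}^{2} \geq \Th_{2}$, and from \eqref{relative-form-bound} one has $h_{\La}^{A,V} \geq (1-\Th_{1})h_{\La}^{A,0} - \Th_{2}$, so $E_{*} \geq -\Th_{2}$. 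Hence on $[E_{*},\infty)$ both $\la - \la_{0} > 0$ and $\la + c \geq 0$, so $\varphi \geq 0$; moreover $\varphi'(\la) = (-\la_{0}-c)/(\la-\la_{0})^{2}$ has the sign of $-\la_{0}-c$, so $\varphi$ is nondecreasing with limit $1$ at $+\infty$ when $c \leq -\la_{0}$, and strictly decreasing when $c > -\la_{0}$.

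The next step is to translate the two hypotheses into the sign of $-\la_{0}-c$. A routine rearrangement, using only the explicit formula for $c$ above, shows that the second inequality in \eqref{condition-on-a-1} is equivalent to $c \leq -\la_{0}$, while the first inequality in \eqref{condition-on-a-2} is equivalent to $c \geq -\la_{0}$. In the regime of \eqref{condition-on-a-1} one therefore has $\|\varphi\|_{L^{\infty}(\si(H_{\La}(A,V)))} = 1$, and \eqref{expected-condition} collapses to $2\Xi_{1}c_{z,\la_{0}} = 4s\,c_{z,\la_{0}}/(1-\Th_{1}) < 1$, which is precisely the first inequality in \eqref{condition-on-a-1}; this disposes of the first case.

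For the regime of \eqref{condition-on-a-2}, I would produce $\de \in (0,1)$ with $\de\la_{0} \in (\la_{0}, \min\{-\Th_{2},E_{0}\})$; since $\la_{0} < 0$ and $\la_{0} < \min\{-\Th_{2},E_{0}\}$, any $\de$ slightly less than $\min\{-\Th_{2},E_{0}\}/\la_{0}$ (but positive) works. Since $E_{*} \geq -\Th_{2} > \de\la_{0}$ and $\varphi$ is decreasing on $[E_{*},\infty)$, monotonicity yields
\[
\|\varphi\|_{L^{\infty}(\si(H_{\La}(A,V)))} \leq \varphi(\de\la_{0}) = \frac{\de\la_{0}+c}{(\de-1)\la_{0}},
\]
which is the second bound announced in the lemma (the denominator is positive because $\de-1 < 0$ and $\la_{0} < 0$). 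Substituting this bound into \eqref{expected-condition}, multiplying through by the positive quantity $(\de-1)\la_{0}$, expanding $c$ and isolating the $a_{0}^{2}$ contribution reproduces the second inequality in \eqref{condition-on-a-2} exactly; this establishes \eqref{expected-condition} in the second case.

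The main obstacle is bookkeeping rather than analytic: the quantities $\la_{0}$, $\la_{0}+\Th_{2}$, $\Th_{1}-1$ and $\de-1$ are all negative, so every rearrangement carries a sign flip, and the equivalence between the algebraic hypotheses on $a_{0}^{2}$ and the monotonicity cases for $\varphi$ must be tracked with care. Once this translation is in place, the lemma follows with no further analytic input.
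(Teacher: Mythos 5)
Your proof is correct and follows essentially the same route as the paper: reduce \eqref{expected-condition} to the sign/monotonicity analysis of $\la\mapsto\frac{\la+\Xi_{1}^{-1}\Xi_{2}}{\la-\la_{0}}$ on the spectrum, split according to whether $\Xi_{1}^{-1}\Xi_{2}\leq-\la_{0}$ or $\geq-\la_{0}$, and check that each case of the hypotheses on $(s,a_{0})$ translates exactly into the corresponding bound. Two cosmetic points that do not affect the argument: the equality $\big\|\frac{\la+\Xi_{1}^{-1}\Xi_{2}}{\la-\la_{0}}\big\|_{L^{\infty}(\si(H_{\La}(A,V)))}=1$ in the first case also uses that $\si(H_{\La}(A,V))$ is unbounded above (the paper notes this explicitly), and when $\min\{-\Th_{2},E_{0}\}<0$ the admissible $\de$ must be slightly \emph{greater} than $\min\{-\Th_{2},E_{0}\}/\la_{0}$, not less, since dividing $\de\la_{0}<\min\{-\Th_{2},E_{0}\}$ by $\la_{0}<0$ reverses the inequality.
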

\begin{proof}
Instead of proving \eqref{expected-condition} directly, we show how to derive \eqref{condition-on-a-1} or \eqref{condition-on-a-2} so that  \eqref{expected-condition} holds. Recall $\la_{0}<\min\{-\Th_{2},E_{0}\}$, $\Xi_{1}=\frac{2s}{1-\Th_{1}}$, $\Xi_{2}=\frac{2s\Th_{2}}{1-\Th_{1}}+\big(\frac{1}{2s}+\frac{s}{4}\big)a_{0}^{2}$
and $c_{z,\la_{0}}=\big\|\frac{\la-\la_{0}}{\la-z}\big\|_{L^{\infty}(\si(H_{\La}(A,V)))}$. We here discuss two classes of conditions separated by $\Xi_{1}^{-1}\Xi_{2}=-\la_{0}$.

$\rm(i)$ Due to the fact that $\si(H_{\La}(A,V))$ contains a sequence tending to infinity, there holds $\big\|\frac{\la+\Xi_{1}^{-1}\Xi_{2}}{\la-\la_{0}}\big\|_{L^{\infty}(\si(H_{\La}(A,V)))}\geq1$. So the best choice is $\big\|\frac{\la+\Xi_{1}^{-1}\Xi_{2}}{\la-\la_{0}}\big\|_{L^{\infty}(\si(H_{\La}(A,V)))}=1$, which holds if and only if $\Xi_{1}^{-1}\Xi_{2}\leq-\la_{0}$
since $\inf\si(H_{\La}(A,V))+\Xi_{1}^{-1}\Xi_{2}\geq0$. By making $a_{0}$ small enough, the condition $\Xi_{1}^{-1}\Xi_{2}\leq-\la_{0}$ is readily satisfied. Thus \eqref{expected-condition} reduces to
\begin{equation}\label{app-expected-condition-reduced}
2\Xi_{1}c_{z,\la_{0}}<1.
\end{equation}
Note $\lim_{\la\ra\infty}\big|\frac{\la-\la_{0}}{\la-z}\big|=1$ pointwise in $z\in\rho(H_{\La}(A,V))$ and $\la_{0}<\min\{-\Th_{2},E_{0}\}$, which implies that $c_{z,\la_{0}}\geq1$. Hence, if \eqref{app-expected-condition-reduced} holds, then automatically, $\Xi_{1}<\frac{1}{2}<1$.

For any fixed $z\in\rho(H_{\La}(A,V))$ and $\la_{0}<\min\{-\Th_{2},E_{0}\}$, there exists $s$ such that \eqref{app-expected-condition-reduced} is satisfied. Moreover, $s$ can not be chosen to be independent of $z$ or $\la_{0}$ because of the facts that
\begin{equation*}
\lim_{\substack{z\in\rho(H_{\La}(A,V))\\\text{dist}(z,\si(H_{\La(A,V)))\ra0}}}c_{z,\la_{0}}=\infty\quad\text{pointwise in}\,\,\la_{0}<\min\{-\Th_{2},E_{0}\}.
\end{equation*}
or
\begin{equation*}
\lim_{\la_{0}\ra-\infty}c_{z,\la_{0}}=\infty\quad\text{pointwise in}\,\,z\in\rho(H_{\La}(A,V)),
\end{equation*}
respectively.

Explicitly, we can choose $s$ to be any number satisfying
\begin{equation}\label{condition-on-s}
s\in\bigg(0,\frac{1-\Th_{1}}{4c_{z,\la_{0}}}\bigg)\subset\bigg(0,\frac{1-\Th_{1}}{2}\bigg)
\end{equation}
so that \eqref{app-expected-condition-reduced} is satisfied, so $\Xi_{1}<1$ holds. Then, by requiring $a_{0}$ to satisfy
\begin{equation}\label{condition-on-a}
a_{0}^{2}\leq\frac{2s(\la_{0}+\Th_{2})}{\Theta_{1}-1}\bigg(\frac{1}{2s}+\frac{s}{4}\bigg)^{-1},
\end{equation}
the condition $\Xi_{1}^{-1}\Xi_{2}\leq-\la_{0}$ holds. Consequently, any pair $(s,a_{0})$ satisfying \eqref{condition-on-s} and \eqref{condition-on-a} guarantees \eqref{expected-condition}.

$\rm(ii)$ Now, we require $\Xi_{1}^{-1}\Xi_{2}\geq-\la_{0}$. Then,
$\frac{\la+\Xi_{1}^{-1}\Xi_{2}}{\la-\la_{0}}$, as a function of $\la$, is decreasing on $(\la_{0},\infty)$,
which implies
\begin{equation*}
\bigg\|\frac{\la+\Xi_{1}^{-1}\Xi_{2}}{\la-\la_{0}}\bigg\|_{L^{\infty}(\si(H_{\La}(A,V)))}\leq\frac{\la_{*}+\Xi_{1}^{-1}\Xi_{2}}{\la_{*}-\la_{0}},\quad\forall\la_{*}\in\big(\la_{0},\min\{-\Th_{2},E_{0}\}\big).
\end{equation*}
In particular, we can take $\la_{*}=\de\la_{0}$ for some $\de=\de(\la_{0})\in(0,1)$ and obtain
\begin{equation*}
\bigg\|\frac{\la+\Xi_{1}^{-1}\Xi_{2}}{\la-\la_{0}}\bigg\|_{L^{\infty}(\si(H_{\La}(A,V)))}\leq\frac{\de\la_{0}+\Xi_{1}^{-1}\Xi_{2}}{(\de-1)\la_{0}}.
\end{equation*}
Then, $2\Xi_{1}c_{z,\la_{0}}\frac{\de\la_{0}+\Xi_{1}^{-1}\Xi_{2}}{(\de-1)\la_{0}}<1$, i.e, $\Xi_{2}<\frac{(\de-1)\la_{0}}{2c_{z,\la_{0}}}-\de\la_{0}\Xi_{1}$, will ensure \eqref{expected-condition}. Moreover, considering the assumption $\Xi_{1}^{-1}\Xi_{2}\geq-\lambda_{0}$, we deduce $2\Xi_{1}c_{z,\la_{0}}<1$, which leads to $\Xi_{1}<\frac{1}{2}$ since $c_{z,\la_{0}}\geq1$. In conclusion, to ensure \eqref{expected-condition}, we only need to require
\begin{equation*}
-\la_{0}\Xi_{1}\leq\Xi_{2}<\frac{(\de-1)\la_{0}}{2c_{z,\la_{0}}}-\de\la_{0}\Xi_{1}.
\end{equation*}

Explicitly, if $s>0$ and $a_{0}>0$ satisfy
\begin{equation*}
\frac{2s(\la_{0}+\Th_{2})}{\Th_{1}-1}\bigg(\frac{1}{2s}+\frac{s}{4}\bigg)^{-1}\leq a_{0}^{2}<\bigg(\frac{(\de-1)\la_{0}}{2c_{z,\la_{0}}}+\frac{2s(\de\la_{0}+\Th_{2})}{\Th_{1}-1}\bigg)\bigg(\frac{1}{2s}+\frac{s}{4}\bigg)^{-1}
\end{equation*}
for some  $\de=\de(\la_{0},\Th_{2},E_{0})\in(0,1)$ such that $\de\la_{0}\in\big(\la_{0},\min\{-\Th_{2},E_{0}\}\big)$, then $\Xi_{1}<1$ and \eqref{expected-condition} holds.
\end{proof}

We proceed to prove Theorem \ref{main-theorem}. Since the proof in the case $n\geq2$ is based on the proof in the case $n=1$, we divide Theorem \ref{main-theorem} into two parts according to $n=1$ and $n\geq2$. Moreover, we restate the theorem in the cases $n=1$ and $n\geq2$ as Theorem \ref{theorem-decay-estimate} and Theorem \ref{corollary-decay-estimate} below, respectively. For notational simplicity, we set
\begin{equation}\label{bound-of-inverse-of-u+v-const}
C_{*}=
\left\{
\begin{aligned}
\frac{c_{z,\la_{0}}(1-\Th_{1})}{1-\Th_{1}-4sc_{z,\la_{0}}},\quad\text{if}\,\,
a_{0}\,\,\text{satisfies}\,\,\eqref{condition-on-a-1},\\
\frac{(\de-1)\la_{0}c_{z,\la_{0}}}{(\de-1)\la_{0}-2(\de\la_{0}\Xi_{1}+\Xi_{2})c_{z,\la_{0}}},\quad\text{if}\,\,a_{0}\,\,\text{satisfies}\,\,\eqref{condition-on-a-2}.
\end{aligned}
\right.
\end{equation}
Then, $\|(U+V)^{-1}\|\leq C_{*}$.

\begin{thm}\label{theorem-decay-estimate}
Let $A\in\mathcal{H}_{loc}(\R^{d})$, $V\in\mathcal{K}_{\pm}(\R^{d})$ and $\La\subset\R^{d}$ open. Suppose $p>\frac{d}{2}$. Let $z\in\rho(H_{\La}(A,V))$, the resolvent set of $H_{\La}(A,V)$. Suppose that $s>0$ and $a_{0}>0$ satisfy \eqref{condition-on-a-1} or \eqref{condition-on-a-2}.  Then, for any $\beta,\ga\in\R^{d}$,
\begin{equation}\label{expected-estimate}
\|\chi_{\beta}(H_{\La}(A,V)-z)^{-1}\chi_{\ga}\|_{\mathcal{J}_{p}}\leq C_{p,\la_{0}}C_{*}e^{\sqrt{d}a_{0}}e^{-a_{0}|\beta-\ga|},
\end{equation}
where $C_{p,\la_{0}}>0$ depends only on $p$ and $\la_{0}$.
\end{thm}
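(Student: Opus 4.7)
The plan is to carry out the classical Combes--Thomas conjugation argument in Schatten class, using the two technical inputs already at hand: the factorization of $H_{\La}^{a}(A,V)-z$ with a bounded inverse of the middle factor from Lemma \ref{lemma-invertibility-of-auxiliary-operator}, and the trace ideal estimate of Theorem \ref{theorem-trace-ideal-estimate} applied to a fractional resolvent power.

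First I would insert $e^{-a\cdot x}e^{a\cdot x}$ before and after $(H_{\La}(A,V)-z)^{-1}$ to write
\[
\chi_{\beta}(H_{\La}(A,V)-z)^{-1}\chi_{\ga}
=\bigl(\chi_{\beta}e^{-a\cdot x}\bigr)\,(H_{\La}^{a}(A,V)-z)^{-1}\,\bigl(e^{a\cdot x}\chi_{\ga}\bigr),
\]
which is the formal content of the definition \eqref{auxiliary-operator} but needs to be justified via the sectorial-form construction in Lemma \ref{lemma-sectorial}. Next I would invoke the identity \eqref{an-equality} to get
\[
(H_{\La}^{a}(A,V)-z)^{-1}
=(H_{\La}(A,V)-\la_{0})^{-\frac{1}{2}}(U+V)^{-1}(H_{\La}(A,V)-\la_{0})^{-\frac{1}{2}},
\]
with $\|(U+V)^{-1}\|\leq C_{*}$ by \eqref{bound-of-inverse-of-u+v} and \eqref{bound-of-inverse-of-u+v-const}; the hypotheses on $s$ and $a_{0}$ are exactly what makes this inverse exist.

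Then I would split the three-factor product by the Schatten--Hölder inequality with exponents $\frac{1}{2p}+\frac{1}{\infty}+\frac{1}{2p}=\frac{1}{p}$:
\[
\|\chi_{\beta}(H_{\La}(A,V)-z)^{-1}\chi_{\ga}\|_{\JJ_{p}}
\leq \bigl\|g_{\beta}(H_{\La}(A,V)-\la_{0})^{-\frac{1}{2}}\bigr\|_{\JJ_{2p}}\,\|(U+V)^{-1}\|\,\bigl\|(H_{\La}(A,V)-\la_{0})^{-\frac{1}{2}}g_{\ga}\bigr\|_{\JJ_{2p}},
\]
where $g_{\beta}=\chi_{\beta}e^{-a\cdot x}$ and $g_{\ga}=e^{a\cdot x}\chi_{\ga}$. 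Each outer factor is then estimated by Theorem \ref{theorem-trace-ideal-estimate} applied to $f(\la)=(\la-\la_{0})^{-1/2}$, which corresponds to $\al=\tfrac{1}{2}$; the hypothesis $\al>\tfrac{d}{2\cdot 2p}$ is exactly $p>\tfrac{d}{2}$, and $\|(H_{\La}(A,V)-\la_0)^{\al}f(H_{\La}(A,V))\|=1$. For the right-hand factor I would use that the Schatten norm is preserved under taking adjoints and that $f$ is real. Thus each trace norm is dominated by $C_{p,\la_{0}}\|g_{\beta/\ga}\|_{2p}$. Since the unit cube $Q_{\beta}$ has unit Lebesgue measure and $|a\cdot(x-\beta)|\leq \tfrac{\sqrt{d}}{2}a_{0}$ on $Q_{\beta}$, I get $\|g_{\beta}\|_{2p}\leq e^{-a\cdot\beta+\frac{\sqrt{d}}{2}a_{0}}$ and similarly $\|g_{\ga}\|_{2p}\leq e^{a\cdot\ga+\frac{\sqrt{d}}{2}a_{0}}$.

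Finally I would choose $a\in\R^{d}$ with $|a|=a_{0}$ and direction $(\beta-\ga)/|\beta-\ga|$ (and any direction if $\beta=\ga$), so that $a\cdot(\ga-\beta)=-a_{0}|\beta-\ga|$. Multiplying the three bounds produces the claimed exponential factor $e^{\sqrt{d}a_{0}}e^{-a_{0}|\beta-\ga|}$, with constant $C_{p,\la_{0}}^{2}C_{*}$, absorbed into a single $C_{p,\la_{0}}C_{*}$ in the statement. The one subtle step, and the only place requiring genuine care beyond bookkeeping, is justifying the formal conjugation identity so that the sectorial operator $H_{\La}^{a}(A,V)$ of Lemma \ref{lemma-sectorial} really realizes $e^{a\cdot x}H_{\La}(A,V)e^{-a\cdot x}$ between the form domains; once that is in hand the rest reduces to combining the preparatory lemmas.
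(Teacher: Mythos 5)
Your proposal is correct and follows essentially the same route as the paper: factor $(H_{\La}^{a}(A,V)-z)^{-1}$ via \eqref{an-equality} with $\|(U+V)^{-1}\|\leq C_{*}$, apply Theorem \ref{theorem-trace-ideal-estimate} with $f(\la)=(\la-\la_{0})^{-1/2}$ (i.e.\ $\al=\tfrac12>\tfrac{d}{4p}$) together with Schatten--H\"older, and choose $a=a_{0}|\beta-\ga|^{-1}(\beta-\ga)$. The only cosmetic difference is that you absorb the exponential weights into the functions $g_{\beta},g_{\ga}$ fed to Theorem \ref{theorem-trace-ideal-estimate}, whereas the paper first bounds $\|\chi_{\beta}(H_{\La}^{a}(A,V)-z)^{-1}\chi_{\ga}\|_{\JJ_{p}}$ and then multiplies by the bounded operators $e^{-a\cdot(x-\beta)}\chi_{\beta}$ and $\chi_{\ga}e^{a\cdot(x-\ga)}$; the two bookkeepings are equivalent.
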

\begin{proof}
By Lemma \ref{lemma-invertibility-of-auxiliary-operator} and the operator equality \eqref{an-equality}, we have
\begin{equation*}
\chi_{\beta}(H_{\La}^{a}(A,V)-z)^{-1}\chi_{\ga}=\chi_{\beta}(H_{\La}(A,V)-\la_{0})^{-\frac{1}{2}}(U+V)^{-1}(H_{\La}(A,V)-\la_{0})^{-\frac{1}{2}}\chi_{\ga}.
\end{equation*}
Since the function $(\la-\la_{0})^{-\frac{1}{2}}$ satisfies \eqref{assumption-f} with $\alpha=\frac{1}{2}$, $\frac{1}{2}>\frac{d}{2\cdot2p}$ and $2p>d\geq2$, Theorem
\ref{theorem-trace-ideal-estimate} insures that both $\chi_{\beta}(H_{\La}(A,V)-\la_{0})^{-\frac{1}{2}}$ and $(H_{\La}(A,V)-\la_{0})^{-\frac{1}{2}}\chi_{\ga}$ are in $\JJ_{2p}$ with $\mathcal{J}_{2p}$-norm only depending on $p$ and $\la_{0}$. It then follows that $\chi_{\beta}(H_{\La}^{a}(A,V)-z)^{-1}\chi_{\ga}\in\JJ_{p}$
with
\begin{equation*}
\begin{split}
&\|\chi_{\beta}(H_{\La}^{a}(A,V)-z)^{-1}\chi_{\ga}\|_{\JJ_{p}}\\
&\quad\quad\leq\|\chi_{\beta}(H_{\La}(A,V)-\la_{0})^{-\frac{1}{2}}\|_{\JJ_{2p}}\|(U+V)^{-1}\|\cdot\|(H_{\La}(A,V)-\la_{0})^{-\frac{1}{2}}\chi_{\ga}\|_{\JJ_{2p}}\\
&\quad\quad\leq C_{p,\la_{0}}C_{*}
\end{split}
\end{equation*}
where \eqref{bound-of-inverse-of-u+v} and \eqref{bound-of-inverse-of-u+v-const} are used and $C_{p,\la_{0}}>0$ only depends on $p$ and $\la_{0}$. Considering \eqref{auxiliary-operator}, we obtain
\begin{equation*}
\begin{split}
&\|\chi_{\beta}(H_{\La}(A,V)-z)^{-1}\chi_{\ga}\|_{\JJ_{p}}\\
&\quad\quad=\|\chi_{\beta}e^{-a\cdot x}(H_{\La}^{a}(A,V)-z)^{-1}e^{a\cdot
x}\chi_{\ga}\|_{\JJ_{p}}\\
&\quad\quad=\|e^{-a\cdot(\beta-\ga)}\big(e^{-a\cdot(x-\beta)}\chi_{\beta}\big)\big(\chi_{\beta}(H_{\La}^{a}(A,V)-z)^{-1}\chi_{\ga}\big)\big(\chi_{\ga}e^{a\cdot(x-\ga)}\big)\|_{\JJ_{p}}\\
&\quad\quad\leq\|\chi_{\beta}(H_{\La}^{a}(A,V)-z)^{-1}\chi_{\ga}\|_{\mathcal{J}_{p}}\|e^{-a\cdot(x-\beta)}\chi_{\beta}\|\cdot\|\chi_{\ga}e^{a\cdot(x-\ga)}\|e^{-a\cdot(\beta-\ga)}\\
&\quad\quad\leq C_{p,\la_{0}}C_{*}e^{\sqrt{d}a_{0}}e^{-a_{0}|\beta-\ga|}
\end{split}
\end{equation*}
where we used the fact that both $\|e^{-a\cdot(x-\beta)}\chi_{\beta}\|$ and $\|\chi_{\ga}e^{a\cdot(x-\ga)}\|$ are bounded by $e^{\frac{\sqrt{d}}{2}|a|}$. By choosing $a=a_{0}|\beta-\ga|^{-1}(\beta-\ga)$, we find \eqref{expected-estimate}.
\end{proof}

\begin{thm}\label{corollary-decay-estimate}
Let $A\in\HH_{loc}(\R^{d})$, $V\in\KK_{\pm}(\R^{d})$ and $\La\subset\R^{d}$ open. Suppose $p>\frac{d}{2n}$ with $n\in\N$ and $n\geq2$. Let $z\in\rho(H_{\La}(A,V))$. Suppose that $s>0$ and $a_{0}>0$ satisfy \eqref{condition-on-a-1} or \eqref{condition-on-a-2}. Then, for any $\de_{0}\in(0,1)$ and any $\beta,\ga\in\R^{d}$, there holds
\begin{equation*}
\|\chi_{\beta}(H_{\La}(A,V)-z)^{-n}\chi_{\ga}\|_{\JJ_{p}}\leq(C_{p,n,\la_{0}}c_{\de_{0},a_{0}}C_{*})^{n-1}e^{(n-1)\sqrt{d}a_{0}}e^{-\de_{0}a_{0}|\beta-\ga|},
\end{equation*}
where $C_{p,n,\la_{0}}>0$ only depends on $p$, $n$ and $\la_{0}$ and $c_{\de_{0},a_{0}}=\sum_{\al\in\Z^{d}}e^{-(1-\de_{0})a_{0}|\al|}<\infty$.
\end{thm}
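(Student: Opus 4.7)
The plan is to argue by induction on $n$, with the base case $n=1$ being Theorem \ref{theorem-decay-estimate}. For the inductive step (assuming the bound for $n-1$), I would exploit the factorization $(H_\La(A,V)-z)^{-n} = (H_\La(A,V)-z)^{-1}(H_\La(A,V)-z)^{-(n-1)}$ and insert the a.e.\ partition of unity $\sum_{\alpha\in\Z^d}\chi_\alpha = 1$ between the two factors. Using $\chi_\alpha^2 = \chi_\alpha$, this yields
\begin{equation*}
\chi_\beta(H_\La(A,V)-z)^{-n}\chi_\gamma = \sum_{\alpha\in\Z^d}\bigl[\chi_\beta(H_\La(A,V)-z)^{-1}\chi_\alpha\bigr]\bigl[\chi_\alpha(H_\La(A,V)-z)^{-(n-1)}\chi_\gamma\bigr].
\end{equation*}

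Next I would apply H\"older's inequality for Schatten classes, $\|AB\|_{\JJ_p}\le\|A\|_{\JJ_{p_1}}\|B\|_{\JJ_{p_2}}$ with $\tfrac{1}{p}=\tfrac{1}{p_1}+\tfrac{1}{p_2}$, choosing $p_1 > \tfrac{d}{2}$ and $p_2 > \tfrac{d}{2(n-1)}$. The hypothesis $p > \tfrac{d}{2n}$ makes this split possible since $\tfrac{1}{p}<\tfrac{2n}{d} = \tfrac{2}{d}+\tfrac{2(n-1)}{d}$. Theorem \ref{theorem-decay-estimate} applied to the first bracket (in $\JJ_{p_1}$) produces the $e^{-a_0|\beta-\alpha|}$ decay with a factor of $C_*\,e^{\sqrt{d}a_0}$, and the inductive hypothesis applied to the second bracket (in $\JJ_{p_2}$) produces $e^{-\delta_0 a_0|\alpha-\gamma|}$ together with $(n-2)$ accumulated factors of $c_{\delta_0,a_0}C_*\,e^{\sqrt{d}a_0}$.

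It then remains to control
\begin{equation*}
\sum_{\alpha\in\Z^d}e^{-a_0|\beta-\alpha|}\,e^{-\delta_0 a_0|\alpha-\gamma|}.
\end{equation*}
For this I would split $a_0|\beta-\alpha| = \delta_0 a_0|\beta-\alpha| + (1-\delta_0)a_0|\beta-\alpha|$ and apply the triangle inequality $\delta_0 a_0(|\beta-\alpha|+|\alpha-\gamma|)\ge \delta_0 a_0|\beta-\gamma|$ to extract the desired $e^{-\delta_0 a_0|\beta-\gamma|}$. The residual sum $\sum_{\alpha\in\Z^d}e^{-(1-\delta_0)a_0|\beta-\alpha|}$ is then bounded, after translating $\alpha$ to the lattice point nearest $\beta$, by a multiple of $c_{\delta_0,a_0}$, supplying the missing $(n-1)$st factor of $c_{\delta_0,a_0}$.

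The hard part of the proof is not conceptual but a careful bookkeeping of constants: one must verify that the auxiliary constants $C_{p_1,\lambda_0}$ (from the base case) and the $C_{p_2,n-1,\lambda_0}$ coming from the inductive step can be absorbed into a single constant $C_{p,n,\lambda_0}$ depending only on $p$, $n$ and $\lambda_0$, and that the powers of $C_*$, $c_{\delta_0,a_0}$, and $e^{\sqrt{d}a_0}$ accumulated over the $n-1$ inductive steps combine to exactly the stated form $(C_{p,n,\lambda_0}c_{\delta_0,a_0}C_*)^{n-1}e^{(n-1)\sqrt{d}a_0}$. In particular, one must take care with the alignment factor $e^{(1-\delta_0)a_0\sqrt{d}/2}$ that arises when translating each lattice sum to the nearest integer point, absorbing it at each step into the prefactor so that the exponent $(n-1)\sqrt{d}a_0$ is not degraded.
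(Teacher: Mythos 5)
Your proposal is correct and follows essentially the same route as the paper: insert the partition of unity $\sum_{\al\in\Z^{d}}\chi_{\al}=1$ between resolvent factors, bound each single-resolvent block $\chi_{x}(H_{\La}(A,V)-z)^{-1}\chi_{y}$ by Theorem \ref{theorem-decay-estimate}, combine via H\"older's inequality for trace ideals, and control the resulting lattice sums by the $\de_{0}$-splitting/triangle-inequality argument. The paper simply unrolls your induction, expanding $(H_{\La}(A,V)-z)^{-n}$ into an $(n-1)$-fold sum over $\Z^{d}$ and measuring every factor in $\JJ_{pn}$ (which is what your asymmetric split reduces to with the natural choice $p_{1}=np$, $p_{2}=np/(n-1)$), so the two proofs coincide up to bookkeeping of constants.
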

\begin{proof}
Write
\begin{equation*}
\chi_{\beta}(H_{\La}(A,V)-z)^{-n}\chi_{\ga}=\sum_{\substack{\al_{j}\in\Z^{d}\\j=1,\dots,n-1}}R_{\beta,\al_1}R_{\al_{1},\al_2}\cdots
R_{\al_{n-2},\al_{n-1}}R_{\al_{n-1},\ga},
\end{equation*}
where
$R_{\beta,\al_1}=\chi_{\beta}(H_{\La}(A,V)-z)^{-1}\chi_{\al_{1}}$,
$R_{\al_{j},\al_{j+1}}=\chi_{\al_{j}}(H_{\La}(A,V)-z)^{-1}\chi_{\al_{j+1}}$, $j=1,\dots,n-2$ and $R_{\al_{n-1},\ga}=\chi_{\al_{n-1}}(H_{\La}(A,V)-z)^{-1}\chi_{\ga}$. Since $pn>\frac{d}{2}$ by assumption, Theorem \ref{theorem-decay-estimate} says that
\begin{equation*}
\|\chi_{x}(H_{\La}(A,V)-z)^{-1}\chi_{y}\|_{\JJ_{pn}}\leq C_{p,n,\la_{0}}C_{*}e^{\sqrt{d}a_{0}}e^{-a_{0}|\beta-\ga|},\quad\forall x,y\in\mathbb{R}^{d},
\end{equation*}
where $C_{p,n,\la_{0}}>0$ only depends on $p$, $n$ and $\la_{0}$. By H\"{o}lder's inequality for trace ideals (see \cite[Theorem 2.8]{Si05}), the result of the corollary follows from
\begin{equation}\label{inequality-1}
\begin{split}
&\sum_{\substack{\al_{j}\in\Z^{d}\\j=1,\dots,n-1}}e^{-a_{0}|\beta-\al_{1}|}e^{-a_{0}|\al_{1}-\al_{2}|}\cdots
e^{-a_{0}|\al_{n-2}-\al_{n-1}|}e^{-a_{0}|\al_{n-1}-\beta|}\\
&\quad\quad\leq c_{\de_{0},a_{0}}^{n-1}e^{-\de_{0} a_{0}|\beta-\ga|},\quad\forall\de_{0}\in(0,1),
\end{split}
\end{equation}
where
$c_{\de_{0},a_{0}}=\sum_{\al\in\Z^{d}}e^{-(1-\de_{0})a_{0}|\al|}<\infty$.

To complete the proof, we show \eqref{inequality-1}. Pick and fix any $\de_{0}\in(0,1)$. First, we have from the triangular inequality and Cauchy's inequality
\begin{equation*}
\begin{split}
&\sum_{\al_{1}\in\Z^{d}}e^{-a_{0}|\beta-\al_{1}|}e^{-a_{0}|\al_{1}-\al_{2}|}\\
&\quad\quad=\sum_{\al_{1}\in\Z^{d}}e^{-(1-\de_{0})a_{0}|\beta-\al_{1}|}e^{-\de_{0}
a_{0}(|\beta-\al_{1}|+|\al_{1}-\al_{2}|)}e^{-(1-\de_{0})a_{0}|\al_{1}-\al_{2}|}\\
&\quad\quad\leq e^{-\de_{0}
a_{0}|\beta-\al_{2}|}\sum_{\al_{1}\in\Z^{d}}e^{-(1-\de_{0})a_{0}|\beta-\al_{1}|}e^{-(1-\de_{0})a_{0}|\al_{1}-\al_{2}|}\\
&\quad\quad\leq e^{-\de_{0}a_{0}|\beta-\al_{2}|}\bigg(\sum_{\al_{1}\in\Z^{d}}e^{-2(1-\de_{0})a_{0}|\beta-\al_{1}|}\bigg)^{\frac{1}{2}}\bigg(\sum_{\al_{1}\in\Z^{d}}e^{-2(1-\de_{0})a_{0}|\al_{1}-\al_{2}|}\bigg)^{\frac{1}{2}}\\
&\quad\quad\leq c_{\de_{0},a_{0}}e^{-\de_{0}a_{0}|\beta-\al_{2}|},
\end{split}
\end{equation*}
where
$c_{\de_{0},a_{0}}=\sum_{\al_{1}\in\Z^{d}}e^{-(1-\de_{0})a_{0}|\al_{1}|}\geq\sum_{\al_{1}\in\Z^{d}}e^{-2(1-\de_{0})a_{0}|\al_{1}|}$.
Next, by the above estimate and the triangular inequality,
\begin{equation*}
\begin{split}
&\sum_{\al_{2}\in\Z^{d}}\sum_{\al_{1}\in\Z^{d}}e^{-a_{0}|\beta-\al_{1}|}e^{-a_{0}|\al_{1}-\al_{2}|}e^{-a_{0}|\al_{2}-\al_{3}|}\\
&\quad\quad\leq
c_{\de,a_{0}}\sum_{\al_{2}\in\Z^{d}}e^{-\de a_{0}|\beta-\al_{2}|}e^{-a_{0}|\al_{2}-\al_{3}|}\\
&\quad\quad=c_{\de_{0},a_{0}}\sum_{\al_{2}\in\Z^{d}}e^{-\de_{0}
a_{0}|\beta-\al_{2}|}e^{-\de_{0}a_{0}|\al_{2}-\al_{3}|}e^{-(1-\de_{0})a_{0}|\al_{2}-\al_{3}|}\\
&\quad\quad\leq c_{\de,a_{0}}e^{-\de_{0}a_{0}|\beta-\al_{3}|}\sum_{\al_{2}\in\Z^{d}}e^{-(1-\de_{0})a_{0}|\al_{2}-\al_{3}|}\\
&\quad\quad=c_{\de_{0},a_{0}}^{2}e^{-\de_{0}a_{0}|\beta-\al_{3}|}.
\end{split}
\end{equation*}
By induction, we find \eqref{inequality-1}. This completes the proof.
\end{proof}


\section{The Operator Kernel Estimate in Trace
Ideals}\label{section-operator-kernel-estimate}

In this section, we study the operator kernel estimate in trace-class norms. More precisely, we prove polynomial decay, in trace ideals, of operators
\begin{equation*}
\chi_{\beta}f(H_{\La}(A,V))\chi_{\ga},\quad\beta,\ga\in\R^{d}
\end{equation*}
in terms of $|\beta-\ga|$ for $f\in\SS(\R)$, the Schwartz space. The main result in this section is stated in Theorem \ref{main-theorem-poly}, whose proof is based on Theorem \ref{main-theorem} (in fact, Theorem \ref{theorem-decay-estimate}) and the Helffer-Sj\"{o}strand formula (see \cite{HS89}), which is
defined for a much larger class of slowly decreasing smooth functions on $\R$, denoted by $\mathscr{A}$. See Appendix \ref{app-Helffer-Sjostrand-formula} for the definition of $\mathscr{A}$ and the Helffer-Sj\"{o}strand formula.

Before proving Theorem \ref{main-theorem-poly}, we first simplify the second estimate in \eqref{expected-estimate} by adding more conditions so that this estimate can by easily used. Our idea is as follows: by the Helffer-Sj\"{o}strand formula \eqref{the-Helffer-Sjostrand-formula}, we have for any $f\in\SS(\R)$,
\begin{equation*}
\chi_{\beta}f(H_{\La}(A,V))\chi_{\ga}=\frac{1}{\pi}\int_{\R^{2}}\frac{\pa\tilde{f}_{n}(z)}{\pa\bar{z}}\chi_{\beta}(H_{\La}(A,V)-z)^{-1}\chi_{\ga}dudv,\quad\forall\,\,n\geq1.
\end{equation*}
Therefore, by \eqref{app-an-inequality},
\begin{equation}\label{estimate-integrals}
\begin{split}
&\|\chi_{\beta}f(H_{\La}(A,V))\chi_{\ga}\|_{\JJ_{p}}\\
&\quad\quad\leq\frac{C}{\pi}\sum_{r=0}^{n}\frac{1}{r!}\int_{U}|f^{(r)}(u)|\frac{|v|^{r}}{\lan u\ran}\|\chi_{\beta}(H_{\La}(A,V)-z)^{-1}\chi_{\ga}\|_{\JJ_{p}}dudv\\
&\quad\quad\quad+\frac{1}{2\pi n!}\int_{V}|f^{(n+1)}(u)||v|^{n}\|\chi_{\beta}(H_{\La}(A,V)-z)^{-1}\chi_{\ga}\|_{\JJ_{p}}dudv
\end{split}
\end{equation}
for any $n\geq1$. Clearly, in order to estimate the integrals on the right hand side of \eqref{estimate-integrals}, we need \eqref{expected-estimate}. More precisely, we need
\begin{equation}\label{expected-estimate-2ed}
\|\chi_{\beta}(H_{\La}(A,V)-z)^{-1}\chi_{\ga}\|_{\mathcal{J}_{p}}\leq\frac{C_{p,\la_{0}}(\de-1)\la_{0}c_{z,\la_{0}}}{(\de-1)\la_{0}-2(\de\la_{0}\Xi_{1}+\Xi_{2})c_{z,\la_{0}}}e^{\sqrt{d}a_{0}}e^{-a_{0}|\beta-\ga|},
\end{equation}
since the conditions ensuring it provide a nonzero lower bound for $a_{0}$, which in turn provide an upper bound for the exponential term. However, this estimate is too rough to deal with since many parameters in the upper bound depend on $z$. To simplify it, we put more conditions on $s$ and $a_{0}$.

For $s>0$, we assume
\begin{equation}\label{condition-on-s-additional}
s<\frac{1}{2}\frac{1-\Th_{1}}{4c_{z,\la_{0}}}\frac{1-\de}{2-\de}
\end{equation}
such that
\begin{equation}\label{condition-}
\frac{2s(2\la_{0}+\Th_{2})}{\Th_{1}-1}<\frac{(\de-1)\la_{0}}{4c_{z,\la_{0}}}+\frac{2s(\de\la_{0}+\Th_{2})}{\Th_{1}-1}<\frac{(\de-1)\la_{0}}{2c_{z,\la_{0}}}+\frac{2s(\de\la_{0}+\Th_{2})}{\Th_{1}-1}.
\end{equation}
For $a_{0}>0$, we require
\begin{equation}\label{condition-on-a-additional}
\frac{2s(\la_{0}+\Th_{2})}{\Th_{1}-1}\bigg(\frac{1}{2s}+\frac{s}{4}\bigg)^{-1}\leq a_{0}^{2}\leq\frac{2s(2\la_{0}+\Th_{2})}{\Th_{1}-1}\bigg(\frac{1}{2s}+\frac{s}{4}\bigg)^{-1}.
\end{equation}
The intuitive interpretation of these conditions is that we make $2\Xi_{1}c_{z,\la_{0}}$ smaller and bound $\Xi_{1}^{-1}\Xi_{2}$ by $-2\la_{0}$ from above. Indeed, \eqref{condition-on-s-additional} is equivalent to
\begin{equation*}
2\Xi_{1}c_{z,\la_{0}}<\frac{1}{2}\frac{1-\de}{2-\de}
\end{equation*}
and, \eqref{condition-} and \eqref{condition-on-a-additional} are equivalent to
\begin{equation*}
-\la_{0}\Xi_{1}\leq\Xi_{2}\leq-2\la_{0}\Xi_{1}<\frac{(\de-1)\la_{0}}{4c_{z,\la_{0}}}-\de\la_{0}\Xi_{1}<\frac{(\de-1)\la_{0}}{2c_{z,\la_{0}}}-\de\la_{0}\Xi_{1}.
\end{equation*}
Clearly, \eqref{condition-on-s-additional} and \eqref{condition-on-a-additional} are stronger than \eqref{condition-on-a-2}. Hence, under the assumptions of \eqref{condition-on-s-additional} and \eqref{condition-on-a-additional}, \eqref{expected-estimate-2ed} holds. Moreover,
\begin{equation*}
\frac{(\de-1)\la_{0}c_{z,\la_{0}}}{(\de-1)\la_{0}-2(\de\la_{0}\Xi_{1}+\Xi_{2})c_{z,\la_{0}}}=\frac{c_{z,\la_{0}}}{1-2\Xi_{1}c_{z,\la_{0}}\frac{\de\la_{0}+\Xi_{1}^{-1}\Xi_{2}}{(\de-1)\la_{0}}}\leq2c_{z,\la_{0}}
\end{equation*}
and, therefore, \eqref{expected-estimate-2ed} can be simplified to
\begin{equation}\label{expected-estimate-simplied}
\|\chi_{\beta}(H_{\La}(A,V)-z)^{-1}\chi_{\ga}\|_{\JJ_{p}}\leq
C_{p,\la_{0}}c_{z,\la_{0}}e^{\sqrt{d}a_{0}}e^{-a_{0}|\beta-\ga|},
\end{equation}
where $C_{p,\la_{0}}>0$ depends only on $p$ and $\la_{0}$. Further, we rewrite \eqref{condition-on-a-additional} as
\begin{equation}\label{condition-on-a-additional-simplified}
\frac{2(\la_{0}+\Th_{2})}{\Th_{1}-1}\bigg(\frac{1}{2s^{2}}+\frac{1}{4}\bigg)^{-1}\leq a_{0}^{2}\leq\frac{2(2\la_{0}+\Th_{2})}{\Th_{1}-1}\bigg(\frac{1}{2s^{2}}+\frac{1}{4}\bigg)^{-1},
\end{equation}
which in particular says that $a_{0}$ is bounded from above by a constant independent of $z$, which implies that $e^{\sqrt{d}a_{0}}$ is bounded from above by a constant independent of $z$. Hence, \eqref{expected-estimate-simplied} is further simplified to
\begin{equation}\label{expected-estimate-further-simplied}
\|\chi_{\beta}(H_{\La}(A,V)-z)^{-1}\chi_{\ga}\|_{\JJ_{p}}\leq
C_{p,\la_{0}}c_{z,\la_{0}}e^{-a_{0}|\beta-\ga|},
\end{equation}
where $C_{p,\la_{0}}>0$ only depends on $p$ and $\la_{0}$.

Note that the lower bound of $a_{0}$ is not very easy to handle because of the uncertainty of $s$ and the quantity $c_{z,\la_{0}}$. To find a simpler lower bound for $a_{0}$, we first fix some $s$, say $s=\frac{1}{4}\frac{1-\Th_{1}}{4c_{z,\la_{0}}}\frac{1-\de}{2-\de}$, and then give explicit bound for $c_{z,\la_{0}}$ with $z=u+iv$ under assumptions $(u,v)\in U$ and $(u,v)\in V$, respectively. Recall
\begin{equation*}
\begin{split}
&c_{z,\la_{0}}=\bigg\|\frac{\la-\la_{0}}{\la-z}\bigg\|_{L^{\infty}(\si(H_{\La}(A,V)))},\\
&U=\big\{(u,v)\in\R^{2}|\lan u\ran<|v|<2\lan u\ran\big\}\\
\text{and}\,\,&V=\big\{(u,v)\in\R^{2}|0<|v|<2\lan u\ran\big\}.
\end{split}
\end{equation*}

\begin{lem}
Let $z\in\rho(H_{\La}(A,V))$. Then, with $z=u+iv$,
\begin{equation}\label{expected-estimate-simpliest-form}
\begin{split}
\|\chi_{\beta}(H_{\La}(A,V)-z)^{-1}\chi_{\ga}\|_{\JJ_{p}}\leq
\left\{
\begin{aligned}
C_{p,\la_{0}}|v|e^{-\frac{C_{\la_{0}}}{|v|}|\beta-\ga|},\quad\text{if}\,\,(u,v)\in U,\\
C_{p,\la_{0}}\frac{\lan u\ran}{|v|}e^{-C_{\la_{0}}\frac{|v|}{\lan u\ran}|\beta-\ga|},\quad\text{if}\,\,(u,v)\in V,
\end{aligned}
\right.
\end{split}
\end{equation}
where $C_{\la_{0}}>0$ depends only on $\la_{0}$ and $C_{p,\la_{0}}>0$ depends only on $p$ and $\la_{0}$.
\end{lem}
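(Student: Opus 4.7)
The plan is to follow the recipe sketched in the paragraph preceding the lemma: fix a specific admissible $s=s(z)$, take $a_0=a_0(z)$ to be the smallest value permitted by \eqref{condition-on-a-additional-simplified}, and then translate the already simplified bound \eqref{expected-estimate-further-simplied} into the two explicit estimates by controlling $c_{z,\la_0}$ from above and $a_0$ from below separately in each region.

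Concretely, I would set $s=\frac{1}{4}\frac{1-\Th_1}{4c_{z,\la_0}}\frac{1-\de}{2-\de}$, which satisfies \eqref{condition-on-s-additional}, and then choose
\begin{equation*}
a_0^{2}=\frac{2(\la_0+\Th_2)}{\Th_1-1}\Bigl(\frac{1}{2s^{2}}+\frac{1}{4}\Bigr)^{-1},
\end{equation*}
the left endpoint of the interval in \eqref{condition-on-a-additional-simplified} (which is admissible since $2\la_0+\Th_2<\la_0+\Th_2<0$ forces the right endpoint to be strictly larger). Plugging this into \eqref{expected-estimate-further-simplied} yields
\begin{equation*}
\|\chi_\beta(H_\La(A,V)-z)^{-1}\chi_\ga\|_{\JJ_p}\le C_{p,\la_0}\,c_{z,\la_0}\,e^{-a_0|\beta-\ga|},
\end{equation*}
and the remaining task is to read off the $z$-dependence of the two factors in each region.

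For $c_{z,\la_0}$, a one-variable maximization and the inclusion $\si(H_\La(A,V))\subset\R$ give
\begin{equation*}
c_{z,\la_0}^{2}\le\sup_{\la\in\R}\frac{(\la-\la_0)^{2}}{(\la-u)^{2}+v^{2}}=1+\frac{(u-\la_0)^{2}}{v^{2}},
\end{equation*}
attained at $\la=u+v^{2}/(u-\la_0)$. In region $U$ (where $\lan u\ran\le|v|$) this gives $c_{z,\la_0}\le C_{\la_0}$; in region $V$ (where $|u-\la_0|\lesssim\lan u\ran$) it gives $c_{z,\la_0}\le C_{\la_0}\lan u\ran/|v|$. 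For $a_0$, the key observation is that whenever $s$ is small, $(\frac{1}{2s^{2}}+\frac{1}{4})^{-1}\asymp s^{2}$, so $a_0$ is comparable to $s$, hence to $1/c_{z,\la_0}$. This produces $a_0\ge C_{\la_0}$ in $U$ and $a_0\gtrsim|v|/\lan u\ran$ in $V$ (the latter holding uniformly, since when $s$ is not small we still have $|v|/\lan u\ran\lesssim 1\lesssim a_0$).

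Combining these bounds finishes the argument: in $U$, using $|v|\ge 1$, I dominate $c_{z,\la_0}\le C_{\la_0}|v|$ and $a_0\ge C_{\la_0}\ge C_{\la_0}/|v|$, which gives the first estimate; in $V$ the second estimate is immediate from the previous displays. The main technical obstacle is the bookkeeping: one must verify that with the $z$-dependent choice of $s$ the asymptotic $(\frac{1}{2s^{2}}+\frac{1}{4})^{-1}\asymp s^{2}$ is accessible precisely where it is needed, and that the admissibility conditions \eqref{condition-on-s-additional}--\eqref{condition-on-a-additional-simplified} continue to hold uniformly in the two regions. Beyond keeping track of constants no new ideas are required, since all the heavy lifting has already been done in Theorem \ref{theorem-decay-estimate} and in the reductions preceding the lemma.
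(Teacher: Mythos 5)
Your proposal is correct and follows essentially the same route as the paper: fix $s=\frac{1}{4}\frac{1-\Th_1}{4c_{z,\la_0}}\frac{1-\de}{2-\de}$, take $a_0$ at the lower end of \eqref{condition-on-a-additional-simplified} so that $a_0\gtrsim 1/c_{z,\la_0}$, and then bound $c_{z,\la_0}$ by $C_{\la_0}|v|$ on $U$ and by $C_{\la_0}\lan u\ran/|v|$ on $V$ before inserting into \eqref{expected-estimate-further-simplied}. The only cosmetic difference is that you bound $c_{z,\la_0}$ by an exact maximization of $|\la-\la_0|/|\la-z|$ over $\la\in\R$, whereas the paper uses the cruder triangle-inequality estimate $1+|z-\la_0|/|v|$; both yield the same region-wise bounds.
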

\begin{proof}
For any $z\in\rho(H_{\La}(A,V))$, we let $s=\frac{1}{4}\frac{1-\Th_{1}}{4c_{z,\la_{0}}}\frac{1-\de}{2-\de}$ and $a_{0}>0$ satisfy \eqref{condition-on-a-additional} such that \eqref{expected-estimate-further-simplied} holds. By the first inequality in
\eqref{condition-on-a-additional-simplified} and $s=\frac{1}{4}\frac{1-\Th_{1}}{4c_{z,\la_{0}}}\frac{1-\de}{2-\de}$, we have
\begin{equation}\label{lower-bound-a-0}
\begin{split}
a_{0}^{2}\geq\frac{2(\la_{0}+\Th_{2})}{\Th_{1}-1}\bigg(\frac{1}{2s^{2}}+\frac{1}{4}\bigg)^{-1}=\frac{-(\la_{0}+\Th_{2})\tilde{C}}{C_{\la_{0}}c_{z,\la_{0}}^{2}+\bar{C}}
\end{split}
\end{equation}
for some $\bar{C}>0$ and $\tilde{C}>0$.

Let $(u,v)\in U$. For any $\la\in\si(H_{\La}(A,V))$, $|\la-z|\geq\text{dist}(z,\si(H_{\La}(A,V)))\geq|v|>\lan u\ran\geq1$, which implies that
\begin{equation*}
c_{z,\la_{0}}\leq\bigg\|1+\frac{|z-\la_{0}|}{|\la-z|}\bigg\|_{L^{\infty}(\si(H_{\La}(A,V)))}\leq1+|z|-\la_{0}\leq1-\la_{0}+\sqrt{2}|v|\leq C_{\la_{0}}|v|
\end{equation*}
and then
\begin{equation}\label{lower-bound-a-0-1}
a_{0}\geq\sqrt{\frac{-(\la_{0}+\Th_{2})\tilde{C}}{C_{\la_{0}}c_{z,\la_{0}}^{2}+\bar{C}}}\geq\frac{C_{\la_{0}}}{|v|},
\end{equation}
where the fact $|v|\geq1$ is used.

Let $(u,v)\in V$. Then
\begin{equation*}
c_{z,\la_{0}}\leq\bigg\|1+\frac{|z-\la_{0}|}{|\la-z|}\bigg\|_{L^{\infty}(\si(H_{\La}(A,V)))}\leq1+\frac{|z|-\la_{0}}{|v|}\leq\frac{5\lan u\ran-\la_{0}}{|v|}\leq C_{\la_{0}}\frac{\lan u\ran}{|v|},
\end{equation*}
which, together with \eqref{lower-bound-a-0}, implies that
\begin{equation}\label{lower-bound-a-0-2}
a_{0}\geq C_{\la_{0}}\frac{|v|}{\lan u\ran},
\end{equation}
where the fact $\frac{\lan u\ran}{|v|}>\frac{1}{2}$ is used.

By means of \eqref{expected-estimate-further-simplied}, \eqref{lower-bound-a-0-1} and \eqref{lower-bound-a-0-2}, we find \eqref{expected-estimate-simpliest-form}.
\end{proof}

We now restate and prove Theorem \ref{main-theorem-poly}.

\begin{thm}\label{theorem-poly-decay-of-operator-kernel}
Let $A\in\HH_{loc}(\R^{d})$, $V\in\KK_{\pm}(\R^{d})$ and $\La\subset\R^{d}$ open. Suppose $p>\frac{d}{2}$. Then, for any $f\in\SS(\R)$ and any $k\in\N$,
\begin{equation*}
\|\chi_{\beta}f(H_{\La}(A,V))\chi_{\ga}\|_{\JJ_{p}}\leq C_{p,\la_{0},k,f}|\beta-\ga|^{-k},\quad\forall\,\,\beta,\ga\in\R^{d},
\end{equation*}
where $C_{p,\la_{0},k,f}>0$ depends only on $p$, $\la_{0}$, $k$ and $f$.
\end{thm}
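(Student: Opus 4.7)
The plan is to combine the Helffer--Sj\"ostrand formula with the simplified Combes--Thomas estimate \eqref{expected-estimate-simpliest-form} that has just been set up. Since $\SS(\R)\subset\mathscr{A}$, for any $n\ge 1$ (to be chosen depending on $k$) the formula yields
\begin{equation*}
\chi_{\beta}f(H_{\La}(A,V))\chi_{\ga}=\frac{1}{\pi}\int_{\R^{2}}\frac{\pa\tilde{f}_{n}(z)}{\pa\bar{z}}\,\chi_{\beta}(H_{\La}(A,V)-z)^{-1}\chi_{\ga}\,dudv,
\end{equation*}
with $z=u+iv$. Taking the $\JJ_{p}$-norm under the integral and applying the pointwise bound on $\pa\tilde{f}_n/\pa\bar z$ recorded in \eqref{app-an-inequality} gives exactly the two-region estimate \eqref{estimate-integrals}, a sum of $f^{(r)}$-weighted integrals over $U=\{\lan u\ran<|v|<2\lan u\ran\}$ for $r=0,\dots,n$ and an $f^{(n+1)}$-weighted integral over $V=\{0<|v|<2\lan u\ran\}$.

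Next I would plug \eqref{expected-estimate-simpliest-form} into each of these integrals. On $U$, the resolvent norm is at most $C_{p,\la_0}|v|\exp(-C_{\la_0}|\beta-\ga|/|v|)$; on $V$, it is at most $C_{p,\la_0}(\lan u\ran/|v|)\exp(-C_{\la_0}|v||\beta-\ga|/\lan u\ran)$. To convert exponentials into powers of $|\beta-\ga|^{-1}$, I apply the elementary bound $e^{-s}\le k!\,s^{-k}$ for $s>0$ to the arguments $s=C_{\la_0}|\beta-\ga|/|v|$ on $U$ and $s=C_{\la_0}|v||\beta-\ga|/\lan u\ran$ on $V$. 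This extracts a global factor $|\beta-\ga|^{-k}$ and leaves behind $|v|^k$ on $U$ and $(\lan u\ran/|v|)^k$ on $V$, modulo constants depending only on $\la_0$ and $k$.

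What remains is to verify that the resulting $(u,v)$-integrals converge. On $U$, where $|v|\asymp\lan u\ran$, after absorbing all $|v|^r$ and the new $|v|^k$, the integrand is dominated by $|f^{(r)}(u)|\lan u\ran^{r+k+1}$, which is integrable in $u$ because $f\in\SS(\R)$. On $V$ the integrand is bounded by $|f^{(n+1)}(u)|\lan u\ran^{k+1}|v|^{n-1-k}$; the $v$-integral over $(0,2\lan u\ran)$ is finite provided $n-1-k>-1$, i.e.\ $n\ge k+1$, which fixes our choice of $n$, and the $u$-integral then converges by Schwartz decay of $f^{(n+1)}$. Combining the two contributions gives the desired bound with a constant $C_{p,\la_0,k,f}$.

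The main obstacle is bookkeeping the interaction between the parameter $n$ in the Helffer--Sj\"ostrand formula and the target polynomial order $k$: $n$ must be chosen large enough (namely $n\ge k+1$) so that the $v$-integral on $V$ is integrable near $v=0$ after we trade the exponential for $(\lan u\ran/|v|)^k$, while on $U$ the near-real region $|v|\asymp\lan u\ran\ge 1$ causes no such difficulty. A minor point is that \eqref{expected-estimate-simpliest-form} is only invoked at non-real $z$, which lies in $\rho(H_\La(A,V))$ automatically; and for $|\beta-\ga|$ in a bounded range the claim is trivial by Theorem \ref{theorem-trace-ideal-estimate} applied to $f$ itself, so without loss of generality $|\beta-\ga|$ may be assumed large enough that the $s^{-k}$ conversion is legitimate.
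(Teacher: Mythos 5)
Your proposal is correct and follows essentially the same route as the paper: Helffer--Sj\"ostrand with $n=k+1$, the pointwise bound \eqref{app-an-inequality} leading to \eqref{estimate-integrals}, the simplified estimate \eqref{expected-estimate-simpliest-form} on the regions $U$ and $V$, and the elementary trade of the exponential for $t^{-k}$ (the paper uses $e^{-t}\leq e^{-k}k^{k}t^{-k}$, you use $e^{-t}\leq k!\,t^{-k}$, which is equivalent for this purpose), with the same convergence bookkeeping via Schwartz decay of $f$. The closing remark about restricting to large $|\beta-\ga|$ is unnecessary since the inequality holds for all $t>0$, but it is harmless.
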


\begin{proof}
Fix any $k\in\N$ and let $n=k+1$ in \eqref{estimate-integrals}. Since the function $\tha(t)=e^{-t}t^{k}$, $t\geq0$ attains its global maximum at $t=k$, we have
\begin{equation}\label{an-algebraic-inequaltiy}
e^{-t}\leq e^{-k}k^{k}t^{-k}.
\end{equation}
Applying \eqref{an-algebraic-inequaltiy} to $t=\frac{C_{\la_{0}}}{|v|}|\beta-\ga|$ and
$t=C_{\la_{0}}\frac{|v|}{\lan u\ran}|\beta-\ga|$, respectively, we obtain
\begin{equation}\label{an-algebraic-inequaltiy-1}
e^{-\frac{C_{\la_{0}}}{|v|}|\beta-\ga|}\leq\frac{e^{-k}k^{k}}{C_{\la_{0}}^{k}|\beta-\ga|^{k}}|v|^{k}
\end{equation}
and
\begin{equation}\label{an-algebraic-inequaltiy-2}
e^{-C_{\la_{0}}\frac{|v|}{\lan
u\ran}|\beta-\ga|}\leq\frac{e^{-k}k^{k}}{C_{\la_{0}}^{k}|\beta-\ga|^{k}}\frac{\lan u\ran^{k}}{|v|^{k}},
\end{equation}
respectively.

We now use \eqref{an-algebraic-inequaltiy-1} and \eqref{an-algebraic-inequaltiy-2} to estimate the integrals in \eqref{estimate-integrals}. By the first estimate in \eqref{expected-estimate-simpliest-form} and \eqref{an-algebraic-inequaltiy-1}, we have for some $C_{p,\la_{0},k,f}>0$,
\begin{equation*}
\begin{split}
&\sum_{r=0}^{k+1}\frac{1}{r!}\int_{U}|f^{(r)}(u)|\frac{|v|^{r}}{\lan
u\ran}\|\chi_{\beta}(H_{\La}(A,V)-z)^{-1}\chi_{\ga}\|_{\JJ_{p}}dudv\\
&\quad\quad\leq\frac{C_{p,\la_{0}}e^{-k}k^{k}}{C_{\la_{0}}^{k}|\beta-\ga|^{k}}\sum_{r=0}^{k+1}\frac{1}{r!}\int_{U}|f^{(r)}(u)|\frac{|v|^{r+k+1}}{\lan
u\ran}dudv\\
&\quad\quad=\frac{C_{p,\la_{0}}e^{-k}k^{k}}{C_{\la_{0}}^{k}|\beta-\ga|^{k}}\sum_{r=0}^{k+1}\frac{1}{r!}\frac{2^{r+k+3}-2}{r+k+2}\int_{\R}|f^{(r)}(u)|\lan u\ran^{r+k+1}du\\
&\quad\quad\leq C_{p,\la_{0},k,f}|\beta-\ga|^{-k},
\end{split}
\end{equation*}
where the fact $f\in\mathcal{S}(\mathbb{R})$, so the integrals are convergent, is used.

Similarly, by the second estimate in \eqref{expected-estimate-simpliest-form} and \eqref{an-algebraic-inequaltiy-2},
\begin{equation*}
\begin{split}
&\frac{1}{2\pi
(k+1)!}\int_{V}|f^{(n+1)}(u)||v|^{n}\|\chi_{\beta}(H_{\La}(A,V)-z)^{-1}\chi_{\ga}\|_{\JJ_{p}}dudv\\
&\quad\quad\leq\frac{C_{p,\la_{0}}e^{-k}k^{k}}{2\pi(k+1)!C_{\la_{0}}^{k}|\beta-\ga|^{k}}\int_{V}|f^{(k+2)}(u)|\lan u\ran^{k+1}dudv\\
&\quad\quad=\frac{4C_{p,\la_{0}}e^{-k}k^{k}}{2\pi(k+1)!C_{\la_{0}}^{k}|\beta-\ga|^{k}}\int_{\R}|f^{(k+2)}(u)|\lan u\ran^{k+2}du\\
&\quad\quad\leq C_{p,\la_{0},k,f}|\beta-\ga|^{-k}.
\end{split}
\end{equation*}

Consequently, for any $f\in\SS(\R)$, there exists $C_{p,\la_{0},k,f}>0$ so that
\begin{equation*}
\|\chi_{\beta}f(H_{\La}(A,V))\chi_{\ga}\|_{\JJ_{p}}\leq
C_{p,\la_{0},k,f}|\beta-\ga|^{-k},\quad\forall\,\,\beta,\ga\in\R^{d}.
\end{equation*}
This proves Theorem \ref{theorem-poly-decay-of-operator-kernel}.
\end{proof}

\section*{Acknowledgments} The author would like to thank Prof. Wenxian Shen for her support during the work of this paper.


\appendix

\section{Sectorial Form and $m$-Sectorial Operator}\label{app-sectorial-form}

In this section, we review some results about sectorial form and $m$-sectorial operator used in the above sections. The material is chosen from \cite{K76}. Also see \cite{EE87}.

Let $\HH$ be a separable Hilbert space and $h(\cdot,\cdot):\HH\times\HH\ra\C$ be a sesquilinear form. It is called \emph{sectorial} if there exist $\ga\in\R$ and $\tha\in[0,\frac{\pi}{2})$ so that
\begin{equation*}
h(u,u)\in\{z\in\C||\arg(z-\ga)|\leq\tha\}\,\,\text{for any}\,\, u\in\QQ(h)\,\,\text{with}\,\,\|u\|=1,
\end{equation*}
where $\QQ(h)$ is the form domain of $h$. In particular, any symmetric sesquilinear form bounded from below is sectorial. For relatively bounded perturbation, we have (see \cite[Theorem
VI.1.33]{K76})
\begin{thm}\label{relatively-bounded-pert}
Let $h$ be a sectorial form and $h'$ be $h$-bounded, i.e., $\QQ(h)\subset\QQ(h')$ and there exist nonnegative constants $a$ and $b$ such that
\begin{equation*}
|h'(u,u)|\leq ah(u,u)+b\|u\|^{2}\,\,\text{for any}\,\,u\in\QQ(h).
\end{equation*}
If $a<1$, then $h+h'$ is sectorial. $h+h'$ is closable or closed if and only if $h$ is closable or closed, respectively.
\end{thm}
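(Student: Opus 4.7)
The plan is to verify the two assertions by a geometric estimate on the numerical range combined with form-norm equivalence. First I would translate so that the sector of $h$ has its vertex at the origin, replacing $h$ by $\tilde h = h - \ga\|\cdot\|^{2}$ and absorbing the constant into $b$. For $u \in \QQ(h)$ with $\|u\| = 1$, the sectorial condition becomes $\Re\tilde h(u,u) \geq 0$ and $|\Im\tilde h(u,u)| \leq (\tan\tha)\Re\tilde h(u,u)$, and (reading the perturbation bound in the standard Kato form) $|h'(u,u)| \leq a\Re\tilde h(u,u) + b\|u\|^{2}$ with $a < 1$.

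For the sectoriality of $h + h'$, I would estimate directly: with $w = (\tilde h + h')(u,u)$,
\begin{equation*}
\Re w \geq (1-a)\Re\tilde h(u,u) - b, \qquad |\Im w| \leq (\tan\tha + a)\Re\tilde h(u,u) + b.
\end{equation*}
Shifting $w$ further left by a large constant (i.e., adding a multiple of $\|u\|^{2}$ to the form) makes the real part strictly positive, and the remaining ratio $|\Im|/\Re$ is bounded by $(\tan\tha + a)/(1-a) < \infty$. Since $a < 1$ and $\tan\tha < \infty$, this yields a new half-angle $\tha' < \pi/2$, proving sectoriality.

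For closability and closedness, the two-sided estimate
\begin{equation*}
(1-a)\Re\tilde h(u,u) - b\|u\|^{2} \leq \Re(\tilde h + h')(u,u) \leq (1+a)\Re\tilde h(u,u) + b\|u\|^{2}
\end{equation*}
shows that for a sufficiently large $c$ the form norms $\bigl(\Re\tilde h(u,u) + c\|u\|^{2}\bigr)^{1/2}$ and $\bigl(\Re(\tilde h + h')(u,u) + c\|u\|^{2}\bigr)^{1/2}$ are equivalent on the common form domain $\QQ(h)=\QQ(h')\cap\QQ(h)$. Completeness under the form norm (i.e., closedness) and the zero-sequence criterion (i.e., closability) thus transfer simultaneously in both directions, giving the equivalences asserted.

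The main obstacle will be the correct interpretation and use of the perturbation bound: the naive reading $|h'(u,u)| \leq a|\tilde h(u,u)| + b\|u\|^{2}$ would only yield a new half-angle less than $\pi/2$ under the stronger condition $a\sec\tha < 1$. The standard way around this is to take the bound relative to $\Re\tilde h$ rather than $|\tilde h|$; verifying that the hypothesis in the stated form implies this stronger-looking bound (or is to be read in that sense) is the key technical point on which the $\tha' < \pi/2$ conclusion rests.
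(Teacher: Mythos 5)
The paper offers no proof of this statement: it is quoted (in a slightly informal form) from \cite[Theorem VI.1.33]{K76} and used as a black box in Lemma \ref{lemma-sectorial}, so there is nothing internal to compare your argument against. That said, your argument is the standard proof of Kato's theorem and it is correct. After normalizing the vertex to the origin, the estimates $\Re w\geq(1-a)\Re\tilde h(u,u)-b$ and $|\Im w|\leq(\tan\tha+a)\Re\tilde h(u,u)+b$ do yield, after a further real shift, a sector of half-angle $\arctan\bigl(\frac{\tan\tha+a}{1-a}\bigr)<\frac{\pi}{2}$; and the two-sided comparison of $\Re(\tilde h+h')$ with $\Re\tilde h$ makes the form norms equivalent on $\QQ(h)$, which transfers completeness (closedness) and the null-sequence criterion (closability) in both directions, using the perturbation bound once more to see that $h[u_{n}]\to0$ forces $h'[u_{n}]\to0$ along such sequences.

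Your closing caveat is not merely a presentational point; it is essential, and your resolution of it is the right one. As the theorem is stated here, $ah(u,u)$ is not even real for a nonsymmetric sectorial $h$, and under the reading $|h'(u,u)|\leq a|h(u,u)|+b\|u\|^{2}$ the conclusion genuinely fails once $\cos\tha<a<1$: take $h=e^{i\tha}q$ and $h'=-aq$ for an unbounded nonnegative closed form $q$, so that $|h'(u,u)|=a|h(u,u)|$, while $h+h'=(e^{i\tha}-a)q$ has numerical range on a ray pointing into the open left half-plane and hence is not sectorial. The hypothesis must therefore be understood relative to $\Re h$ after the vertex shift, which is exactly the situation in which the paper invokes the theorem in Lemma \ref{lemma-sectorial}: there $h=h_{\La}^{A,V}$ is symmetric and bounded below, so $h(u,u)$ is real and, up to absorbing a constant into $b$, coincides with $\Re\tilde h(u,u)$. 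With that reading fixed, your proof is complete in all essentials.
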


Let $H:\HH\ra\HH$ be a linear operator with domain $\mathscr{D}(H)$. $H$ is said to be \textit{accretive} if $\Re\lan u,Hu\ran\geq0$ for all $u\in\mathscr{D}(H)$. It is
said to be \textit{$m$-accretive} if for any $z\in\mathbb{C}$ with $\Re z>0$, there hold
\begin{equation*}
(H+z)^{-1}\in\mathscr{L}(\HH)\quad\text{and}\quad\|(H+z)^{-1}\|\leq\frac{1}{\Re z},
\end{equation*}
where $\mathscr{L}(\HH)$ denotes the space of all bounded linear operators on $\HH$. It's not hard to see that $m$-accretive operator is maximal accretive in the sense that it is accretive and has no proper accretive extension. If there are $\ga\in\R$ and $\tha\in[0,\frac{\pi}{2})$ so that
\begin{equation*}
\lan
u,Hu\rangle\in\{z\in\C||\arg(z-\ga)|\leq\tha\}\,\,\text{for any}\,\,u\in\mathscr{D}(H)\,\,\text{with}\,\,\|u\|=1,
\end{equation*}
then $H$ is said to be \textit{sectorial}. $H$ is \textit{$m$-sectorial} if it is both $m$-accretive and sectorial.

If $H$ is sectorial, then the sesquilinear form $h(\cdot,\cdot)$ on $\QQ(h)=\mathscr{D}(H)$ defined by
\begin{equation*}
h(u,v)=\lan u,Hv\ran,\quad u,v\in\QQ(h)
\end{equation*}
is sectorial and closable (see \cite[Theorem VI.1.27]{K76}). In particular, any symmetric operator bounded from below defines a closable sectorial form. Conversely, we have (see \cite[Theorem VI.2.1,Theorem V.2.6]{K76})

\begin{thm}\label{representation-theorem}
Let $h(\cdot,\cdot)$ be a densely defined and closed sectorial form in $\HH$ with form domain $\QQ(h)$. Then there exists a unique $m$-sectorial operator $H$ such that $\mathscr{D}(H)\subset\QQ(h)$ and
\begin{equation*}
h(u,v)=\lan
u,Hv\ran\,\,\text{for}\,\,u\in\QQ(h)\,\,\text{and}\,\,v\in\mathscr{D}(H).
\end{equation*}
If, in addition, $h(\cdot,\cdot)$ is symmetric and bounded from below, then the associated $m$-sectorial operator $H$ is self-adjoint with the same lower bounded.
\end{thm}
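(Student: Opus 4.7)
The plan is to follow the classical Kato approach via Lax--Milgram on a suitable Hilbert space built from $h$ itself. First I would reduce to strictly accretive forms: since $h$ is sectorial with some vertex $\ga\in\R$ and semi-angle $\tha\in[0,\frac{\pi}{2})$, the shifted form $\tilde h = h - (\ga - 1)$ satisfies $\Re\tilde h(u,u)\geq\|u\|^{2}$, is closed on the same form domain, and proving the theorem for $\tilde h$ yields it for $h$ after the translation $H = \tilde H + \ga - 1$. So I assume $\Re h(u,u)\geq\|u\|^{2}$ and $|\Im h(u,u)|\leq(\tan\tha)\Re h(u,u)$ for all $u\in\QQ(h)$. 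Then $\lan u,v\ran_{h}:=\Re h(u,v)$ defines an inner product on $\QQ(h)$, and closedness of $h$ is precisely the statement that $(\QQ(h),\lan\cdot,\cdot\ran_{h})$ is a Hilbert space continuously embedded in $\HH$.

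The analytic heart of the argument is the polarization-type bound
\begin{equation*}
|h(u,v)|\leq(1+\tan\tha)\|u\|_{h}\|v\|_{h},\quad u,v\in\QQ(h),
\end{equation*}
which I would derive by writing $4h(u,v)=\sum_{k=0}^{3}i^{k}h(u+i^{k}v,u+i^{k}v)$ and using the diagonal estimate $|h(w,w)|\leq(1+\tan\tha)\Re h(w,w)=(1+\tan\tha)\|w\|_{h}^{2}$ that follows immediately from the sector condition. Thus $h$ is a bounded sesquilinear form on $(\QQ(h),\lan\cdot,\cdot\ran_{h})$, and it is coercive because $\Re h(u,u)=\|u\|_{h}^{2}$. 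For any $f\in\HH$ the antilinear functional $w\mapsto\lan f,w\ran$ is continuous on this Hilbert space (since $\|w\|\leq\|w\|_{h}$), so the Lax--Milgram theorem delivers a unique $u_{f}\in\QQ(h)$ with $h(u_{f},w)=\lan f,w\ran$ for every $w\in\QQ(h)$. Define $Tf=u_{f}$; this is a bounded linear operator on $\HH$ with $\|T\|\leq 1$, and it is injective because $\QQ(h)$ is dense in $\HH$.

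Setting $\mathscr{D}(H):=\mathrm{Range}(T)$ and $H:=T^{-1}$, the defining identity for $T$ gives $h(u,v)=\lan u,Hv\ran$ for all $u\in\QQ(h)$ and $v\in\mathscr{D}(H)$, and in particular $\mathscr{D}(H)\subset\QQ(h)$. Using this representation one checks that the numerical range of $H$ lies in the same sector as $h$ and (after undoing the shift) $\|(H+z)^{-1}\|\leq(\Re z)^{-1}$ for $\Re z>0$, so $H$ is $m$-sectorial. Uniqueness follows because any other $m$-sectorial $H'$ with the stated property would have $(H'+1)^{-1}$ solve the same Lax--Milgram problem, forcing $H'=H$. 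For the symmetric case, the representation identity makes $H$ symmetric on $\mathscr{D}(H)$, and a symmetric $m$-accretive operator is automatically self-adjoint (its resolvent set contains a half-plane, so the range of $H\pm i$ is all of $\HH$, killing the deficiency); the lower bound of $h$ passes to $H$ via $\lan u,Hu\ran = h(u,u)$.

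The main obstacle is the sectorial polarization estimate that upgrades the pointwise sector bound on $h(w,w)$ to full boundedness of $h$ on $(\QQ(h),\|\cdot\|_{h})$; without it, $h$ is not a valid Lax--Milgram datum. Everything else is a routine consequence of Lax--Milgram, closedness of $h$, and the density of $\QQ(h)$ in $\HH$.
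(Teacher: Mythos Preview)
The paper does not supply its own proof of this statement; it is quoted verbatim from Kato \cite[Theorem~VI.2.1, Theorem~VI.2.6]{K76} as background material in Appendix~\ref{app-sectorial-form}. Your outline is exactly the classical Kato argument via Lax--Milgram, so there is nothing to compare against --- you have reproduced the proof the paper is citing.

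Two small points of care, neither fatal. First, $\langle u,v\rangle_{h}:=\Re h(u,v)$ is not sesquilinear over $\C$ (test $u\mapsto iu$), so it is not a complex inner product as written; what you want is the Hermitian part $\tfrac{1}{2}\big(h(u,v)+\overline{h(v,u)}\big)$, which is sesquilinear, Hermitian, and agrees with $\Re h$ on the diagonal. Second, with the paper's convention that $\langle\cdot,\cdot\rangle$ is conjugate-linear in the first slot, solving $h(u_{f},w)=\langle f,w\rangle$ places the unknown in the first argument and produces the operator associated to the adjoint form $h^{*}(u,v)=\overline{h(v,u)}$ rather than to $h$; to obtain $h(u,v)=\langle u,Hv\rangle$ you should solve $h(w,v_{f})=\langle w,f\rangle$ for $v_{f}$. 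Both are routine adjustments; the strategy is correct.
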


The second part of the above theorem is well-known and widely used in the theory of Schr\"{o}dinger operator. We also used the following result (see \cite[Lemma VI.3.1]{K76}).

\begin{thm}\label{representation-theorem-1}
Let $h(\cdot,\cdot)$ be a densely defined, symmetric, nonnegative closed form with the associated nonnegative self-adjoint operator $H$. Let $q(\cdot,\cdot)$ be a form relatively bounded with respect
to $h$ so that
\begin{equation*}
|q(u,u)|\leq Ch(u,u),\quad u\in\QQ(h)
\end{equation*}
for some $C\geq0$. Then there is $B\in\mathscr{L}(\HH)$ with $\|B\|\leq\epsilon C$ such that
\begin{equation*}
q(u,v)=\lan\sqrt{H}u,B\sqrt{H}v\ran,\quad u,v\in\QQ(h)=\mathscr{D}(\sqrt{H}),
\end{equation*}
where $\ep=1$ or $2$ according as $q$ is symmetric or not.
\end{thm}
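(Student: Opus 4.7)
The plan is to transfer the quadratic estimate $|q(u,u)|\le Ch(u,u)=C\|\sqrt{H}u\|^{2}$ to a bounded sesquilinear form on the closed subspace $\overline{\mathrm{Ran}(\sqrt{H})}\subset\HH$ and then invoke the Riesz representation theorem for bounded sesquilinear forms. Since $H$ is nonnegative self-adjoint, $\HH=\ker(\sqrt{H})\oplus\overline{\mathrm{Ran}(\sqrt{H})}$, and the identification $\QQ(h)=\mathscr{D}(\sqrt{H})$ with $h(u,v)=\lan\sqrt{H}u,\sqrt{H}v\ran$ rephrases the hypothesis as a diagonal bound living on $\mathrm{Ran}(\sqrt{H})$, which is exactly the setting required to represent $q$ in the form $\lan\sqrt{H}\cdot,\tilde{B}\sqrt{H}\cdot\ran$.

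First I would show that $\ker(\sqrt{H})$ is harmless. For $w\in\ker(\sqrt{H})$ and $v\in\QQ(h)$, the hypothesis applied to $w+tv$ with $t\in\C$ gives $|q(w,w)+tq(w,v)+\bar{t}q(v,w)+|t|^{2}q(v,v)|\le C|t|^{2}\|\sqrt{H}v\|^{2}$. Setting $t=0$ kills $q(w,w)$; dividing by $|t|$, letting $|t|\downarrow 0$, and varying the phase of $t$ then forces $q(w,v)=q(v,w)=0$. Hence $q(u,v)$ depends only on $\sqrt{H}u$ and $\sqrt{H}v$, so the assignment $\tilde{q}(\sqrt{H}u,\sqrt{H}v):=q(u,v)$ is well defined on $\mathrm{Ran}(\sqrt{H})\times\mathrm{Ran}(\sqrt{H})$.

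Next I would split $q=q_{s}+q_{a}$ into its symmetric part $q_{s}(u,v):=\tfrac{1}{2}(q(u,v)+\overline{q(v,u)})$ and antisymmetric part $q_{a}:=q-q_{s}$. Both $q_{s}$ and $-iq_{a}$ are symmetric sesquilinear forms, with diagonal values $\mathrm{Re}\,q(u,u)$ and $\mathrm{Im}\,q(u,u)$ respectively, so each inherits the diagonal bound by $Ch(u,u)$. Writing $\tilde{q}_{s},\tilde{q}_{a}$ for the transferred forms on $\mathrm{Ran}(\sqrt{H})$, the expressions $C\lan\cdot,\cdot\ran\pm\tilde{q}_{s}$ and $C\lan\cdot,\cdot\ran\pm(-i\tilde{q}_{a})$ are nonnegative symmetric sesquilinear forms, to which Cauchy--Schwarz applies to give an off-diagonal bound and hence continuity of $\tilde{q}_{s}$ and $\tilde{q}_{a}$ on $\mathrm{Ran}(\sqrt{H})$. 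By continuity they extend uniquely to bounded sesquilinear forms on the Hilbert space $\overline{\mathrm{Ran}(\sqrt{H})}$.

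Riesz representation then produces bounded operators $\tilde{B}_{s}$ and $\tilde{B}_{a}$ on $\overline{\mathrm{Ran}(\sqrt{H})}$ implementing $\tilde{q}_{s},\tilde{q}_{a}$, with $\tilde{B}_{s}$ and $-i\tilde{B}_{a}$ self-adjoint. Setting $B:=(\tilde{B}_{s}+\tilde{B}_{a})P$, where $P$ is the orthogonal projection of $\HH$ onto $\overline{\mathrm{Ran}(\sqrt{H})}$, and noting that $\sqrt{H}u,\sqrt{H}v\in\mathrm{Ran}(\sqrt{H})$, one obtains $q(u,v)=\lan\sqrt{H}u,B\sqrt{H}v\ran$ on $\QQ(h)$. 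The main obstacle is the \emph{sharp} constant: the off-diagonal Cauchy--Schwarz step used to establish continuity only yields $\|\tilde{B}_{s}\|,\|\tilde{B}_{a}\|\le 2C$, which is not sharp. The correct values $\ep=1$ (when $q$ is symmetric, so $\tilde{B}_{a}=0$) and $\ep=2$ (general $q$) are recovered by re-examining the norms \emph{after} the operators exist, via the self-adjoint identity $\|T\|=\sup_{\|x\|=1}|\lan x,Tx\ran|$: this gives $\|\tilde{B}_{s}\|=\sup|\tilde{q}_{s}(x,x)|\le C$ and analogously $\|\tilde{B}_{a}\|=\|{-i\tilde{B}_{a}}\|\le C$, so that $\|B\|\le C$ in the symmetric case and $\|B\|\le 2C$ in general.
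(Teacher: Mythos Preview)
Your argument is correct. The paper does not prove this statement itself; it is quoted from Kato \cite[Lemma~VI.3.1]{K76} as a background fact in Appendix~\ref{app-sectorial-form}, so there is no ``paper's own proof'' to compare against. Your approach---killing the contribution of $\ker(\sqrt{H})$, transferring $q$ to a bounded form on $\overline{\mathrm{Ran}(\sqrt{H})}$, splitting into symmetric and skew parts, and recovering the sharp constants $\epsilon=1,2$ from the self-adjoint identity $\|T\|=\sup_{\|x\|=1}|\langle x,Tx\rangle|$---is exactly the standard argument (and is essentially Kato's). One small point worth making explicit: in Step~7 you apply the self-adjoint norm formula on the Hilbert space $\overline{\mathrm{Ran}(\sqrt{H})}$, and the diagonal bound $|\tilde q_{s}(x,x)|\le C\|x\|^{2}$ is a~priori known only on the dense subspace $\mathrm{Ran}(\sqrt{H})$; it extends to the closure by the continuity established in Step~4, so the conclusion $\|\tilde B_{s}\|\le C$ (and similarly for $\tilde B_{a}$) is justified.
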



\section{The Helffer-Sj\"{o}strand Formula}\label{app-Helffer-Sjostrand-formula}

In this section, we define the class of slowly decreasing smooth functions and review the Helffer-Sj\"{o}strand formula (see \cite{HS89}), which provides an alternative approach to the spectral
theory of self-adjoint operators. The material below is taken from \cite{Da95}.

\begin{defn}\label{def-slowly-decreasing-smooth-fun}
A function $f$ is said to be in $\mathscr{A}$, the class of slowly decreasing smooth functions on $\R$, if $f\in C^{\infty}(\R)$ and there exit $\mu>0$ and a sequence of constants $c_{n}\geq0$, $n\geq1$ so that
\begin{equation*}
|f^{(n)}(u)|\leq c_{n}\lan u\ran^{-n-\mu},\quad\forall\,\,u\in\R,\,\,\forall\,\,n\geq1,
\end{equation*}
where $\lan u\ran\equiv\sqrt{1+|u|^{2}}$. We define the norms on $\mathscr{A}$: for $f\in\mathscr{A}$,
\begin{equation*}
\interleave f\interleave_{n}=\sum_{r=0}^{n}\int_{\R}|f^{(r)}(u)|\lan u\ran^{r-1}dx,\quad n\geq1.
\end{equation*}
\end{defn}

Let $\tau\in C^{\infty}(\R)$ with $\tau(u)=1$ if $|u|<1$ and $\tau(u)=0$ if $|u|>2$. For $f\in\mathscr{A}$, the smooth (non analytic) extensions $\tilde{f}_{n}:\C\ra\C$
of $f$ are defined by
\begin{equation*}
\tilde{f}_{n}(z)=\bigg\{\frac{1}{r!}\sum_{r=0}^{n}f^{(r)}(u)(iv)\bigg\}\si(u,v),\quad n\geq1,
\end{equation*}
where $z=u+iv$ and $\si(u,v)=\tau\big(\frac{v}{\lan u\ran}\big)$. Define
$\frac{\pa\tilde{f}_{n}(z)}{\pa\bar{z}}=\frac{1}{2}\big\{\frac{\pa\tilde{f}_{n}(z)}{\pa
u}+i\frac{\pa\tilde{f}_{n}(z)}{\pa v}\big\}$. Direct calculation shows
\begin{equation*}
\frac{\pa\tilde{f}_{n}(z)}{\pa\bar{z}}=\frac{1}{2}\bigg\{\sum_{r=0}^{n}\frac{1}{r!}f^{(r)}(u)(iv)^{r}\bigg\}\big(\sigma_{u}(u,v)+\sigma_{v}(u,v)\big)+\frac{1}{2n!}f^{(n+1)}(u)(iv)^{n}\sigma(u,v).
\end{equation*}
Obviously, $\si(u,v)=0$ if $|v|\geq2\lan u\ran$ and both $\si_{u}(u,v)=0$ and $\si_{v}(u,v)=0$ if $|v|\leq\lan u\ran$ or $|v|\geq2\lan u\ran$. Thus, by introducing the sets $U=\{(u,v)\in\R^{2}|\lan u\ran<|v|<2\lan u\ran\}$ and $V=\{(u,v)\in\R^{2}|0<|v|<2\lan u\ran\}$, we have
\begin{equation}\label{app-an-inequality}
\bigg|\frac{\pa\tilde{f}_{n}(z)}{\pa\bar{z}}\bigg|\leq
C\bigg\{\sum_{r=0}^{n}\frac{1}{r!}|f^{(r)}(u)|\frac{|v|^{r}}{\lan
u\ran}\bigg\}\chi_{U}(u,v)+\frac{1}{2n!}|f^{(n+1)}(u)||v|^{n}\chi_{V}(u,v)
\end{equation}
for some $C>0$ only depending on $\tau$, where $\chi_{U}$ and $\chi_{V}$ are characteristic functions for $U$ and $V$, respectively.

\begin{thm}[\cite{Da95}]\label{theorem-helffer-sjostrand}
Let $f\in\mathscr{A}$ and $H$ be a self-adjoint on a separable Hilbert space. Then the integral
\begin{equation*}
\int_{\R^{2}}\frac{\pa\tilde{f}_{n}(z)}{\pa\bar{z}}(H-z)^{-1}dudv
\end{equation*}
converges in operator norm and is independent of $n$ and $\tau$. Moreover,
\begin{equation*}
\bigg\|\int_{\R^{2}}\frac{\pa\tilde{f}_{n}(z)}{\pa\bar{z}}(H-z)^{-1}dudv\bigg\|\leq
c\interleave f\interleave_{n+1},\quad\forall\,\,n\geq1,
\end{equation*}
where $c>0$ is a constant independent of $f$ and $n$.
\end{thm}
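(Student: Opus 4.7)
The plan is to establish the three claims in order: (i) convergence of the integral in operator norm together with the bound $\leq c\,\interleave f\interleave_{n+1}$; (ii) identification of the integral with $\pi f(H)$, where $f(H)$ is defined via the functional calculus for the self-adjoint operator $H$; and (iii) independence of $n$ and $\tau$, which becomes a free corollary of (ii).

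For (i) I would start from the universal resolvent bound $\|(H-z)^{-1}\|\leq|v|^{-1}$ valid for any self-adjoint $H$ with $z=u+iv$, $v\neq 0$. Combined with the pointwise estimate \eqref{app-an-inequality}, this dominates the operator-norm density of the integrand by
$$C\sum_{r=0}^{n}\frac{|f^{(r)}(u)|\,|v|^{r-1}}{r!\,\lan u\ran}\chi_{U}(u,v)+\frac{|f^{(n+1)}(u)|\,|v|^{n-1}}{2\,n!}\chi_{V}(u,v).$$
On $U$, where $\lan u\ran<|v|<2\lan u\ran$, a direct $v$-integration converts $|v|^{r-1}/\lan u\ran$ into $C_{r}\lan u\ran^{r-1}$, and the coefficients $C_{r}/r!$ are summable independently of $n$; this contributes a bound of the form $M_{1}\,\interleave f\interleave_{n}$. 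On $V$, where $|v|<2\lan u\ran$, the $v$-integration of $|v|^{n-1}$ yields $2^{n+1}\lan u\ran^{n}/n$, and the prefactor $1/(2\,n!)$ tames this to $M_{2}\int|f^{(n+1)}(u)|\lan u\ran^{n}\,du$ with $M_{2}$ bounded in $n$. Adding the two contributions gives the bound $c\,\interleave f\interleave_{n+1}$ with $c$ depending only on $\tau$; integrability in operator norm forces Bochner-integrability of the integrand, and hence convergence.

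For (ii) I would sandwich the integral between arbitrary $\phi,\psi$ in the Hilbert space and invoke Fubini (justified by the integrability from (i)) together with the spectral theorem, reducing the operator identity $I_{n,\tau}:=\int_{\R^{2}}\tfrac{\partial\tilde f_{n}}{\partial\bar z}(H-z)^{-1}\,du\,dv=\pi f(H)$ to the scalar identity
$$\int_{\R^{2}}\frac{\partial\tilde f_{n}/\partial\bar z}{\lambda-z}\,du\,dv=\pi f(\lambda),\quad\lambda\in\R.$$
Since $\sigma(u,0)=\tau(0)=1$ and $(iv)^{r}|_{v=0}=0$ for $r\geq 1$, we have $\tilde f_{n}(\lambda)=f(\lambda)$, so this is exactly the Cauchy-Pompeiu formula applied to $\tilde f_{n}$. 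To justify it despite the non-compact support of $\tilde f_{n}$ in the $u$-direction, I would apply Stokes' theorem on $\{|u|<R\}\setminus\{|z-\lambda|<\ep\}$, let $\ep\da 0$ to extract $\pi f(\lambda)$ from the small-circle contribution, and then let $R\ra\infty$. The boundary term at $|u|=R$ is controlled by the uniform decay $|\tilde f_{n}(z)|\leq C_{n}\lan u\ran^{-\mu}$ valid on $\supp\sigma\subset\{|v|\leq 2\lan u\ran\}$ (a direct consequence of $|f^{(r)}(u)|\leq c_{r}\lan u\ran^{-r-\mu}$), which together with $|z-\lambda|\gtrsim R$ makes the boundary integral $O(R^{-\mu})\ra 0$. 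Once this is done, claim (iii) is immediate because $\pi f(H)$ depends on neither $n$ nor $\tau$. The main obstacle I expect is the Cauchy-Pompeiu justification, since the $v$-support of $\tilde f_{n}$ widens linearly in $\lan u\ran$; the saving grace is the uniform $\lan u\ran^{-\mu}$ decay built into the definition of $\mathscr{A}$, which is precisely what forces the boundary integrals to vanish.
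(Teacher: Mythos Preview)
The paper does not supply its own proof of this theorem: it is stated in Appendix~B as a quoted result from Davies~\cite{Da95}, with the additional remark (immediately following the statement) that the $n$-independence of $c$ is due to Germinet and Klein~\cite{GK03} and follows from $2^{n}/n!\to 0$. There is therefore nothing in the paper to compare your argument against beyond these attributions.

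That said, your proposal is the standard argument (essentially the one in Davies): dominate the integrand in norm via $\|(H-z)^{-1}\|\leq |v|^{-1}$ and the pointwise bound \eqref{app-an-inequality}, integrate the $U$- and $V$-pieces in $v$ to recover the $\interleave f\interleave_{n+1}$ norm, and then identify the integral with $\pi f(H)$ by the spectral theorem and Cauchy--Pompeiu/Stokes, from which the $n$- and $\tau$-independence is immediate. Your observation that the $V$-piece carries the factor $2^{n}/n!$ (hence $M_{2}$ stays bounded in $n$) is exactly the Germinet--Klein point the paper singles out. The only place to be a bit more careful is the $r=0$ term on $U$: the $v$-integral of $|v|^{-1}$ over $\lan u\ran<|v|<2\lan u\ran$ gives $2\ln 2$ rather than a power of $\lan u\ran$, but this still contributes $\int|f(u)|\lan u\ran^{-1}du$, which is precisely the $r=0$ summand of $\interleave f\interleave_{n+1}$, so nothing changes.
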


It should be pointed out the fact that the constant $c$ is independent of $n$ is due to Germinet and Klein \cite{GK03}. This follows from the fact that $\frac{2^{n}}{n!}\ra0$ as $n\ra\infty$.

We then define for $f\in\mathscr{A}$
\begin{equation}\label{the-Helffer-Sjostrand-formula}
f(H)=\frac{1}{\pi}\int_{\R^{2}}\frac{\pa\tilde{f}_{n}(z)}{\pa\bar{z}}(H-z)^{-1}dudv,
\end{equation}
which is referred to as the Helffer-Sj\"{o}strand formula. By Theorem \ref{theorem-helffer-sjostrand}, $\|f(H)\|\leq c\interleave f\interleave_{n+1}$ for all $n\geq1$.


\begin{thebibliography}{99}
\bibitem{Ai94} M. Aizenman, Localization at weak disorder: some elementary bounds. \textit{Rev. Math. Phys.} 6 (1994), no. 5A, 1163-1182.

\bibitem{AS82} M. Aizenman and B. Simon, Brownian motion and Harnack inequality for Schr\"{o}inger operators. \textit{Comm. Pure Appl. Math.} 35 (1982), no. 2, 209-273.

\bibitem{BCH97} J. M. Barbaroux, J. M. Combes and P. D. Hislop, Localization near band edges for random Schr\"{o}dinger operators. \textit{Helv. Phys. Acta} 70 (1997), no. 1-2, 16-43.

\bibitem{BGK04} J. Bouclet, F. Germinet and A. Klein, Sub-exponential decay of operator kernels for functions of generalized Schr\"{o}dinger operators. \textit{Proc. Amer. Math. Soc.} 132 (2004), no. 9, 2703-2712.

\bibitem{BK05} J. Bourgain and C. Kenig, On localization in the continuous Anderson-Bernoulli model in higher dimension. \textit{Invent. Math.} 161 (2005), no. 2, 389-426.

\bibitem{BHL00} K. Broderix, D. Hundertmark, and H. Leschke, Continuity properties of Schr\"{o}dinger semigroups with magnetic fields. \textit{Rev. Math. Phys.} 12 (2000), no. 2, 181-225.

\bibitem{BNSS06} A. Boutet de Monvel, S. Naboko, P. Stollmann and G. Stolz, Localization near fluctuation boundaries via fractional moments and applications. \textit{J. Anal. Math.} 100 (2006), 83-116.

\bibitem{CZ95} K. Chung and Z. Zhao, \textit{From Brownian motion to Schr\"{o}dinger's equation.} Grundlehren der Mathematischen Wissenschaften, 312. Springer-Verlag, Berlin, 1995.

\bibitem{CH94} J. M. Combes and P. D. Hislop, Localization for some continuous, random Hamiltonians in $d$-dimensions. \textit{J. Funct. Anal.} 124 (1994), no. 1, 149-180.

\bibitem{CHK07} J .M. Combes, P. D. Hislop and F. Klopp, An optimal Wegner estimate and its application to the global continuity of the integrated density of states for random Schr\"{o}dinger operators. \textit{Duke Math. J.} 140 (2007), no. 3, 469-498.

\bibitem{CT73} J. M. Combes and L. Thomas, Asymptotic behaviour of eigenfunctions for multiparticle Schr\"{o}dinger operators. \textit{Comm. Math. Phys.} 34 (1973), 251-270.

\bibitem{CFKS87} H. L. Cycon, R. G. Froese, W. Kirsch and B. Simon, \textit{Schr\"{o}dinger operators with application to quantum mechanics and global geometry.} Texts and Monographs in Physics. Springer Study Edition. Springer-Verlag, Berlin, 1987.

\bibitem{Da95} E. B. Davies, \textit{Spectral theory and differential operators.} Cambridge Studies in Advanced Mathematics, 42. Cambridge University Press, Cambridge, 1995.

\bibitem{EE87} D. E. Edmunds and W. D. Evans, \textit{Spectral theory and differential operators.} Oxford Mathematical Monographs. Oxford Science Publications. The Clarendon Press, Oxford University Press, New York, 1987.

\bibitem{FK96} A. Figotin and A. Klein, Localization of classical waves. I. Acoustic waves. \textit{Comm. Math. Phys.} 180 (1996), no. 2, 439-482.

\bibitem{FK97} A. Figotin and A. Klein, Localization of classical waves. II. Electromagnetic waves. \textit{Comm. Math. Phys.} 184 (1997), no. 2, 411-441.

\bibitem{FS83} J. Fr\"{o}hlich and T. Spencer, Absence of diffusion in the Anderson tight binding model for large disorder or low energy. \textit{Comm. Math. Phys.} 88 (1983), no. 2, 151-184.

\bibitem{Fo99} G. B. Folland, \textit{Real analysis. Modern techniques and their applications.} Second edition. Pure and Applied Mathematics (New York). A Wiley-Interscience Publication. John Wiley \& Sons, Inc., New York, 1999.

\bibitem{GK01} F. Germinet and A. Klein, Bootstrap multiscale analysis and localization in random media. \textit{Comm. Math. Phys.} 222 (2001), no. 2, 415-448.

\bibitem{GK03} F. Germinet and A. Klein, Operator kernel estimates for functions of generalized Schr\"{o}dinger operators. \textit{Proc. Amer. Math. Soc.} 131 (2003), no. 3, 911-920 (electronic).

\bibitem{GK04} F. Germinet and A. Klein, A characterization of the Anderson metal-insulator transport transition. \textit{Duke Math. J.} 124 (2004), no. 2, 309-350.

\bibitem{GP90} A. Galindo and P. Pascual, \textit{Quantum mechanics. I.} Translated from the Spanish by J. D. Garc\'{i}a and L. Alvarez-Gaum\'{e}. Texts and Monographs in Physics. Springer-Verlag, Berlin, 1990.

\bibitem{GP91} A. Galindo and P. Pascual, \textit{Quantum mechanics. II.} Translated from the Spanish by J. D. Garc\'{i}a and L. Alvarez-Gaum\'{e}. Texts and Monographs in Physics. Springer-Verlag, Berlin, 1991.

\bibitem{HS89} B. Helffer and J. Sj\"{o}strand, \textit{\'{E}quation de Schr\"{o}dinger avec champ magn\'{e}tique et \'{e}quation de Harper.} (French) [The Schr\"{o}dinger equation with magnetic field, and the Harper equation] Schr\"{o}dinger operators (S{\o}nderborg, 1988), 118�C197, Lecture Notes in Phys., 345, Springer, Berlin, 1989.

\bibitem{K76} T. Kato, \textit{Perturbation theory for linear operators.} Second edition. Grundlehren der Mathematischen Wissenschaften, Band 132. Springer-Verlag, Berlin-New York, 1976.

\bibitem{Kir89} W. Kirsch, \textit{Random Schr\"{o}dinger operators. A course.} Schr\"{o}dinger operators (S{\o}derborg, 1988), 264-370, Lecture Notes in Phys., 345, Springer, Berlin, 1989.

\bibitem{Kir08} W. Kirsch, \textit{An invitation to random Schr\"{o}dinger operators.} With an appendix by Fr\'{e}d\'{e}ric Klopp. Panor. Synth\`{e}ses, 25, Random Schr\"{o}dinger operators, 1�C119, Soc. Math. France, Paris, 2008.

\bibitem{Ki10} Y. Kitagaki, Wegner estimates for some random operators with Anderson-type surface potentials. \textit{Math. Phys. Anal. Geom.} 13 (2010), no. 1, 47-67.

\bibitem{K08} A. Klein, \textit{Multiscale analysis and localization of random operators.} Random Schr\"{o}dinger operators, 121-159, Panor. Synth\`{e}es, 25, Soc. Math. France, Paris, 2008.

\bibitem{KK01} A. Klein and A. Koines, A general framework for localization of classical waves. I. Inhomogeneous media and defect eigenmodes. \textit{Math. Phys. Anal. Geom.} 4 (2001), no. 2, 97-130.

\bibitem{KK04} A. Klein and  A. Koines, A general framework for localization of classical waves. II. Random media. \textit{Math. Phys. Anal. Geom.} 7 (2004), no. 2, 151-185.

\bibitem{Kl95} F. Klopp, An asymptotic expansion for the density of states of a random Schr\"{o}dinger operator with Bernoulli disorder. \textit{Random Oper. Stochastic Equations} 3 (1995), no. 4, 315-331.

\bibitem{Lif65} I. M. Lifshitz, Energy spectrum structure and quantum states of disordered condensed systems. \textit{Uspehi Fiz. Nauk} 83 617--663 (Russian); translated as Soviet Physics Uspekhi 7 1965 549-573.

\bibitem{Pa73} L. A. Pastur, Spectra of random selfadjoint operators. \textit{Russian Math. Surveys} 28 (1973), no. 1, 1-67.

\bibitem{RS80} M. Reed and B. Simon, \textit{Methods of modern mathematical physics. I. Functional analysis.} Second edition. Academic Press, Inc. , New York, 1980.

\bibitem{RS75} M. Reed and B. Simon, \textit{Methods of modern mathematical physics. II. Fourier analysis, self-adjointness.} Academic Press, New York-London, 1975.

\bibitem{Si82} B. Simon, Schr\"{o}dinger semigroups. \textit{Bull. Amer. Math. Soc. (N.S.)} 7 (1982), no. 3, 447-526.

\bibitem{Si05} B. Simon, \textit{Trace ideals and their applications.} Second edition. Mathematical Surveys and Monographs, 120. American Mathematical Society, Providence, RI, 2005.

\bibitem{Si05A} B. Simon, \textit{Functional integration and quantum physics.} Second edition. AMS Chelsea Publishing, Providence, RI, 2005.

\bibitem{St01} P. Stollmann, \textit{Caught by disorder. Bound states in random media.} Progress in Mathematical Physics, 20. Birkh\"{a}user Boston, Inc., Boston, MA, 2001.

\bibitem{Sz98} A.-S. Sznitman, \textit{Brownian motion, obstacles and random media.} Springer Monographs in Mathematics. Springer-Verlag, Berlin, 1998.

\bibitem{Th02} W. Thirring, \textit{Quantum mathematical physics. Atoms, molecules and large systems.} Second edition. Translated from the 1979 and 1980 German originals by Evans M. Harrell II. Springer-Verlag, Berlin, 2002.

\bibitem{Tr67} F. Tr\`{e}ves, \textit{Topological vector spaces, distributions and kernels.} Academic Press, New York-London, 1967.

\bibitem{We81} F. Wegner, Bounds on the density of states in disordered systems. \textit{Z. Phys. B} 44 (1981), no. 1-2, 9-15, 82A42.
\end{thebibliography}
\end{document}